\newcommand\email[2][]%
   {\newaffiltrue\let\AB@blk@and\AB@pand
      \if\relax#1\relax\def\AB@note{\AB@thenote}\else\def\AB@note{\relax}%
        \setcounter{Maxaffil}{0}\fi
      \begingroup
        \let\protect\@unexpandable@protect
        \def\thanks{\protect\thanks}\def\footnote{\protect\footnote}%
        \@temptokena=\expandafter{\AB@authors}%
        {\def\\{\protect\\\protect\Affilfont}\xdef\AB@temp{#2}}%
         \xdef\AB@authors{\the\@temptokena\AB@las\AB@au@str
         \protect\\[\affilsep]\protect\Affilfont\AB@temp}%
         \gdef\AB@las{}\gdef\AB@au@str{}%
        {\def\\{, \ignorespaces}\xdef\AB@temp{#2}}%
        \@temptokena=\expandafter{\AB@affillist}%
        \xdef\AB@affillist{\the\@temptokena \AB@affilsep
          \AB@affilnote{}\protect\Affilfont\AB@temp}%
      \endgroup
       \let\AB@affilsep\AB@affilsepx
}
\pgfplotsset{compat=1.18}
\newcommand\numberthis{\addtocounter{equation}{1}\tag{\theequation}}
\renewcommand\Affilfont{\normalsize}
\newtheorem{theorem}{Theorem}[section]
\newtheorem{lemma}{Lemma}[section]
\theoremstyle{definition}
\newtheorem{definition}{Definition}[section]
\newtheorem{remark}{Remark}[section]
\let\oldnl\nl
\newcommand{\nonl}{\renewcommand{\nl}{\let\nl\oldnl}}
\renewcommand{\vec}[1]{\mathbf{#1}}
\DeclarePairedDelimiter{\ceil}{\lceil}{\rceil}
\DeclarePairedDelimiter{\floor}{\lfloor}{\rfloor}
\DeclareMathOperator*{\argmax}{arg\,max}
\newcommand{\set}[1]{\ensuremath{\{#1\}}}
\newcommand{\sset}[2]{\ensuremath{\{#1 \mid #2\}}}
\newenvironment{mechanism}[1][htb]{%
   \begin{algorithm}[#1]%
  }{\end{algorithm}}
\let\OLDthebibliography\thebibliography
\renewcommand\thebibliography[1]{
  \OLDthebibliography{#1}
  \setlength{\parskip}{2pt}
  \setlength{\itemsep}{4pt plus 1pt}
}
\DeclareMathOperator{\OPTF}{OPT_F}
\newcommand{\V}{\ensuremath{v}}
\DeclareMathOperator{\OPTFk}{OPT_F^\mathit{k}}
\DeclareMathOperator{\OPTIk}{OPT_I^\mathit{k}}
\DeclareMathOperator{\OPTFn}{OPT_F^\mathit{n}}
\newcommand{\MultiMechanism}{\ensuremath{\textsc{Sort-\&-Reject$_{\alpha}^k$}}\xspace}
\newcommand{\BestInMechanism}{\ensuremath{\textsc{Greedy-Best-In$^k$}}\xspace}
\newcommand{\discretization}{\ensuremath{\textsc{Chunk-\&-Solve}}\xspace}
\newcommand{\paam}{\ensuremath{\textsc{Prune-\&-Assign}}\xspace}
\newcommand{\pruning}{\ensuremath{\textsc{Pruning}}\xspace}
\newtheorem{fact}{Fact}[section]
\begin{document}
\title{Partial Allocations in Budget-Feasible Mechanism Design: Bridging Multiple Levels of Service and Divisible Agents}

\author[1,3]{Georgios Amanatidis}
\author[2,4]{Sophie Klumper}
\author[1,3,5]{Evangelos Markakis}
\author[2,4]{\\Guido Sch\"afer}
\author[2]{Artem Tsikiridis}

\affil[1]{Athens University of Economics and Business, Greece}
\affil[2]{Centrum Wiskunde \& Informatica (CWI), The Netherlands}
\affil[3]{Archimedes, Athena Research Center, Greece}
\affil[4]{University of Amsterdam, The Netherlands}
\affil[5]{Input Output (IOG), UK}

\date{}

\maketitle              
\begin{abstract}
\noindent Budget-feasible procurement has been a major paradigm in mechanism design since its introduction by \citet{singer10}. An auctioneer (buyer) with a strict budget constraint is interested in buying goods or services from a group of strategic agents (sellers). In many scenarios it makes sense to allow the auctioneer to only partially buy what an agent offers, e.g., an agent might have multiple copies of an item to sell, they might offer multiple levels of a service, or they may be available to perform a task for any fraction of a specified time interval. Nevertheless, the focus of the related literature has been on settings where each agent's services are either fully acquired or not at all. A reason for this is that in settings with partial allocations like the ones mentioned, there are strong inapproximability results (see, e.g., \citet{chan14,anari18}). 
Under the mild assumption of being able to afford each agent entirely, we are able to circumvent such results. We design a polynomial-time, deterministic, truthful, budget-feasible, $(2+\sqrt{3})$-approximation mechanism for the setting where each agent offers multiple levels of service and the auctioneer has 
a valuation function which is separable concave, i.e., it is the sum of concave functions.
We then use this result to design a deterministic, truthful and budget-feasible $O(1)$-approximation mechanism for the setting where any fraction of a service can be acquired, again for  separable concave objectives. 
For the special case where the objective is the sum of linear valuation functions, we improve the best known approximation ratio for the problem from $(3+\sqrt{5})/2$ (by \citet{klumper22}) to $2$. This establishes a separation between this setting and its indivisible counterpart.
\end{abstract}

\section{Introduction}
Consider a procurement auction, where the agents have \emph{private} costs on the services that they can offer, and the auctioneer associates a value for each possible set of selected agents. This forms a single parameter auction environment, where the agents may strategically misreport their cost to their advantage for obtaining higher payments. Imagine now that the auctioneer additionally has a strict budget constraint that they cannot violate. Under these considerations, a natural goal for the auctioneer is to come up with a truthful mechanism for hiring a subset of the agents, that maximizes their procured value and such that the total payments to the agents respect the budget limitations. 
This is precisely the model that was originally proposed by Singer \cite{singer10} for \emph{indivisible} agents, i.e., with a binary decision to be made for each agent (hired or not). Given also that even the non-strategic version of such budget-constrained problems tend to be NP-hard, the main focus is on providing budget-feasible mechanisms that achieve approximation guarantees on the auctioneer's optimal potential value.

Since the work of \citet{singer10}, a large body of works has emerged, devoted to obtaining improved results on the original model, as well as to proposing a number of extensions. These extensions include, among others, additional feasibility constraints, richer objectives, more general valuation functions and additional assumptions, such as Bayesian modeling. Undoubtedly, all these results have significantly enhanced our understanding for the indivisible scenario. In this paper, we move away from the case of indivisible agents and concentrate on two settings that have received much less attention in the literature. In both of the models that we study, instead of hiring agents entirely or not at all, the auctioneer has more flexibility and is allowed to partially procure the services offered by each agent. We assume that the auctioneer's valuation function is the sum of individual valuation functions, each associated with a particular agent.\smallskip

 \noindent\textbf{Agents with Multiple Levels of Service:} 
In this setting, which directly generalizes Singer's setting \citep{singer10}, each agent offers a service that consists of multiple levels. We can think of the levels as corresponding to different qualities of service. Hence, the auctioneer can choose not to hire an agent, or hire the first $x$ number of levels of an agent, for some integer $x$, or hire the agent entirely, i.e., for all the levels that she is offering. Furthermore, the valuation function associated with each agent is concave, meaning that the marginal value of each level of service is non-increasing. 
This setting  was first introduced by \citet{chan14} in the context of each agent offering multiple copies of the same good and each additional copy having a smaller marginal value. In their work it was assumed that the cost of a single level is arbitrary, meaning that it is plausible that the auctioneer can only afford to hire a single level of service of a single agent. \citet{chan14} proposed randomized, truthful, and budget-feasible mechanisms for this setting, with approximation guarantees that depend on the total number of levels/copies. The crucial difference with our setting is that we assume that the auctioneer's budget is big enough to afford any single individual agent entirely, which is in line with the indivisible setting where the auctioneer can afford to hire any single agent.

 \smallskip

 \noindent\textbf{Divisible Agents:} 
A related setting is the one in which agents are offering a divisible service, e.g., offering their time. In this case, it is reasonable to assume that the auctioneer can hire each agent for any fraction of the service that they are offering. Again, the valuation function associated with each agent is assumed to be concave, meaning that the marginal gain is non-increasing in the fraction of the acquired service. 
Note that this problem is the fractional relaxation of the problem introduced by \citet{singer10}, when it is assumed that the auctioneer can afford to hire any individual divisible agent entirely. \citet{anari18} were the first to study the divisible setting. In their work they employed a \emph{large market} assumption, which, in the context of budget-feasible mechanism design, roughly means that the cost of each agent for their entire service is insignificant compared to the budget of the auctioneer. Additionally, they noticed that in the divisible setting, no truthful mechanism with a finite approximation guarantee exists without any restriction on the costs. Very recently, \citet{klumper22} revisited this problem without the large market assumption but under the much milder assumption that the auctioneer can afford to hire any individual agent entirely (which is standard in the literature for the indivisible setting, but here it does restrict the bidding space). They presented a deterministic, truthful and budget-feasible mechanism that achieves an approximation ratio of $ \frac{3+\sqrt{5}}{2}\approx 2.62$ for linear valuation functions and extended it to the setting in which all agents are associated with the same concave valuation function. 
 \smallskip

The two aforementioned settings of procurement auctions have a number of practical applications in various domains. As previously mentioned, the divisible setting would, for example, be useful to model the time availability of a worker in the context of crowdsourcing. Moreover, these types of auctions can also be applied to other industries, such as transportation and logistics, where the delivery of goods and services can be broken down into multiple levels of service. For instance, in the transportation industry, the first level of service can represent the basic delivery service, while the higher levels can represent more premium and specialized services, such as express delivery or temperature-controlled shipping. The auctioneer can then choose to hire each agent up to an available level of service, not necessarily the best offered, based on the budget constraint and the value of the services provided.

\subsection{Our Contributions}
In this work, we propose deterministic, truthful and budget-feasible mechanisms for settings with partial allocations. Specifically, 
\begin{itemize}[labelindent=\parindent,leftmargin=*,itemsep=3pt,topsep=3pt]
 \item We present a mechanism, \MultiMechanism (Mechanism \ref{mechanism:multi}), with an approximation ratio of $2+\sqrt{3} \approx 3.73$ for the indivisible agent setting with multiple levels of service and concave valuation functions (Section \ref{subsec:main_mechanism}, Theorem \ref{thm:multi-mechanism}).
 The main idea behind our novel mechanism is to apply a backwards greedy approach, in which we start from an optimal fractional solution and we discard single levels of service one by one, until a carefully chosen stopping condition is met. For this setting, no constant-factor approximation mechanism was previously known.  Note that the work of \citet{chen11} implies that no deterministic mechanism can achieve an approximation ratio better than $1 + \sqrt{2} \approx 2.41$ for this setting when $k = 1$.
 We further explore how this guarantee improves as one moves towards a \emph{large market} (Section \ref{subsec:large-market}, Theorem \ref{thm:large-market-bound}).

\item We use \MultiMechanism as a subroutine in order to design a mechanism for the setting with divisible agents, \discretization (Mechanism \ref{algo:interpolation}), that achieves an approximation ratio of $4+2\sqrt{3} \approx 7.46$ for non-decreasing concave valuation functions (Section \ref{subsec:interpolation}, Theorem \ref{thm:interpolation-thm}).
This is the first result for the problem that is independent of the number of agents $n$ and the derivative of the valuation functions close to $0$, and is a significant improvement over the conference version of this work \citep{AKMST23}.
Note that \citet{anari18} have shown that no mechanism can achieve an approximation ratio better than $e/(e-1) \approx 1.58$ for this setting.
On a technical level, we exploit the correspondence between the discrete and the continuous settings; as the number of levels of service grows arbitrarily large, the former converges to the latter. This behaviour, however, does not imply a polynomial reduction. For a modest number of levels, one could say for each discrete valuation function that it either approximates well its continuous counterpart, or that its value is concentrated in its first level.

 \item We improve on the state-of-the-art for the divisible setting with linear valuation functions \citep{klumper22} by suggesting a deterministic $2$-approxi\-mation mechanism, \paam (Mechanism \ref{mechanism:posted-price}; see Section \ref{subsec:2apx}, Theorem \ref{thm:prune-and-assign}). Note that the $e/(e-1)$ lower bound of \citet{anari18} holds even in this case, as the instance they construct has a linear valuation function. By proving this result, we establish a separation of the divisible agents model and its indivisible counterpart, for which a lower bound of $1+\sqrt{2}$ is known due to \citet{chen11}.
 Our mechanism is inspired by the \emph{randomized} $2$-approximation mechanism proposed by \citet{gravin20} for the indivisible setting. 
\end{itemize}
As we mentioned above, all our results are under the mild assumption that we can afford each agent entirely. For the setting with divisible agents this is necessary in order to achieve any non-trivial factor \citep{anari18}, and it was also assumed by \citet{klumper22}. 
Even for the discrete setting with multiple levels of service this assumption circumvents a strong lower bound of \citet{chan14} which is linear to the maximum number of levels offered by any agent
(see also Remark \ref{remark:chan_lb} and Appendix \ref{app:unrestricted_k_levels}). In both settings our assumptions are much weaker than the large market assumptions often made in the literature (see, e.g., \cite{anari18,jalaly18}). 

\subsection{Further Related Work}
The design of truthful budget-feasible mechanisms for indivisible agents was introduced by \citet{singer10}, who gave a deterministic mechanism for additive valuation functions with an approximation guarantee of $5$, along with a lower bound of $2$ for deterministic mechanisms. This guarantee was subsequently improved to $2+\sqrt{2} \approx 3.41$ by \citet{chen11}, who also provided a lower bound of $2$ for randomized mechanisms and a lower bound of $1+\sqrt{2} \approx 2.41$ for deterministic mechanisms. \citet{gravin20} gave a deterministic $3$-approximation mechanism, which is the best known guarantee for deterministic mechanisms to this day, along with a lower bound of $3$ when the guarantee is with respect to the optimal non-strategic fractional solution. Regarding randomized mechanisms, \citet{gravin20} settled the question by providing a randomized $2$-approximation mechanism, matching the lower bound of \citet{chen11}. Finally, the question has also been settled under the large market assumption by \citet{anari18}, who extended their $\frac{e}{e-1} \approx 1.58$ mechanism for the setting with divisible agents to the indivisible setting. As mentioned earlier, \citet{klumper22} studied the divisible setting without the large market assumption, but under the assumption that the private cost of each agent is bounded by the budget and gave, among other results, a deterministic $\frac{3+\sqrt{5}}{2}$-approximation mechanism for linear valuation functions. 

For indivisible agents, the problem has also been extended to richer valuation functions. This line of inquiry also started by \citet{singer10}, who gave a randomized algorithm with an approximation guarantee of $112$ for a monotone submodular objective. Once again, this result was improved by \citet{chen11} to a $7.91$ guarantee, and the same authors devised a deterministic mechanism with a $8.34$ approximation. Subsequently, the bound for randomized mechanisms was improved by Jalaly and Tardos \cite{jalaly18} to $5$. More recently, \citet{balkanski22} proposed a new method of designing mechanisms that goes beyond the sealed-bid auction paradigm. Instead, \citet{balkanski22} presented mechanisms in the form of deterministic clock auctions and, for the monotone submodular case, presented a $4.75$-approximation mechanism. Very recently, these guarantees were further improved by \citet{trieagle}, who devised a randomized and a deterministic mechanism with approximation ratios $4.3$ and $4.45$, respectively, under the same paradigm.

Beyond monotone submodular valuation functions, it becomes significantly harder to obtain truthful mechanisms with small constants as approximation guarantees. Namely, for non-monotone submodular objectives, the first randomized mechanism that runs in polynomial time was due to \citet{amanatidis19} with an approximation guarantee of $505$. This guarantee was improved to $64$ by \citet{balkanski22} who provided the first deterministic mechanism for the problem and \citet{huang23} who gave a further improvement of $(3+\sqrt{5})^2$ for randomized mechanisms. The state-of-art randomized mechanism for non-monotone submodular valuation functions is due to \citet{trieagle}, achieving an approximation ratio of $12$. 
Richer valuation functions, such as XOS valuation functions (see \citet{bei17, amanatidis17, neogi24}) and subadditive valuation functions (see \citet{dobzinski11, bei17, balkanski22, neogi24}), have been extensively studied. Until very recently, no explicit construction of a mechanism achieving a constant approximation for subadditive valuation functions was known. However, \citet{bei17} had shown that such a mechanism should exist assuming a \emph{demand query oracle}, using a non-constructive argument. Interestingly, \citet{neogi24} recently constructed such a mechanism, marking a major breakthrough.

Other settings that have been studied include environments with underlying feasibility constraints, such as downward-closed environments (\citet{amanatidis16, huang23}) and matroid constraints (\citet{leonardi17}). Other environments in which the auctioneer wants to get a set of heterogeneous tasks done and each task requires that the hired agent has a certain skill, have been studied as well, see \citet{goel14, jalaly18}. Recently, \citet{li22} studied facility location problems under the lens of budget-feasibility, in which facilities have private facility-opening costs. Finally, the problem has been studied in a beyond worst-case analysis setting by \citet{rubinstein23}.

For further details, we refer the interested reader to the very recent survey of \citet{liu24}.

\section{Model and Preliminaries}

We first define the standard budget-feasible mechanism design model below which constitutes the basis of the more general models considered in this paper. The multiple levels of service model is introduced in Section~\ref{subsec:multi-def} and the divisible agent model in Section~\ref{subsec:divisible-def}.

\subsection{Basic Model} 

We consider a procurement auction consisting of a set of agents $N = \{1, \dots, n\}$ and an auctioneer who has an available budget $B \in \mathbb{R}_{> 0}$. Each agent $i \in N$ offers a service and has a private cost parameter $c_i \in \mathbb{R}_{> 0}$, representing their true cost for providing this service in full. The auctioneer derives some value $v_i \in \mathbb{R}_{\ge 0}$ from the service of agent $i$ which is assumed to be public information.

A deterministic mechanism $\mathcal{M}$ in this setting consists of an allocation rule $\vec{x}: \mathbb{R}_{\geq 0}^n \to \mathbb{R}_{\geq 0}^n$ and a payment rule $\vec{p}: \mathbb{R}_{\geq 0}^n \to \mathbb{R}_{\geq 0}^n$. To begin with, the auctioneer collects a profile $\vec{b} = (b_i)_{i \in N} \in \mathbb{R}_{\ge 0}^n$ of declared costs from the agents. Here, $b_i$ denotes the cost declared by agent $i \in N$, which may differ from their true cost $c_i$. Given the declarations, the auctioneer determines an allocation (hiring scheme) $\vec{x}(\vec{b})=(x_1(\vec{b}),\dots, x_n(\vec{b}))$, where $x_i(\vec{b}) \in \mathbb{R}_{\geq 0}$ is the allocation decision for agent $i$, i.e., to what extent agent $i$ is hired. 
Generally, we distinguish between the \emph{divisible} and \emph{indivisible} agent setting by means of the corresponding allocation rule. In the divisible setting, each agent $i$ can be allocated fractionally, i.e., $x_i(\vec{b}) \in \mathbb{R}_{\geq 0}$. In the indivisible setting, each agent $i$ can only be allocated integrally, i.e., $x_i(\vec{b}) \in \mathbb{N}_{\ge 0}$. 
Given a (possibly fractional) allocation $\vec{x}$, we define $W(\vec{x})=\{i \in N \mid x_{i} > 0\}$ as the set of agents who are positively allocated under $\vec{x}$. 
The auctioneer also determines a vector of payments $\vec{p}(\vec{b})=(p_1(\vec{b}),\dots, p_n(\vec{b}))$, where $p_i(\vec{b})$ is the payment agent $i$ will receive for their service.

We assume that agents have quasi-linear utilities, i.e., for a deterministic mechanism $\mathcal{M}=(\vec{x}, \vec{p})$, the utility of agent $i \in N$ for a profile $\vec{b}$ is $u_i(\vec{b})= p_i(\vec{b})-c_i \cdot x_i(\vec{b})$. We are interested in mechanisms that satisfy three properties for any true profile $\vec{c}$ and any declared profile $\vec{b}$:

\begin{itemize}[labelindent=\parindent,leftmargin=*,itemsep=3pt,topsep=3pt]
\item \emph{Individual rationality:} Each agent $i \in N$ receives non-negative utility, i.e., $u_i(\vec{b}) \geq 0$.
\item \emph{Budget-feasibility:} The sum of all payments made by the auctioneer does not exceed the budget, i.e., $\sum_{i \in N}p_i(\vec{b}) \le B$.
\item \emph{Truthfulness:} Each agent $i \in N$ does not have and incentive to misreport their true cost, regardless of the declarations of the other agents, i.e., $u_i(c_i, \vec{b}_{-i}) \geq u_i(\vec{b})$ for any $b_i$ and $\vec{b}_{-i}$.
\end{itemize}

Given an allocation $\vec{x}$, the total value that the auctioneer obtains is denoted by $\V(\vec{x})$. The exact form of this function depends on the respective model we are studying and will be defined in the subsections below. 

All the models that are studied in this paper are single-parameter settings and so the characterization of \citet{myerson81} applies.\footnote{\ We refer the reader to \cite{apt22} for a rigorous treatment of the uniqueness property of Myerson's characterization result.} It is therefore sufficient to focus on the class of mechanisms with \emph{monotone non-increasing} (called \emph{monotone} for short) allocation rules. An allocation rule is monotone non-increasing if for every agent $i \in N$, every profile $\vec{b}$, and all $b_i' \leq b_i$, it holds that $x_i(\vec{b}) \leq x_i(b_i', \vec{b}_{-i})$. We will use this together with Theorem \ref{thm:archerPayments} below to design truthful mechanisms.

\begin{theorem}[\cite{archer01,myerson81}] \label{thm:archerPayments}
A monotone non-increasing allocation rule $\vec{x}(\mathbf{b})$ admits a payment rule that is truthful and individually rational if and only if for all agents $i \in N$ and all bid profiles $\mathbf{b}_{-i}$, we have $\int_{0}^{\infty} x_{i}(z,\mathbf{b}_{-i}) dz < \infty$. In this case, we can take the payment rule $p(\mathbf{b})$ to be
\begin{equation}
 \label{eq:payment-id-BF}
 p_i(\vec{b}) = b_ix_i(\vec{b}) + \int_{b_i}^{\infty}x_i(z, \vec{b}_{-i})dz \,.
\end{equation}
\end{theorem}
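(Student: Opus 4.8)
**The plan is to prove Theorem \ref{thm:archerPayments}, which is the classical Myerson–Archer characterization specialized to procurement (reverse) auctions.** The statement has two parts: (i) the equivalence between existence of a truthful, individually rational payment rule and the monotonicity-plus-finiteness condition on the allocation rule, and (ii) the explicit formula \eqref{eq:payment-id-BF} for a payment rule that works. I would fix an agent $i$ and a profile $\vec{b}_{-i}$ of the other agents throughout, since truthfulness and individual rationality are imposed per agent and per profile of the others; this reduces everything to a single-variable problem in agent $i$'s declared cost $z := b_i$. Write $x_i(z) := x_i(z, \vec{b}_{-i})$ and $p_i(z) := p_i(z, \vec{b}_{-i})$ for brevity. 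The utility of agent $i$ when their true cost is $c_i$ but they declare $z$ is $p_i(z) - c_i\, x_i(z)$.

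For the forward direction (truthful + IR $\Rightarrow$ monotone + finite), I would start from the truthfulness inequalities. For any two declarations $z, z'$, agent $i$ with true cost $z$ weakly prefers declaring $z$, and agent $i$ with true cost $z'$ weakly prefers declaring $z'$. Adding these two inequalities gives the standard ``cycle-monotonicity'' bound $(z' - z)\bigl(x_i(z') - x_i(z)\bigr) \ge 0$, which for a reverse auction forces $x_i$ to be monotone non-increasing in $z$ (a higher declared cost cannot increase the allocation). To obtain the finiteness condition, I would combine individual rationality (which forces $p_i(z) \ge c_i\, x_i(z)$ at the truthful report) with the truthfulness inequalities to bound the payments; the standard argument shows that truthfulness pins down $p_i$ up to the integral $\int_z^\infty x_i(\zeta)\,d\zeta$, so a finite IR payment exists exactly when this integral converges.

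For the converse and the explicit formula, I would simply verify that the rule \eqref{eq:payment-id-BF}, namely $p_i(\vec{b}) = b_i x_i(\vec{b}) + \int_{b_i}^{\infty} x_i(z,\vec{b}_{-i})\,dz$, satisfies all three properties given that $x_i$ is monotone non-increasing with finite integral. Define $u_i(z) := p_i(z) - c_i\, x_i(z)$ as the utility of a truthful-cost-$c_i$ agent who declares $z$; substituting the formula yields $u_i(z) = (z - c_i)\, x_i(z) + \int_z^\infty x_i(\zeta)\,d\zeta$. Differentiating (or, to avoid smoothness assumptions, directly comparing $u_i(c_i)$ with $u_i(z)$ via the monotonicity of $x_i$), one checks that $u_i$ is maximized at $z = c_i$, which is truthfulness; that $u_i(c_i) = \int_{c_i}^\infty x_i(\zeta)\,d\zeta \ge 0$, which is individual rationality; and that payments are finite by the integrability hypothesis.

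\textbf{The main obstacle is the direction and the integration-by-parts/monotonicity bookkeeping peculiar to procurement auctions rather than forward auctions.} In a selling auction the allocation is monotone \emph{non-decreasing} in the bid and the payment integral runs from $0$ to $b_i$; here costs play the role of negative values, so the allocation is non-increasing and the integral runs from $b_i$ to $\infty$, with the finiteness hypothesis replacing the boundedness that is automatic in the standard setting. The delicate point is handling the comparison $u_i(c_i) \ge u_i(z)$ cleanly for both $z > c_i$ and $z < c_i$ without assuming $x_i$ is differentiable: the cleanest route is to use the identity $u_i(c_i) - u_i(z) = \int_{c_i}^{z} \bigl(x_i(\zeta) - x_i(z)\bigr)\,d\zeta$ (valid after substituting the formula and rearranging the integrals), whose sign is forced to be nonnegative precisely by the monotonicity of $x_i$, giving truthfulness in one stroke for both cases.
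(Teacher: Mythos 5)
The paper does not prove Theorem~\ref{thm:archerPayments} at all: it is imported verbatim as a known characterization from \citet{archer01} and \citet{myerson81}, so there is no in-paper argument to compare yours against. Your proposal is the standard textbook proof of Myerson's lemma transposed to the procurement setting, and it is essentially correct: the reduction to a single-variable problem per agent, the derivation of monotonicity from the two crossed incentive constraints, the envelope-style argument tying IR to convergence of $\int_{b_i}^{\infty} x_i(z,\vec{b}_{-i})\,dz$, and the verification of the explicit payment rule via the identity $u_i(c_i)-u_i(z)=\int_{c_i}^{z}\bigl(x_i(\zeta)-x_i(z)\bigr)\,d\zeta$ are all the right steps, and the last identity does handle both $z>c_i$ and $z<c_i$ without any smoothness assumption, exactly as you claim. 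One small slip: adding the two truthfulness inequalities in the cost (reverse-auction) setting yields $(z'-z)\bigl(x_i(z')-x_i(z)\bigr)\le 0$, not $\ge 0$; the inequality as you wrote it would give monotone \emph{non-decreasing}, contradicting the non-increasing conclusion you correctly state. Also, the ``only if'' direction of the finiteness claim deserves one more explicit line: IR at every true cost $T$ gives $U(T)\ge 0$, and combining this with $U(c)=U(T)+\int_c^T x_i(\zeta)\,d\zeta$ and letting $T\to\infty$ shows $U(c)\ge\int_c^{\infty}x_i(\zeta)\,d\zeta$, so finiteness of the utility forces convergence of the integral. Neither point changes the substance of the argument.
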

In this paper, we will exclusively derive monotone allocation rules that are implemented with the payment rule as defined in \eqref{eq:payment-id-BF}. Therefore, in the remainder of this paper, we adopt the convention of referring to the true cost profile $\vec{c}$ of the agents as input (rather than distinguishing it from the declared cost profile $\vec{b}$), yet we will omit the explicit reference to $\vec{c}$ if it is clear from the context. Finally, whenever tie-breaking is needed for any of our mechanisms, we assume a lexicographic precedence.

\subsection{\texorpdfstring{$k$}{k}-Level Model} 
\label{subsec:multi-def}

We consider the following multiple levels of service model as a natural extension of the standard model introduced above (see also \cite{chan14}). 
Throughout the paper, we refer to this model as the \emph{$k$-level model} for short: 
Suppose each agent $i \in N$ offers $k \ge 1$ levels of service and has an associated valuation function $v_i: \{0, \dots, k\} \to \mathbb{R}_{\ge 0}$ which is public information.\footnote{\ Our results very easily extend to the setting where there is a different (public) $k_i$ associated with each agent $i$. We use a common $k$ for the sake of presentation.} Here, $v_i(j)$ denotes the auctioneer's value for the first $j$ levels of service of agent $i$. 
Observe that in this setting each agent $i \in N$ is indivisible and the range of the allocation rule is constrained to $\set{0, \dots, k}$, i.e., $x_i: \mathbb{R}^n_{\geq 0} \to \set{0, \dots, k}$. Note also that the total cost of agent $i$ is linear (as defined above), i.e., the cost of using $x_i = j$ levels of service of agent $i$ is $j\cdot c_i$.

 \smallskip\noindent
 {\textbf{Valuation Functions:}} 

Without loss of generality, we assume that each $v_i$ is normalized such that $v_i(0) = 0$.
We study the general class of \emph{concave} valuation functions, i.e., for each agent $i$, $v_i(j)-v_i(j-1) \geq v_i(j+1)- v_{i}(j)$ for all $j=1,\dots, k-1$. We also define the \emph{$j$-th marginal valuation} of agent $i$ as $m_i(j) := v_i(j)-v_i(j-1)$, for $j \in \{ 1, \dots, k\}$, with the convention that $m_i(0) = 0$. 
Given a profile $\vec{c}$, the total value that the auctioneer derives from an allocation $\vec{x}$ is defined by the separable concave function $\V(\vec{x}(\vec{c}))=\sum_{i \in N}v_i(x_i(\vec{c}))$.

 \smallskip\noindent
 \textbf{Cost Restrictions:}
We consider different assumptions with respect to the ability of the auctioneer to hire multiple levels of service. In the \emph{all-in setting}, we assume that the auctioneer can afford to hire all levels of each single agent, i.e., given a cost profile $\vec{c}$, for every agent $i \in N$ it holds that $k \cdot c_i \le B$. Note that under this assumption we implicitly constrain the space of the (declared) cost profiles.\footnote{\ That is, we assume that any agent who violates the respective condition is discarded up front from further considerations, e.g., by running a pre-processing step that removes such agents.} In contrast, in the \emph{best-in setting}, which is equivalent to the setting of \citet{chan14}, the auctioneer is guaranteed only to be able to afford the first level of service, i.e., given a cost profile $\vec{c}$, for every agent $i \in N$ it holds that $c_i \le B$.
We focus on the all-in setting throughout this work with the only exception being Appendix~\ref{app:unrestricted_k_levels}, where we derive an almost tight result on the best-in setting.

\begin{remark}
 \label{remark:chan_lb}
 For the best-in setting \citet{chan14} show a lower bound of $k$
 for the approximation guarantee of any deterministic, truthful, budget-feasible mechanism and a lower bound of $\ln(k)$ for the approximation guarantee of any randomized, universally truthful, budget-feasible mechanism. 
 In Appendix \ref{app:unrestricted_k_levels}
 we present a $(k+2+o(1))$-approximation mechanism, named \BestInMechanism (Mechanism \ref{mechanism:bestin}), almost settling the deterministic case; although the best-in setting is not central in our work, this might be of independent interest. Note that the randomized mechanism suggested in \citet{chan14} has an approximation ratio of $4(1+\ln(nk))$.
\end{remark}

 \noindent
 \textbf{Benchmark:}
The performance of a mechanism is measured by comparing $\V(\vec{x}(\vec{c}))$ with the underlying (non-strategic) combinatorial optimization problem, which is commonly referred to as the \emph{$k$-Bounded Knapsack Problem} 
(see, e.g., \cite{martello90} for a classification of knapsack problems):
\begin{equation}
 \label{eq:BkKP}
 \OPTIk(\vec{c}) := 
 \max \sum_{i=1}^n v_i(x_i),\ 
 \text{ s.t. }
 \sum_{i=1}^n c_ix_i \leq B,\ \ 
 x_i \in \{0, \dots, k\}\,, \forall i \in N.
\end{equation}
The $k$-Bounded Knapsack Problem is NP-hard in general, since for $k=1$ it reduces to the well-known \emph{0-1 Knapsack Problem}. We say that a mechanism $\mathcal{M}=(\vec{x}, \vec{p})$ is an \emph{$\alpha$-approximation mechanism} with $\alpha \ge 1$ if 
$\alpha \cdot \V(\vec{x}(\vec{c})) \geq \OPTIk(\vec{c})$.
We also consider the relaxation of the above problem as a proxy for $\OPTIk(\vec{c})$. The definition and further details about this are deferred to Section~\ref{sec:fBkKP} below. 

An instance $I$ of the $k$-level model will be denoted by a tuple $I = (N, \vec{c}, B, k, (v_i)_{i \in N})$. Whenever part of the input is clear from the context, we omit its explicit reference for conciseness (e.g., often we refer to an instance simply by its corresponding cost vector $\vec{c}$).

\subsection{Divisible Agent Model}
\label{subsec:divisible-def}

Next, we introduce the fractional model that we study in this work. Throughout the paper, we refer to it as the \emph{divisible agent model}: 
Here the auctioneer is allowed to hire each agent for an arbitrary fraction of the full service. More precisely, each agent $i \in N$ is divisible and the range of the allocation rule is constrained to $[0,1]$, i.e., $x_i: \mathbb{R}_{\geq 0}^n \to [0,1]$. Each agent $i \in N$ has an associated valuation function $\bar{v}_i: [0,1] \to \mathbb{R}_{\geq 0}$ (which is public information), where $\bar{v}_i(x)$ represents how valuable a fraction $x \in [0,1]$ of the service of agent $i$ is to the auctioneer.

 \smallskip\noindent
 {\textbf{Valuation Functions:}} 
Also here, we assume without loss of generality that each $v_i$ is normalized such that $\bar{v}_i(0) = 0$. We focus on the general class of non-decreasing and concave valuation functions. The total value that the auctioneer derives from an allocation $\vec{x}(\vec{c})$ is defined as $\V(\vec{x}(\vec{c}))= \sum_{i \in N} \bar{v}_i (x_i(\vec{c}))$.

 \smallskip\noindent
 \textbf{Cost Restrictions:}
We assume that the auctioneer can afford each agent to the full extent. 
More formally, given a cost profile $\vec{c}$ it must hold that for each agent $i \in N$, $c_i \le B$. The observation about the cost restrictions in Section \ref{subsec:multi-def} applies to this assumption as well, i.e., we do constrain the bidding space.

 \smallskip\noindent
 \textbf{Benchmark:}
As above, the performance of a mechanism is measured by comparing $\V(\vec{x}(\vec{c}))$ with the underlying (non-strategic) combinatorial optimization problem, which we refer to as the \emph{Fractional Concave Knapsack Problem}: 
\begin{equation}
 \label{eq:fKP}
 \OPTF(\vec{c}) := 
 \max \ \sum_{i=1}^n \bar{v}_{i}(x_i) 
 \ \ \text{s.t.} \ \ 
 \sum_{i=1}^n c_i x_i \leq B, \ \ x_i \in [0,1] \; \forall i \in N. 
\end{equation}
In the divisible agent model, a mechanism $\mathcal{M}=(\vec{x}, \vec{p})$ is an $\alpha$-approximation mechanism with $\alpha \ge 1$ if $\alpha \cdot \V(\vec{x}(\vec{c})) \geq \OPTF(\vec{c})$.

An instance $I$ of the divisible agent model will be denoted by a tuple $I = (N, \vec{c}, B, (\bar{v}_i)_{i \in N})$. As mentioned before, we will omit the explicit reference of certain input parameters if they are clear from the context.

\subsection{Fractional \texorpdfstring{$k$}{k}-Bounded Knapsack Problem} 
\label{sec:fBkKP}

We also consider the \emph{Fractional $k$-Bounded Knapsack Problem} that follows from the $k$-Bounded Knapsack Problem defined in \eqref{eq:BkKP} by relaxing the integrality constraint: 
\begin{equation}
 \label{eq:fkBKP}
 \OPTFk(\vec{c}) := 
 \max \ \sum_{i=1}^n v_i(\floor{x_i}) + m_i(\ceil{x_i})(x_i-\floor{x_i}) 
 \ \ \text{s.t.} \ \
 \sum_{i=1}^nc_ix_i \leq B, \ \ x_i \in [0,k] \; \forall i \in N.
\end{equation}
Naturally, it holds that $\OPTFk(\vec{c}) \geq \OPTIk(\vec{c})$.
Note that $\mathrm{OPT_F^1}(\vec{c}) = \OPTF(\vec{c})$.

It is not hard to see that due to the fact that the objective is separable concave, $\OPTFk(\vec{c})$ inherits the well-known properties of its one-dimensional analogue. This includes the fact that an optimal solution can be computed by an adaptation of the standard greedy algorithm that sorts the elements in decreasing marginal density (marginal value per cost) and selects as many as possible \citep{hochbaum}. 
To the best of our knowledge, separable concave functions are the \textit{widest} class of objectives for which the Fractional $k$-Bounded Knapsack Problem admits optimal solutions with this structure which we crucially  utilize here.
For completeness, we state the algorithm as Algorithm \ref{algo:fk-knapsack} below, along with an easy fact that we will repeatedly use.

\begin{algorithm}[ht]
\caption{A Greedy Algorithm for Fractional $k$-Bounded Knapsack}
\nonl $\hspace{-2.3ex}\rhd$ {\bf{Input:}} An instance $I=(N, \vec{c},B, k, (v_i)_{i \in N})$ of the $k$-level model.\\
\label{algo:fk-knapsack}
Initialize an empty list $\mathcal{L}$ of $k \cdot n$ elements.

\For{$i \in N$}
{
\For{$j\in \{1,\dots, k\}$}{
Add $\frac{m_i(j)}{c_i}$ to the list $\mathcal{L}$.
}
}

Sort $\mathcal{L}$ in decreasing order. \tcp*{we break ties lexicographicaly with respect to the agents' indices}

Let $\vec{x^*}=\vec{0}$ and $j=1$.

\While{ $\sum_{i \in N}c_ix^*_i < B$ }
{
Let $\ell \in [n]$ be the agent corresponding to the $j$-th marginal value-per-cost in $\mathcal{L}$.

Set $x^*_{\ell}= x^*_{\ell}+\min\left(\frac{B-\sum_{i\in N}c_ix^*_i}{c_{\ell}}, 1 \right)$.

Set $j=j+1$.
}

Return $\vec{x^*}$.
\end{algorithm}

\begin{fact}
\label{fact:opt-knapsack-fact}
 Given an instance $I = (N, \vec{c}, B, k, (v_i)_{i \in N})$ of the Fractional $k$-Bounded Knapsack Problem, Algorithm \ref{algo:fk-knapsack} computes in time $O(kn\log (kn))$ an optimal solution $\vec{x}^*$ that has at most one coordinate with a non-integral value. 
\end{fact}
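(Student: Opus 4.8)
The plan is to verify three separate claims: the running time, the near-integrality of the output, and optimality. The running time is the easiest. Building the list $\mathcal{L}$ costs $O(kn)$, sorting it costs $O(kn\log(kn))$, and the while loop runs at most $kn$ times since $j$ is incremented once per iteration and $\mathcal{L}$ has $kn$ entries. The only point to watch is that evaluating the guard $\sum_{i \in N} c_i x_i^* < B$ naively would cost $\Theta(n)$ per iteration; I would instead keep the partial cost $\sum_i c_i x_i^*$ as a running scalar updated in $O(1)$ at each assignment, so that each iteration is $O(1)$ and the loop contributes only $O(kn)$. Hence sorting dominates and the total is $O(kn\log(kn))$.

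For near-integrality, note that the increment added to $x_\ell^*$ in each iteration is $\min\!\big(\tfrac{B-\sum_i c_i x_i^*}{c_\ell},\,1\big)$, which equals $1$ unless the residual budget $B - \sum_i c_i x_i^*$ is strictly below $c_\ell$; in that exceptional case the update drives $\sum_i c_i x_i^*$ up to exactly $B$, the guard fails, and the algorithm halts. Thus a fractional increment can occur only in the final iteration, while all earlier increments are full units, so at most one coordinate of $\vec{x}^*$ ends up non-integral. (If the loop instead exhausts all $kn$ entries of $\mathcal{L}$ before the budget binds, every increment was a unit and $\vec{x}^* = (k,\dots,k)$ is integral, which again satisfies the claim and is trivially optimal.)

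The crux is optimality, which I would obtain by reinterpreting \eqref{eq:fkBKP} as an ordinary continuous fractional knapsack over $kn$ unit \emph{pieces} and invoking the standard greedy-exchange argument. I identify the $j$-th level of agent $i$ with an item of cost $c_i$ and value $m_i(j)$, and I write a feasible allocation with $x_i = \lfloor x_i\rfloor + f_i$ as a vector $\vec{z}$ with $z_{ij}\in[0,1]$ recording the fraction of piece $(i,j)$ taken ($z_{ij}=1$ for $j\le\lfloor x_i\rfloor$, $z_{i,\lfloor x_i\rfloor+1}=f_i$, else $0$). Since $v_i(0)=0$, telescoping gives $v_i(\lfloor x_i\rfloor) + m_i(\lceil x_i\rceil)f_i = \sum_j m_i(j)z_{ij}$ and $c_i x_i = c_i\sum_j z_{ij}$, so the objective and budget of \eqref{eq:fkBKP} are exactly the total value and cost of the selected pieces. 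This exhibits \eqref{eq:fkBKP} as a fractional knapsack on the $kn$ pieces subject to the extra ordering constraint $z_{i1}\ge z_{i2}\ge\dots\ge z_{ik}$ for each agent. The main obstacle, and the place where concavity is indispensable, is to show this ordering constraint is vacuous for the greedy: concavity yields $m_i(1)\ge\dots\ge m_i(k)$, hence $\tfrac{m_i(1)}{c_i}\ge\dots\ge\tfrac{m_i(k)}{c_i}$, so the pieces of each agent appear in $\mathcal{L}$ in increasing order of level and processing $\mathcal{L}$ in decreasing density always fills $(i,j)$ before $(i,j+1)$. Therefore the greedy output $\vec{z}^*$ already satisfies the ordering constraint, so it is feasible for \eqref{eq:fkBKP}, while the standard exchange argument for the unconstrained fractional knapsack (cf. \citep{hochbaum}) shows its value equals the unconstrained optimum. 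Since dropping the ordering constraint can only enlarge the feasible region, the unconstrained optimum upper-bounds $\OPTFk(\vec{c})$; combined with the feasibility of $\vec{z}^*$ this forces all three values to coincide, establishing that $\vec{x}^*$ is optimal.
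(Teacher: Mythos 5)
Your proof is correct, and it takes exactly the route the paper intends: the paper states this fact without proof, attributing it to the standard greedy argument for fractional knapsack with separable concave objectives \citep{hochbaum}, and your argument is precisely that adaptation --- decompose into $kn$ unit pieces, observe that concavity makes the per-agent ordering constraint vacuous for the greedy, and conclude via the usual exchange argument, with the near-integrality and running-time claims following directly from inspection of the loop.
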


\section{A Budget-Feasible Mechanism for Multiple Levels of Service}
\label{sec:multiple}

We derive a natural truthful and budget-feasible greedy mechanism for the $k$-level model. This mechanism will also be used in our \discretization mechanism for the divisible agent model (see Section~\ref{subsec:interpolation}). Our mechanism falls into the family of mechanisms which truncate a greedy optimal solution, as is often the case in this literature (see Further Related Work).

\subsection{A Truthful Greedy Mechanism}
\label{subsec:main_mechanism}
The main idea underlying our mechanism is as follows: 
If there is an agent $i^*$ whose maximum value $v_{i^*}(k)$ is high enough (in a certain sense), then we simply pick all levels of service of this agent. Otherwise, we compute an allocation using the following greedy procedure: We first compute an optimal allocation $\vec{x}^*\!(\vec{c})$ to the corresponding Fractional $k$-Bounded Knapsack Problem (which can be done in polynomial time) and use the integral part of this solution as an initial allocation. The intuition is that this allocation is close to the optimal fractional solution because $\vec{x}^*\!(\vec{c})$ has at most one fractional component (Fact~\ref{fact:opt-knapsack-fact}). We then repeatedly discard the worst level of service (in terms of marginal value-per-cost) of an agent from this allocation, until the total value of our allocation would drop below an $\alpha$-fraction of the optimal solution. 
A similar approach was employed by \citet{klumper22}, albeit without the subtleties of our discrete setting, as they were working on the divisible agent model.

We need some more notation for the formal description of our mechanism: Given an allocation $\vec{x}$, we denote by $\ell(\vec{x})$ the agent whose $x_{\ell(\vec{x})}$-th level of service is the least valuable in $\vec{x}$, in terms of their marginal value-per-cost ratio. Notice that due to the fact that the valuation functions are concave, the worst case marginal value-per-cost ratio indeed corresponds to the $x_{\ell(\vec{x})}$-th ratio of agent $\ell(\vec{x})$. When $\vec{x}$ is clear from the context, we refer to this agent simply as $\ell$. 
A detailed description of our greedy mechanism is given in Mechanism~\ref{mech:greedy}.

\begin{mechanism}[ht]
\caption{\MultiMechanism\label{mech:greedy}}
\nonl $\hspace{-2.3ex}\rhd$ {\bf{Input:}} An instance $I=(N, \vec{c},B, k, (v_i)_{i \in N})$ of the $k$-level model.\;
\label{mechanism:multi}
Let $i^*\in \argmax_{i \in N} v_i(k)/\OPTFk(\vec{c}_{-i})$. \label{line:i*} \tcp*{recall that ties are broken lexicographically}
\If{$v_{i^*}(k) \geq \frac{\alpha}{1-\alpha} \cdot \OPTFk(\vec{c}_{-i^*})$ \label{line:ifToCheckToPicki*}}{
 Set $x_{i^*}=k$ and $x_i=0$ for all $i \in N \setminus \{i^*\}$.
 }
\Else{
Compute an optimal fractional allocation $\vec{x}^*\!(\vec{c})$ using Algorithm \ref{algo:fk-knapsack}. \tcp*{i.e., $\vec{x}^*\!(\vec{c})$ is almost integral}

Initialize $\vec{x}= (\floor{x^*_1(\vec{c})}, \dots, \floor{x^*_n(\vec{c})} )$. \tcp*{by the construction of $\vec{x}^*\!(\vec{c})$ and the all-in assumption, $\vec{x}\neq \vec{0}$}

\For(\tcp*[f]{recall that $W(\vec{x})=\{i \in N \mid x_{i} > 0\}$}){$i \in W(\vec{x})$ }{
\For{$j \in \{1, \dots, x_i\}$}{
Add the marginal value-per-cost ratio $m_i(j)/c_i$ to a list $\mathcal{L}$.
}
}

Sort $\mathcal{L}$ in non-increasing order and let $\ell$ be the agent corresponding to the last marginal value-per-cost in $\mathcal{L}$.

\While{$\V(\vec{x}) -m_{\ell}(x_{\ell}) \geq \alpha \OPTFk(\vec{c})$ \label{line:whileSort}} 
{

Set $x_{\ell}= x_{\ell}-1$. \;

Remove the last element from $\mathcal{L}$ and update $\ell$. \;
}}
Allocate $\vec{x}$ and determine the payments $\vec{p}$ according to \eqref{eq:payment-id-BF}.
\end{mechanism}

The main result of this section is the following theorem:
\begin{theorem}
\label{thm:multi-mechanism}
 For $\alpha=\frac{1}{2+\sqrt{3}}$, \MultiMechanism is a truthful, individually rational, budget-feasible $(2+\sqrt{3})$-approximation mechanism for instances of the $k$-level model, and runs in time polynomial in $n$ and $k$.
\end{theorem}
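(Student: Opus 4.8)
The plan is to establish the four properties separately, folding truthfulness and individual rationality into a single monotonicity argument via Myerson's characterization and isolating budget-feasibility as the technical core. By Theorem~\ref{thm:archerPayments}, since \MultiMechanism already sets payments through \eqref{eq:payment-id-BF}, it suffices to prove that the allocation rule is monotone non-increasing and that $\int_0^\infty x_i(z,\vec{c}_{-i})\,dz<\infty$. Finiteness is immediate from the all-in assumption: any agent with $z>B/k$ is discarded up front, so $x_i(z,\vec{c}_{-i})=0$ there and the integral is at most $k\cdot(B/k)=B$. For monotonicity I would fix $\vec{c}_{-i}$ and decrease $z=c_i$. The key observation is that $v_i(k)/\OPTFk(\vec{c}_{-i})$ is independent of $c_i$, whereas for every $j\neq i$ the ratio $v_j(k)/\OPTFk(\vec{c}_{-j})$ is non-decreasing in $c_i$; hence lowering $c_i$ only improves $i$'s standing in the $\argmax$ of Line~\ref{line:i*} while weakly decreasing its competitors' ratios. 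From this I would derive \emph{branch stability}: if we are in the greedy (else) branch at cost $z$, we remain there for every $z'<z$, so the only way monotonicity could break---a drop of $x_i$ to $0$ triggered by some \emph{other} agent entering the if-branch---cannot occur.

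Within the greedy branch the remaining task is to show $x_i$ is non-increasing in $c_i$. Lowering $c_i$ raises every marginal density $m_i(j)/c_i$, so $i$ weakly gains levels in the fractional optimum of Algorithm~\ref{algo:fk-knapsack} and hence in the initial floored allocation, while simultaneously the pruning threshold $\alpha\OPTFk(\vec{c})$ increases. Since the loop in Line~\ref{line:whileSort} always retains the highest-density levels until the value reaches the threshold, both effects protect the levels of agent $i$, and I would isolate this into a short lemma showing the retained prefix of $i$ can only grow. Together with branch stability and finiteness, this yields truthfulness and individual rationality.

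For the approximation guarantee I would treat the two branches. In the if-branch, $\V(\vec{x})=v_{i^*}(k)\ge\frac{\alpha}{1-\alpha}\OPTFk(\vec{c}_{-i^*})$; combined with the elementary inequality $\OPTFk(\vec{c})\le\OPTFk(\vec{c}_{-i^*})+v_{i^*}(k)$ (delete $i^*$ from an optimal fractional solution, leaving a feasible solution for $\vec{c}_{-i^*}$) this gives $\OPTFk(\vec{c})\le\frac1\alpha\V(\vec{x})$. In the greedy branch the stopping rule guarantees $\V(\vec{x})\ge\alpha\OPTFk(\vec{c})$ directly, \emph{provided} the initial floored allocation already has value at least $\alpha\OPTFk(\vec{c})$. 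I would prove this last point using Fact~\ref{fact:opt-knapsack-fact}: the single fractional coordinate, say of agent $r$, contributes at most $m_r(1)=v_r(1)\le v_r(k)<\frac{\alpha}{1-\alpha}\OPTFk(\vec{c})$, where the final inequality is exactly the failure of the if-condition of Line~\ref{line:ifToCheckToPicki*}; hence the floored value exceeds $\frac{1-2\alpha}{1-\alpha}\OPTFk(\vec{c})$, which is $\ge\alpha\OPTFk(\vec{c})$ as long as $\alpha^2-3\alpha+1\ge0$, i.e.\ $\alpha\le\frac{3-\sqrt{5}}{2}$. Since $\OPTIk(\vec{c})\le\OPTFk(\vec{c})$, both branches give a $1/\alpha$-approximation, equal to $2+\sqrt{3}$ for $\alpha=2-\sqrt{3}$.

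The main obstacle is budget-feasibility, and this is where the precise value $\alpha=2-\sqrt{3}$ is forced. Rewriting \eqref{eq:payment-id-BF} in threshold form, the payment to $i$ equals $\sum_{j=1}^{x_i}\theta_i^j$, where $\theta_i^j$ is the critical cost at which $i$ loses its $j$-th level. In the if-branch only $i^*$ is paid, and the all-in pre-filter forces $x_{i^*}(z,\vec{c}_{-i^*})=0$ for $z>B/k$, so its critical cost is at most $B/k$ and $p_{i^*}=k\cdot(\text{critical cost})\le B$. For the greedy branch I would analyze the state at $z=\theta_i^j$: there level $(i,j)$ is the worst retained level, so its density $m_i(j)/z$ is the cutoff, every removed level (the flooring loss together with all pruned levels) has density at most this cutoff, and the fractional optimum at cost $z$ spends the full budget $B$. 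Bounding the removed value by (cutoff density)$\times$(freed budget) and invoking the stopping inequality at that cost yields a bound of the form $\theta_i^j\lesssim m_i(j)\,B/\big((1-\alpha)\OPTFk(z,\vec{c}_{-i})-m_i(j)\big)$. Summing over all retained levels, using $\sum_{i,j}m_i(j)=\V(\vec{x})$ together with the greedy-branch bounds relating $\V(\vec{x})$ to $\OPTFk(\vec{c})$, should collapse the total payment to $f(\alpha)\cdot B$ with $f(\alpha)\le1$ exactly when $\alpha^2-4\alpha+1\ge0$, i.e.\ $\alpha\le2-\sqrt{3}$. The delicate points---where I expect the real work---are the bookkeeping of the single fractional level and the fact that each threshold $\OPTFk(z,\vec{c}_{-i})$ differs from the true-cost value $\OPTFk(\vec{c})$; these must be controlled tightly enough to reach the quadratic rather than a weaker constant. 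Since this budget constraint ($\alpha\le2-\sqrt{3}$) is more restrictive than the approximation constraint ($\alpha\le\frac{3-\sqrt{5}}{2}$), one sets $\alpha=2-\sqrt{3}$ to maximize $1/\alpha$. Finally, polynomial running time follows because Algorithm~\ref{algo:fk-knapsack} and the $n$ computations of $\OPTFk(\vec{c}_{-i})$ are polynomial, the pruning loop runs at most $kn$ times, and each payment integrates a step function with at most $kn$ breakpoints.
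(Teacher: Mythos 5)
Your proposal follows essentially the same route as the paper: monotonicity plus Myerson's characterization for truthfulness and individual rationality, the same two-branch case analysis for the approximation guarantee (with the identical constraint $\alpha\le\frac{3-\sqrt{5}}{2}$ arising from the single fractional coordinate of the greedy fractional optimum), and budget-feasibility via per-level critical payments bounded by a density-cutoff/feasibility argument and then summed against $\OPTFk(\vec{c}_{-i})$, closing at $\frac{2\alpha}{(1-\alpha)^2}\le 1$, i.e.\ $\alpha\le 2-\sqrt{3}$. The only substantive difference is in the "delicate bookkeeping" you flagged: the paper's per-level bound is the cleaner $p_{ij}(\vec{c}_{-i})\le\frac{B}{1-\alpha}\cdot\frac{m_i(j)}{\OPTFk(\vec{c}_{-i})}$ (obtained by counting the cutoff level itself as part of the discarded value, so no $-m_i(j)$ appears in your denominator), and the summation is closed using the lower bound $\OPTFk(\vec{c}_{-i})>\frac{1-\alpha}{\alpha}\big(\V(\vec{x})-m_{\ell}(x_{\ell})\big)$ with the $\ell$-term kept separate, which is what makes the constant land exactly at $2+\sqrt{3}$.
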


The polynomial running time for computing the final allocation $\vec{x}$ is straightforward given Fact \ref{fact:opt-knapsack-fact}. After determining $\vec{x}$, computing the payment $p_i$ of each agent $i \in N$ can be done efficiently, as we describe in Appendix \ref{appendix:new-section}. In the remainder of this section, we prove several lemmata to establish the properties stated in Theorem~\ref{thm:multi-mechanism}. Technically, the most challenging part is to prove that the mechanism is budget-feasible (see Section~\ref{sec:BF-Greedy}).

The following property directly follows by construction of the mechanism. 
\begin{fact}
The allocation $\vec{x}$ returned by \MultiMechanism satisfies $x_i\leq x_i^*$ for every $i\in N$.
\end{fact}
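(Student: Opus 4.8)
The plan is to prove this by a straightforward case analysis on the two branches of the mechanism, tracking how the allocation $\vec{x}$ is built relative to the optimal fractional solution $\vec{x}^*\!(\vec{c})$ returned by Algorithm~\ref{algo:fk-knapsack}. The statement asserts $x_i \le x_i^*$ for every $i \in N$, so I would establish this coordinatewise. The key observations are that in the first branch the allocation is trivially dominated, and in the second branch the initial allocation is the floor of $\vec{x}^*$, after which the \textbf{while}-loop only ever decreases coordinates.

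\medskip\noindent\textbf{First branch (Line~\ref{line:ifToCheckToPicki*}).} Here I would need to handle the case where $v_{i^*}(k) \geq \frac{\alpha}{1-\alpha} \cdot \OPTFk(\vec{c}_{-i^*})$, so the mechanism sets $x_{i^*} = k$ and $x_i = 0$ for all other agents. For each $i \neq i^*$ we have $x_i = 0 \le x_i^*$ trivially since $x_i^* \ge 0$. For agent $i^*$, I would observe that $x_{i^*}^* = k$ must hold in this case: under the all-in assumption $k \cdot c_{i^*} \le B$, so agent $i^*$ can be afforded entirely, and since $v_{i^*}(k)$ is large relative to the optimum on the remaining agents, the greedy fractional optimum allocates $i^*$ fully. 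Here the mild subtlety is justifying that $x_{i^*}^* = k$; one clean way is to note that since we take the \emph{floor} and the first branch sets $x_{i^*}=k$ integrally, and $x_i^* \le k$ always by the box constraint, it suffices to argue $x_{i^*}^* = k$. I expect this to follow from the greedy structure of Algorithm~\ref{algo:fk-knapsack} together with the all-in assumption; alternatively, if a marginal density argument is cleaner, I would invoke that agent $i^*$'s levels are consumed fully by the greedy fill.

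\medskip\noindent\textbf{Second branch (\textbf{Else}).} This is the more natural case and the core of the argument. The mechanism initializes $\vec{x} = (\floor{x_1^*(\vec{c})}, \dots, \floor{x_n^*(\vec{c})})$, so immediately $x_i = \floor{x_i^*} \le x_i^*$ for every $i$. Then the \textbf{while}-loop (Line~\ref{line:whileSort}) only ever executes the update $x_\ell = x_\ell - 1$, which strictly decreases a coordinate and never increases any coordinate. Hence the invariant $x_i \le x_i^*$, established at initialization, is preserved throughout every iteration of the loop and holds for the final allocation. I would phrase this as a simple loop invariant: the claim holds after initialization, and each loop body preserves it since it only decrements a single coordinate.

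\medskip\noindent I do not anticipate a genuine obstacle here, as the statement is labeled a \textsc{Fact} and is intended to follow by construction. The only point requiring a line of care is the first branch, where establishing $x_{i^*}^* = k$ relies on the all-in assumption and the greedy optimality of Algorithm~\ref{algo:fk-knapsack}; everything in the second branch is an immediate consequence of taking floors and then only decrementing. I would keep the write-up short, emphasizing the loop invariant in the \textbf{Else} branch and the afforded-fully observation in the \textbf{If} branch.
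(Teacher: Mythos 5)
Your \textbf{Else}-branch argument is exactly the paper's reasoning: the paper offers no written proof (it says the property ``directly follows by construction''), and the intended justification is precisely your loop invariant --- $\vec{x}$ is initialized to $(\floor{x_1^*},\dots,\floor{x_n^*})$ and the \textbf{while}-loop only ever decrements a coordinate. Note also that the Fact is only ever invoked in the analysis of the \textbf{else} case (it underlies the chain of inequalities in Lemma~\ref{lemma:magic-lemma}, which is stated under the hypothesis $v_{i^*}(k) < \frac{\alpha}{1-\alpha}\OPTFk(\vec{c}_{-i^*})$); in the \textbf{if} branch the mechanism never even computes $\vec{x}^*(\vec{c})$, so the comparison is vacuous there under the natural reading of the statement.

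Your treatment of the first branch, however, contains a genuine error: the claim that $x_{i^*}^* = k$ must hold whenever the \textbf{if} condition fires is false. The condition in line~\ref{line:ifToCheckToPicki*} compares the \emph{total} value $v_{i^*}(k)$ against $\OPTFk(\vec{c}_{-i^*})$; it says nothing about agent $i^*$'s \emph{marginal value-per-cost} ratios, which are what drive Algorithm~\ref{algo:fk-knapsack}. Concretely, take $n=2$, $k=1$, $B=2$, $c_1=2$, $v_1(1)=10$, $c_2=0.1$, $v_2(1)=5$ (the all-in assumption $k\cdot c_i\le B$ holds). Then $\OPTFk(\vec{c}_{-1})=5$ and $\OPTFk(\vec{c}_{-2})=10$, so $i^*=1$ (ratio $2$ versus $0.5$), and with $\alpha=\frac{1}{2+\sqrt 3}$ we have $\frac{\alpha}{1-\alpha}\OPTFk(\vec{c}_{-1})=\frac{5}{1+\sqrt 3}\approx 1.83 \le 10$, so the \textbf{if} branch sets $x_1=1$. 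But the greedy fractional optimum picks agent $2$ first (density $50>5$) and then fills agent $1$ only fractionally: $x_1^* = (2-0.1)/2 = 0.95 < 1 = x_1$. So no appeal to the greedy structure can rescue the claim --- the inequality $x_{i^*}\le x_{i^*}^*$ can actually fail in the \textbf{if} branch. The correct fix is not to strengthen the first-branch argument but to restrict the Fact's scope to the run in which $\vec{x}^*(\vec{c})$ is constructed, i.e., the \textbf{else} branch, where your invariant argument is complete and is all that the paper's later analysis needs.
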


We now show that the allocation rule of \MultiMechanism is monotone.

\begin{lemma}
\label{lemma:monotone}
For any $\alpha\in(0,1)$, the allocation rule of\ \ \MultiMechanism is monotone. 
\end{lemma}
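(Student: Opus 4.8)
The plan is to prove monotonicity by showing that for every agent $i$, lowering $i$'s declared cost (from $c_i$ to some $c_i' \le c_i$, holding $\vec{c}_{-i}$ fixed) cannot decrease $x_i$. Since the mechanism branches into two cases (the ``pick $i^*$ entirely'' branch at Line~\ref{line:ifToCheckToPicki*}, and the greedy backward-deletion branch), I would first analyze how each branch behaves as $c_i$ decreases, and then handle the transitions between branches. The key observation driving everything is that decreasing $c_i$ only \emph{increases} the marginal value-per-cost ratios $m_i(j)/c_i$ of agent $i$, while leaving all other agents' ratios, as well as $\OPTFk(\vec{c}_{-i})$, completely unchanged; it can only increase $\OPTFk(\vec{c})$ and $v_i(k)/\OPTFk(\vec{c}_{-i})$.

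For the first branch, I would argue that the ``special agent'' condition is itself monotone in the right direction for agent $i$: if $i = i^*$ and the if-condition at Line~\ref{line:ifToCheckToPicki*} holds at cost $c_i$, then since $\OPTFk(\vec{c}_{-i^*})$ does not depend on $c_{i^*}$, lowering $c_{i^*}$ keeps $v_{i^*}(k)$ fixed and the condition still holds, so $i$ is again allocated $x_i = k$, the maximum possible — monotonicity is immediate. I would also need to check that an agent $i \ne i^*$ who is not the special agent cannot ``steal'' the special-agent slot in a way that hurts $i$; but lowering $c_i$ can only make $i$ itself more likely to become $i^*$ (its ratio $v_i(k)/\OPTFk(\vec{c}_{-i})$ can only rise), and becoming the special agent grants $x_i = k$. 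I would confirm that whenever the mechanism leaves the greedy branch and enters the special-agent branch as $c_i$ drops, agent $i$'s allocation does not drop.

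For the greedy branch, the core claim is that the backward-deletion process allocates agent $i$ at least as many levels when $c_i$ is smaller. I would track two effects of lowering $c_i$: first, the integral starting allocation $\lfloor x_i^*(\vec{c}) \rfloor$ from Algorithm~\ref{algo:fk-knapsack} is non-decreasing in $-c_i$, because a cheaper agent $i$ has higher marginal densities and is sorted earlier in $\mathcal{L}$, so it receives at least as many levels in the greedy fractional optimum; second, in the while-loop of Mechanism~\ref{mechanism:multi} the deletion order removes the globally least dense surviving level, and agent $i$'s levels, now denser, are deleted later (if at all). The technical heart is to show these two effects combine to guarantee $x_i$ does not decrease, despite the stopping threshold $\alpha\,\OPTFk(\vec{c})$ itself increasing with a lower $c_i$; a higher threshold means deletions stop \emph{earlier}, which only helps retain levels, so this works in our favor.

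I expect the main obstacle to be the cross-branch transitions and the interaction between the changing starting allocation and the changing stopping threshold within the greedy branch. One must verify that every way the outcome can change as $c_i$ continuously decreases — staying in the greedy branch, switching from greedy to special-agent, or (for $i = i^*$) staying in the special-agent branch — never causes $x_i$ to fall. The cleanest route is probably to establish a coupling/exchange argument: compare the sorted list $\mathcal{L}$ and the deletion sequence at $c_i$ versus $c_i'$, and show that at the moment the $c_i$-run stops, the $c_i'$-run has retained at least as many levels of agent $i$. Care is needed with tie-breaking (the paper fixes lexicographic tie-breaking), since a strict decrease in $c_i$ removes ties in $i$'s favor, and one should ensure the lexicographic convention is consistent with the monotonicity direction.
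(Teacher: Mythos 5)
Your overall plan follows the same route as the paper's proof: a two-case analysis on the branch taken, and, inside the greedy branch, exactly the three observations the paper uses --- the $i$-th coordinate of the initial integral allocation is non-decreasing as $c_i$ drops, agent $i$'s levels move to safer positions in the deletion order, and the stopping threshold $\alpha\OPTFk(\vec{c})$ can only increase, so the \textbf{while} loop stops earlier. That part is sound. However, your ``key observation'' contains an error that then undermines the one step you yourself identify as the main obstacle. The quantity $v_i(k)/\OPTFk(\vec{c}_{-i})$ does \emph{not} rise when $c_i$ decreases: the numerator is public and fixed, and the denominator is computed on an instance from which $i$ is removed, so this ratio is completely independent of $c_i$. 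What actually moves are the \emph{other} agents' ratios $v_j(k)/\OPTFk\big((c_i',\vec{c}_{-i})_{-j}\big)$ for $j\neq i$, and they can only \emph{decrease}, because their denominators include the now-cheaper agent $i$.

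This matters because the genuinely dangerous cross-branch transition is not the one you focus on. If $i$ itself becomes $i^*$ with the \textbf{if}-condition true, then $x_i=k$ and monotonicity holds trivially. The case that could break the lemma is: $i$ is a winner in the greedy branch, lowers her cost, and some \emph{other} agent $j$ becomes $i^*$ with the condition in Line~\ref{line:ifToCheckToPicki*} now satisfied --- then $x_i$ would drop from a positive value to $0$. Your proposal flags the transition but never rules this out, and with the mistaken direction of the ratio it is not clear how you would. The paper closes this with the chain
\[
\frac{v_j(k)}{\OPTFk\big(c_i',\vec{c}_{-\{j,i\}}\big)} \;\le\; \frac{v_j(k)}{\OPTFk(\vec{c}_{-j})} \;\le\; \frac{v_{i^*}(k)}{\OPTFk(\vec{c}_{-i^*})} \;<\; \frac{\alpha}{1-\alpha}
\quad\text{for all } j\neq i,
\]
together with the fact that $i$'s own ratio is unchanged; hence whoever is selected in Line~\ref{line:i*} under $(c_i',\vec{c}_{-i})$ still fails the \textbf{if}-test and the mechanism re-enters the \textbf{else} branch, where your coupling argument for the deletion sequence applies. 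Adding this step (and dropping the incorrect claim that $i$'s own ratio rises) would complete the proof along the paper's lines.
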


\begin{proof}
Let $\vec{c}$ be a cost profile. We distinguish the following two cases:
\smallskip

\noindent\underline{Case 1:} $v_{i^*}(k) \geq \frac{\alpha}{1-\alpha} \cdot \OPTFk(\vec{c}_{-i^*})$. In this case, $i^*$ is hired for $k$ levels of service. Suppose that $i^*$ unilaterally deviates and decreases their cost to $c_{i^*}' < c_{i^*}$. Such a deviation has no influence on which agent is chosen in line \ref{line:i*} and does not alter the condition of this case. To see that, notice that the quantity $\OPTFk(\vec{c}_{-i^*})$ does not depend on the bid of $i^*$, whereas $\OPTFk\big(\big(c_{i^*}', \vec{c}_{-i^*}\big)_{-i}\big) \ge \OPTFk(\vec{c}_{-i})$, for any $i\neq i^*$. Therefore, for any such deviation, $i^*$ will remain the sole winner and will be hired for $k$ levels of service, i.e., $x_{i^*}(c_{i^*}',\vec{c}_{-i^*})=x_{i^*}(\vec{c})=k$ (where in a slight abuse of notation we introduce arguments to the components of $\vec{x}$ to distinguish between the two runs). No other agent was winning in this case, hence there is no need to examine deviations by other agents. 
\smallskip

\noindent\underline{Case 2:} $v_{i^*}(k)< \frac{\alpha}{1-\alpha} \cdot \OPTFk(\vec{c}_{-i^*})$. Here, the allocation rule of \MultiMechanism is allocating to a set $W(\vec{x}(\vec{c}))$. Note that because we are in the all-in setting, $W(\vec{x}(\vec{c}))$ starts as a non-empty set, and our \textbf{while} condition guarantees that it remains so. Fix an agent $i$ in $W(\vec{x}(\vec{c}))$ and suppose $i$ unilaterally deviates and declares $c_i' < c_i$. First of all, note that for every $j \in N\setminus\{i\}$ it holds that 
\begin{equation*}
 \frac{v_j(k)}{\OPTFk(c_i', \vec{c}_{-\{j,i\}})} \leq \frac{v_j(k)}{\OPTFk(\vec{c}_{-j})} \leq \frac{v_{i^*}(k)}{\OPTFk(\vec{c}_{-i^*})}\,,
 \end{equation*}
where the first inequality follows from the monotonicity of $\OPTFk(\cdot)$ with respect to each cost, whereas for $i$ herself we have 
\begin{equation*}
 \frac{v_i(k)}{\OPTFk(\vec{c}_{-i})} \leq \frac{v_{i^*}(k)}{\OPTFk(\vec{c}_{-i^*})}\,,
 \end{equation*}
from before she deviated.

Therefore, even if some agent $\hat{\imath} \neq i^*$ is chosen in line \ref{line:i*} under the profile $(c_i', \vec{c}_{-i})$, the mechanism will always execute the \textbf{else} case. Moreover, by the way $\vec{x}^*\!(c_i', \vec{c}_{-i})$ is constructed, the corresponding initial integral solution $\vec{x}$ in line \ref{line:init_rounding} will have an $i$-th coordinate at least as large as before.
Now consider what happens to the \textbf{while} condition of the mechanism. On the one hand, it can only be that $\OPTFk(c_i', \vec{c}_{-i}) \geq \OPTFk(\vec{c})$. On the other hand, the marginal value-per-cost ratios of agent $i$ under the profile $(c_i',\vec{c}_{-i})$ have a better position in the ordering constructed by \MultiMechanism. 
Therefore, agent $i$ will be hired to at least the same extent or more, i.e., for the final allocations we have $x_i(c_i', \vec{c}_{i}) \geq x_i(\vec{c})$, proving monotonicity. 
\end{proof}

Since the payments are computed according to \eqref{eq:payment-id-BF}, we conclude that the mechanism is truthful and individually rational. We continue by showing that \MultiMechanism achieves the claimed approximation guarantee.

\begin{lemma}
 \label{lemma:guarantee}
 Fix any $\alpha \in \big(0, \frac{3-\sqrt{5}}{2}\big)$. For the allocation $\vec{x}(\vec{c})$ computed by \MultiMechanism with input the cost profile $\vec{c}$, it holds that $\V(\vec{x}(\vec{c})) \geq \alpha \OPTIk(\vec{c})$.
\end{lemma}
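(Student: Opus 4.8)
The plan is to prove the stronger statement $\V(\vec{x}(\vec{c})) \ge \alpha\,\OPTFk(\vec{c})$, which suffices since $\OPTFk(\vec{c}) \ge \OPTIk(\vec{c})$. I would split the analysis along the two branches of the mechanism.

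\emph{The if-branch (line~\ref{line:ifToCheckToPicki*}).} Here $\V(\vec{x}) = v_{i^*}(k)$, so I only need to relate $v_{i^*}(k)$ to $\OPTFk(\vec{c})$. The key observation is a near-subadditivity of the fractional optimum under agent removal: zeroing out $i^*$'s allocation in an optimal solution for $\vec{c}$ yields a feasible solution for $\vec{c}_{-i^*}$, and since $i^*$'s contribution is at most $v_{i^*}(k)$, this gives $\OPTFk(\vec{c}) \le \OPTFk(\vec{c}_{-i^*}) + v_{i^*}(k)$. Combining this with the branch condition $v_{i^*}(k) \ge \tfrac{\alpha}{1-\alpha}\OPTFk(\vec{c}_{-i^*})$ and solving the resulting linear inequality yields $v_{i^*}(k) \ge \alpha\,\OPTFk(\vec{c})$. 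Note that this step works for \emph{every} $\alpha\in(0,1)$; the restriction on $\alpha$ will only surface in the other branch.

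\emph{The else-branch.} Here I would establish two things: that the initial integral allocation $\vec{x}=(\floor{x_1^*},\dots,\floor{x_n^*})$ already satisfies $\V(\vec{x}) \ge \alpha\,\OPTFk(\vec{c})$, and that the while-loop preserves this. For the first claim, recall from Fact~\ref{fact:opt-knapsack-fact} that $\vec{x}^*$ has a single fractional coordinate $j$, so the only value lost in rounding down is the one marginal term $m_j(\ceil{x_j^*})\,(x_j^*-\floor{x_j^*})$. By concavity this marginal is at most $m_j(1)=v_j(1)\le v_j(k)$, and because we are in the else-branch, the maximality of $i^*$ in line~\ref{line:i*} forces $v_j(k) < \tfrac{\alpha}{1-\alpha}\OPTFk(\vec{c}_{-j}) \le \tfrac{\alpha}{1-\alpha}\OPTFk(\vec{c})$ for every agent $j$. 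Hence the dropped term is strictly below $\tfrac{\alpha}{1-\alpha}\OPTFk(\vec{c})$, giving $\V(\vec{x}) > \tfrac{1-2\alpha}{1-\alpha}\OPTFk(\vec{c})$ for the initial allocation. I would then verify $\tfrac{1-2\alpha}{1-\alpha}\ge\alpha$, which rearranges to $\alpha^2-3\alpha+1\ge 0$; this holds precisely for $\alpha\le\tfrac{3-\sqrt5}{2}$, exactly matching the hypothesis and pinning down the threshold. Finally, the while-loop on line~\ref{line:whileSort} decrements a coordinate only when the post-removal value still exceeds $\alpha\,\OPTFk(\vec{c})$, so the invariant $\V(\vec{x})\ge\alpha\,\OPTFk(\vec{c})$ is maintained through termination (and the all-in assumption together with this invariant keeps $W(\vec{x})$ nonempty). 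Both branches therefore give $\V(\vec{x}) \ge \alpha\,\OPTFk(\vec{c}) \ge \alpha\,\OPTIk(\vec{c})$.

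The main obstacle is the else-branch estimate of the initial value: one must recognize that rounding the fractional optimum down costs only a single marginal, bound that marginal through concavity together with the else-condition, and then track how this worst-case accounting produces the quadratic $\alpha^2-3\alpha+1$ whose smaller root is the threshold $\tfrac{3-\sqrt5}{2}$. By contrast, the if-branch manipulation and the loop-invariant argument are comparatively routine.
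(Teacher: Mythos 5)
Your proof is correct and follows essentially the same route as the paper: the if-branch argument is identical, and in the else-branch you bound the single marginal lost in rounding down $\vec{x}^*$ by $v_f(k) < \tfrac{\alpha}{1-\alpha}\OPTFk(\vec{c})$ via the maximality of $i^*$ and the branch condition, which is exactly the paper's key estimate (the paper merely packages it as a proof by contradiction, deriving $m_f(\ceil{x^*_f}) > (1-\alpha)\OPTFk(\vec{c})$ and $m_f(\ceil{x^*_f}) < \tfrac{\alpha}{1-\alpha}\OPTFk(\vec{c})$ and extracting the same quadratic $\alpha^2-3\alpha+1$). Your direct loop-invariant phrasing is a clean, equivalent reorganization of the paper's case split on whether the while-loop executes.
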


\begin{proof}
For a cost profile $\vec{c}$ we will prove the claimed guarantee against the optimal value of the fractional relaxation of the bounded knapsack instance, i.e., we will show that $\V(\vec{x}(\vec{c})) \geq \alpha \OPTFk(\vec{c})$. As mentioned in Section \ref{subsec:multi-def}, this establishes our guarantee since $\OPTFk(\vec{c}) \geq \OPTIk(\vec{c})$. Again we distinguish the following two cases:
\smallskip 

\noindent\underline{Case 1:} $v_{i^*}(k) \geq \frac{\alpha}{1-\alpha} \cdot \OPTFk(\vec{c}_{-i^*})$. We directly have
\begin{equation*}
 v_{i^*}(k) \geq \frac{\alpha}{1-\alpha} \cdot \OPTFk(\vec{c}_{-i^*}) \geq \frac{\alpha}{1-\alpha} \big(\OPTFk(\vec{c})-v_{i^*}(k) \big)\,,
 \end{equation*}
 where the second inequality follows from the fact that $\OPTFk(\vec{c}_{-i}) + v_i(k) \geq \OPTFk(\vec{c})$ for all $i \in N$. Rearranging terms leads to $v_{i^*}(k) \geq \alpha \OPTFk(\vec{c})$ which concludes the case since here $v_{i^*}(k) = \V(\vec{x}(\vec{c}))$.
\smallskip

\noindent\underline{Case 2:} $v_{i^*}(k) < \frac{\alpha}{1-\alpha} \cdot \OPTFk(\vec{c}_{-i^*})$. In this case, whenever the \textbf{while} loop of \MultiMechanism runs at least once, we immediately obtain that $\V(\vec{x}(\vec{c})) \geq \alpha \OPTFk(\vec{c})$, for any $\alpha < 1$. 
Alternatively, consider an instance for which the \textbf{while} condition evaluates to \emph{False} the first time it is checked. In this case, the initial version of $\vec{x}$ cannot be equal to $\vec{x}^*\!(\vec{c})$. As a consequence, under $\vec{x}^*\!(\vec{c})$, the last level of service of some agent, say agent $f$, is fractionally included; in particular, this means that $B-\sum_{i \in N}\floor{x^*_i}c_i < c_f$. We argue that, for $\alpha \leq \frac{3-\sqrt{5}}{2}$, we still obtain $\V(\vec{x}(\vec{c})) \geq \alpha \OPTFk(\vec{c})$ and the lemma follows. 
Toward a contradiction, suppose that this is not the case. This implies that,
\begin{equation*}
 \alpha \OPTFk(\vec{c}) > \V(\vec{x}(\vec{c})) = \OPTFk(\vec{c}) - m_f(\ceil{x^*_f})\,\frac{B-\sum_{i \in N}\floor{x^*_i}c_i}{c_f} > \OPTFk(\vec{c}) - m_f(\ceil{x^*_f})\,,
\end{equation*}
which, by rearranging terms, yields 
\begin{equation}\label{ineq:m_f_lower}
m_f(\ceil{x^*_f})> (1-\alpha)\OPTFk(\vec{c})\,.
\end{equation}
At the same time, we have
\begin{equation}\label{ineq:m_f_upper}
\frac{m_f(\ceil{x^*_f})}{\OPTFk(\vec{c})} \leq \frac{v_f(k)}{\OPTFk(\vec{c}_{-f})} \leq \frac{v_{i^*}(k)}{\OPTFk(\vec{c}_{-i^*})} < \frac{\alpha}{1-\alpha}\,,
\end{equation}
where the first inequality follows from straightforward bounds on both the numerator and the denominator, the second inequality holds by the choice of $i^*$, and the third one is the very definition of Case 2.

Now, by combining inequalities \eqref{ineq:m_f_lower} and \eqref{ineq:m_f_upper} we obtain that $\alpha$ must be such that $\frac{\alpha}{1-\alpha} > 1-\alpha$, or equivalently, that $\alpha > (3-\sqrt{5})/{2}$, a contradiction. 
\end{proof}

\subsection{Ensuring the Budget-Feasibility of \texorpdfstring{\MultiMechanism}{\textsc{Sort-\&-Reject}}}
\label{sec:BF-Greedy}

It remains to prove that \MultiMechanism is budget-feasible. We introduce some auxiliary notation: 
Consider a cost profile $\vec{c}$ and an agent $i \in W(\vec{x}(\vec{c}))$. 
Let $j \in \set{1, \dots, x_i(\vec{c})}$ be any level of service among the ones for which agent $i$ is hired. Intuitively, we refer to the \emph{critical payment} $p_{ij}(\vec{c}_{-i})$ for level of service $j$ of $i$ as the largest cost $p$ that $i$ can declare and still have their level of service $j$ picked by the mechanism (see Figure~\ref{fig:payments} for an illustration). More formally, we define $Q_{ij}(\vec{c}_{-i})$ as the set of all points $q$ satisfying 
$\lim_{z \rightarrow q^-} x_i(z, \vec{c}_{-i}) \ge j$ and $\lim_{z \rightarrow q^+} x_i(z, \vec{c}_{-i}) \le j$ 
and let $p_{ij}(\vec{c}_{-i}) = \sup(Q_{ij}(\vec{c}_{-i}))$. Such a number $q$ must always exist {(e.g., $c_i \in Q_{ij}(\vec{c}_{-i}$)} and $c_i \le q \le \frac{B}{k}$). To see the latter, note that $x_i(c_i, \vec{c}_{-i}) \ge j$, which implies that $c_i \le q$, and that $x_i(z, \vec{c}_{-i}) = 0 < j$ for all $z > \frac{B}{k}$ (by our all-in assumption that would prune any agent declaring cost more than $\frac{B}{k}$), which implies that $q \le \frac{B}{k}$. Given that $Q_{ij}(\vec{c}_{-i})$ is nonempty and bounded from above, $p_{ij}(\vec{c}_{-i})$ always exists.
It is easy to see that the payment of an agent $i$ can be written as the sum over these critical payments for the levels of service $i$ was hired for.

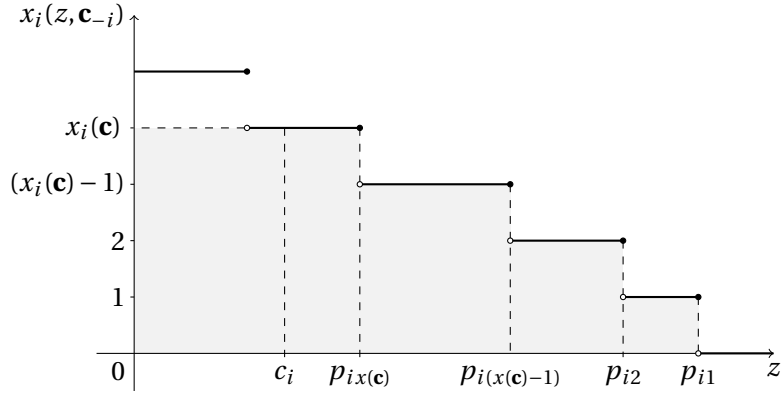
\begin{figure}[!ht]
\centering
\begin{tikzpicture}[smooth,scale=1]
\filldraw[draw=none,dashed,fill=gray!10] (0,0) rectangle (2,3); \filldraw[draw=none,dashed,fill=gray!10] (2,0) rectangle (3,3); \filldraw[draw=none,dashed, fill=gray!10] (3,0) rectangle (5, 2.25); \filldraw[draw=none,dashed, fill=gray!10] (5,0) rectangle (6.5, 1.5); \filldraw[draw=none,dashed, fill=gray!10] (6.5,0) rectangle (7.5, 0.75);
\draw[->] (-0.5,0) -- (8.5,0) node[below] {$z$};
\draw[->] (0,-0.5) -- (0,4.5) node[left] {$x_i(z,\textbf{c}_{-i})$};
\path (0,0) node[below left] {$0$}; \draw[dashed] (1.47,3) -- (-0.05,3) node[left] {$x_i(\mathbf{c})$}; \draw (-0.05,2.25) -- (0,2.25) node[left] {$(x_i(\mathbf{c})-1)$}; \draw (-0.05,1.5) -- (0,1.5) node[left] {$2$}; \draw (-0.05,0.75) -- (0,0.75) node[left] {$1$}; 
\draw[dashed] (2,3) -- (2,-0.05) node[below] {$c_{i}$}; \draw[dashed] (3,3) -- (3,-0.05) node[below] {$p_{ix(\vec{c})}$}; \draw[dashed] (5,2.25) -- (5,-0.05) node[below] {$p_{i(x(\vec{c})-1)}$} ; \draw[dashed] (6.5,1.5) -- (6.5,-0.05) node[below] {$p_{i2}$}; \draw[dashed] (7.5,0.75) -- (7.5,-0.05) node[below] {$p_{i1}$};
\draw[-,thick] (0,3.75) -- (1.5,3.75); \filldraw [] (1.5,3.75) circle (1pt); 
\filldraw [fill=white] (1.5,3) circle (1pt); \draw[-,thick] (1.53,3) -- (3,3); \filldraw [] (3,3) circle (1pt); 
\filldraw [fill=white] (3,2.25) circle (1pt); \draw[-,thick] (3.03,2.25) -- (5,2.25); \filldraw [] (5,2.25) circle (1pt); 
\filldraw [fill=white] (5,1.5) circle (1pt); \draw[-,thick] (5.03,1.5) -- (6.5,1.5); \filldraw [] (6.5,1.5) circle (1pt); 
\filldraw [fill=white] (6.5,0.75) circle (1pt); \draw[-,thick] (6.53,0.75) -- (7.5,0.75); \filldraw [] (7.5,0.75) circle (1pt); 
\filldraw [fill=white] (7.5,0) circle (1pt); \draw[-,thick] (7.53,0) -- (8.5,0); 
\end{tikzpicture}
\caption{Illustration of the critical payments of agent $i$.\label{fig:payments}}
\end{figure}

\begin{lemma}\label{lemma:payment-breakdown}
Let $\vec{c}$ be a cost profile and assume $i \in W(\vec{x}(\vec{c}))$. It holds that $\displaystyle p_i(\vec{c})= \sum_{j=1}^{x_i(\vec{c})}p_{ij}(\vec{c}_{-i}).$
\end{lemma}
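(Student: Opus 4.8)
The plan is to prove that the total payment $p_i(\vec{c})$ computed via Myerson's formula \eqref{eq:payment-id-BF} decomposes additively into the critical payments $p_{ij}(\vec{c}_{-i})$ for each level $j$ that agent $i$ is hired for. The key observation is that the allocation rule $x_i(\cdot, \vec{c}_{-i})$, viewed as a function of agent $i$'s declared cost $z$, is a monotone non-increasing step function that takes integer values in $\{0, 1, \dots, k\}$ (this is guaranteed by Lemma~\ref{lemma:monotone} together with the discrete range of the allocation rule in the $k$-level model). The critical payment $p_{ij}(\vec{c}_{-i})$ is precisely the $z$-value at which this step function drops below level $j$, i.e., the location of the downward jump from (at least) $j$ to (at most) $j-1$.

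First I would recall from Theorem~\ref{thm:archerPayments} that, since $i \in W(\vec{x}(\vec{c}))$, we have $x_i(\vec{c}) \ge 1$ and the payment is
\begin{equation*}
p_i(\vec{c}) = c_i\, x_i(\vec{c}) + \int_{c_i}^{\infty} x_i(z, \vec{c}_{-i})\, dz\,.
\end{equation*}
The strategy is then to evaluate the integral directly by exploiting the step structure. Writing $x_i(\vec{c}) = t$ for brevity, the function $x_i(z, \vec{c}_{-i})$ equals $t$ on the interval $[c_i, p_{it})$, equals $t-1$ on $[p_{it}, p_{i(t-1)})$, and in general equals $j-1$ on $[p_{ij}, p_{i(j-1)})$ for each $j \in \{1, \dots, t\}$, with the convention $p_{i(t+1)} := c_i$ at the left end (so the first interval starts at $c_i$) and the function being $0$ for $z > p_{i1}$. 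This is exactly the staircase depicted in Figure~\ref{fig:payments}. The ordering $c_i \le p_{it} \le p_{i(t-1)} \le \dots \le p_{i1} \le \frac{B}{k}$ follows from monotonicity and the all-in pruning bound established in the definition of $p_{ij}$.

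Given this, I would compute the integral as a sum of rectangular areas. The integral of a non-increasing step function equals the area under the staircase, which I would reorganize by summing over horizontal ``slabs'': for each level $j \in \{1, \dots, t\}$, the slab of height $1$ extends in $z$ from $c_i$ up to $p_{ij}$, contributing area $p_{ij} - c_i$. Summing over $j$ gives
\begin{equation*}
\int_{c_i}^{\infty} x_i(z, \vec{c}_{-i})\, dz = \sum_{j=1}^{t}\bigl(p_{ij}(\vec{c}_{-i}) - c_i\bigr) = \sum_{j=1}^{t} p_{ij}(\vec{c}_{-i}) - t\, c_i\,.
\end{equation*}
Adding back the $c_i\, x_i(\vec{c}) = t\, c_i$ term from Myerson's formula cancels the $-t\,c_i$ exactly, yielding $p_i(\vec{c}) = \sum_{j=1}^{t} p_{ij}(\vec{c}_{-i})$, which is the claim.

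The one point requiring care — and the main obstacle — is justifying that the step function has no pathologies: one must confirm that the jump points are precisely the suprema $p_{ij}(\vec{c}_{-i})$ as defined via the sets $Q_{ij}$, that the function genuinely attains every integer value between $0$ and $t$ without skipping, and that boundary/measure-zero issues at the jump points do not affect the integral. Since the excerpt has already argued each $p_{ij}$ is well-defined, finite, and lies in $[c_i, \frac{B}{k}]$, I would invoke the one-sided-limit characterization of $Q_{ij}$ to identify each $p_{ij}$ as the right endpoint of the interval on which $x_i(z,\vec{c}_{-i}) \ge j$; the level set $\{z : x_i(z,\vec{c}_{-i}) \ge j\}$ is, by monotonicity, an interval $[c_i, p_{ij})$ (up to its single boundary point), and the area computation follows by the standard layer-cake / Fubini decomposition of the integral of an integer-valued monotone function. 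The fact that values are not skipped is not actually needed for the area argument to go through — even if $x_i$ drops by more than one at some point, the slab decomposition remains valid with the corresponding $p_{ij}$'s coinciding — so the argument is robust to that concern.
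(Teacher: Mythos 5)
Your proof is correct and takes essentially the same approach as the paper's: both start from Myerson's payment identity and evaluate the integral of the integer-valued, non-increasing step function $x_i(\cdot,\vec{c}_{-i})$ whose jump points are the critical payments. The only difference is bookkeeping --- you decompose the area under the staircase into horizontal slabs (layer-cake), getting $\sum_{j}\bigl(p_{ij}(\vec{c}_{-i})-c_i\bigr)$ directly, whereas the paper integrates over the vertical strips $[p_{i(j+1)},p_{ij}]$ on which the allocation is constant and lets the sum telescope; your observation that coinciding $p_{ij}$'s automatically handle any skipped levels is a point the paper leaves implicit.
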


\begin{proof}
Starting with \eqref{eq:payment-id-BF} from Theorem \ref{thm:archerPayments}, we obtain that
\[
 p_i(\vec{c}) = c_ix_i(\vec{c}) + \int_{c_i}^\infty \!\!\!x_i(z, \vec{c}_{-i})dz = c_ix_i(\vec{c}) + \!\!\sum_{j=1}^{x_i(\vec{c})-1}\!\!\int_{p_{i(j+1)}(\vec{c}_{-i})}^{p_{ij}(\vec{c}_{-i})} \!\!\!jdz +\int_{c_i}^{p_{ix_i(\vec{c})}(\vec{c}_{-i})} \!\!\!\!\!x_i( \vec{c})dz =\sum_{j=1}^{x_i(\vec{c})}p_{ij}(\vec{c}_{-i})\,.\qedhere
\]
\end{proof}

Lemma \ref{lemma:key} is the main technical tool needed to establish budget-feasibility for the \textbf{else} part of the \MultiMechanism mechanism. 
It is also used in the proofs of Theorems \ref{thm:large-market-bound} and \ref{thm:interpolation-thm} (implicitly in the latter) for the large market and  the divisible agent setting, respectively.

\begin{lemma}
 \label{lemma:key}
 Fix any $\alpha \in (0,1)$. Let $\vec{c}$ be a cost profile such that $v_{i^*}(k) < \frac{\alpha}{1-\alpha}\OPTFk(\vec{c}_{-i^*})$. Then,
\begin{equation*}
 \sum_{i=1}^np_i(\vec{c}) < \frac{B}{1-\alpha}\left(\frac{m_{\ell}(x_{\ell}(\vec{c}))}{\OPTFk(\vec{c}_{-\ell})} + \frac{\alpha}{1-\alpha} \right)\,.
 \end{equation*}
\end{lemma}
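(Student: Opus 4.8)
The plan is to reduce the claim to a bound on the individual critical payments and then control each of them through a threshold analysis. By Lemma~\ref{lemma:payment-breakdown}, each winning agent's payment splits over the levels for which they are hired, so
\[
\sum_{i=1}^n p_i(\vec{c}) \;=\; \sum_{i \in W(\vec{x}(\vec{c}))}\ \sum_{j=1}^{x_i(\vec{c})} p_{ij}(\vec{c}_{-i}),
\]
and it suffices to bound the sum of the critical payments $p_{ij}(\vec{c}_{-i})$ over all hired levels $(i,j)$. I would fix such a level and study the runs of \MultiMechanism on the profiles $(z,\vec{c}_{-i})$ as $z$ increases from $c_i$ up to the threshold $p_{ij}(\vec{c}_{-i})$, the point at which level $j$ of agent $i$ is dropped.

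The first step is to identify what makes the level drop. There are two causes, and both must be handled. Either (i) raising $i$'s cost eventually makes some other agent pass the test in line~\ref{line:i*} and become the sole winner of the \textbf{if} branch, in which case $x_i=0$ already and $p_{ij}(\vec{c}_{-i})$ is capped by that switching cost; or (ii) the run stays in the \textbf{else} branch and level $j$ is removed by the backwards-greedy loop. For the main case (ii), I would argue that at the threshold level $j$ of $i$ is exactly the globally least dense level still kept (higher levels of $i$ have smaller marginals by concavity and have already been discarded), so that the condition in line~\ref{line:whileSort} is tight: writing $\vec{x}'$ for the allocation at that point, $\V(\vec{x}') - m_i(j) = \alpha\,\OPTFk(z,\vec{c}_{-i})$. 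The key structural fact, inherited from Algorithm~\ref{algo:fk-knapsack}, is that $\vec{x}'$ consists precisely of the levels of marginal value-per-cost at least $m_i(j)/z$. Splitting the greedy-optimal solution at this density and bounding its low-density tail by $\tfrac{m_i(j)}{z}\,(B-\mathrm{cost}(\vec{x}'))$ gives $\OPTFk(z,\vec{c}_{-i}) < \V(\vec{x}') + \tfrac{m_i(j)}{z}(B-\mathrm{cost}(\vec{x}'))$; combining this with the tight stopping identity and solving for $z$ produces a per-level estimate that relates $p_{ij}(\vec{c}_{-i})$ to $m_i(j)$, $B$, and $\OPTFk(z,\vec{c}_{-i})$. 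Here the denominator stays controlled because $m_i(j)\le v_i(k)$ and, by the hypothesis $v_{i^*}(k)<\tfrac{\alpha}{1-\alpha}\OPTFk(\vec{c}_{-i^*})$ together with the choice of $i^*$ in line~\ref{line:i*}, $m_i(j)/\OPTFk(z,\vec{c}_{-i}) \le v_i(k)/\OPTFk(\vec{c}_{-i}) < \tfrac{\alpha}{1-\alpha}$.

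The final step is to aggregate. Two features should drive the two terms on the right-hand side: the budget bound $\sum_{i}\sum_{j=1}^{x_i(\vec{c})} c_i = \sum_i c_i x_i(\vec{c}) \le B$ for the returned allocation, which yields the $\tfrac{\alpha}{1-\alpha}$ correction, and the fact that the least dense hired level is the $x_\ell$-th level of agent $\ell$, whose density $m_\ell(x_\ell)/c_\ell$ lower-bounds every other hired density and surfaces as the term $\tfrac{m_\ell(x_\ell)}{\OPTFk(\vec{c}_{-\ell})}$. I would then translate the perturbed-profile quantities $\OPTFk(z,\vec{c}_{-i})$ back to true-profile quantities using the monotonicity of $\OPTFk(\cdot)$ and the ordering induced by $i^*$, so that the sum collapses to $\tfrac{B}{1-\alpha}\big(\tfrac{m_\ell(x_\ell)}{\OPTFk(\vec{c}_{-\ell})} + \tfrac{\alpha}{1-\alpha}\big)$.

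The main obstacle is exactly this bookkeeping across the perturbed profiles. As agent $i$'s declared cost increases, the winner set $W$, the identity of the least dense hired level, the single fractional coordinate guaranteed by Fact~\ref{fact:opt-knapsack-fact}, the value $\OPTFk$, and even which branch of line~\ref{line:i*} executes can all change; unlike in Lemma~\ref{lemma:monotone}, raising a cost \emph{increases} the competing agents' ratios, so one cannot simply assert that the run remains in the \textbf{else} branch. Treating the possible switch to the \textbf{if} branch as a separate subcase, accounting for the fractional level carefully (it is what forces the strict inequality), and showing that the per-level estimates aggregate to precisely the two-term bound—rather than to something looser—are the delicate points; separating the worst hired level $\ell$ from the remaining levels is what ultimately makes the aggregation tight.
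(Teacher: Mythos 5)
Your overall strategy coincides with the paper's: decompose the total payment into per-level critical payments via Lemma~\ref{lemma:payment-breakdown}, bound each $p_{ij}(\vec{c}_{-i})$ by $\frac{B}{1-\alpha}\cdot\frac{m_i(j)}{\OPTFk(\vec{c}_{-i})}$ using the greedy density ordering and the stopping condition, and aggregate after splitting off the least dense hired level of $\ell$. One step in your per-level analysis is shakier than it needs to be: you assert that at the threshold $z=p_{ij}(\vec{c}_{-i})$, level $j$ of agent $i$ is \emph{the} globally least dense kept level, so that the condition in line~\ref{line:whileSort} becomes an equality in $m_i(j)$. Your parenthetical only rules out higher levels of $i$ itself; levels of other agents with lower density can still be kept (the kept set is a prefix of the density order), and $\OPTFk(z,\vec{c}_{-i})$ and the initial allocation jump discontinuously in $z$, so neither the identification nor the equality is justified. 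The paper avoids this entirely: for any $p\in Q_{ij}(\vec{c}_{-i})$ it only uses that level $j$ of $i$ is kept, hence $m_i(j)/p \ge m_{\lambda}(x_{\lambda}(p,\vec{c}_{-i}))/c_{\lambda}$ for the \emph{actual} least dense kept agent $\lambda$ of the run on $(p,\vec{c}_{-i})$, and applies the cost bound of Lemma~\ref{lemma:magic-lemma} to $\lambda$ rather than to $(i,j)$.

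The genuine gap is in the aggregation. You attribute the $\frac{\alpha}{1-\alpha}$ term to the budget constraint $\sum_i c_i x_i(\vec{c})\le B$ of the returned allocation; that constraint plays no role here. After summing the per-level bounds one is left with $\frac{B}{1-\alpha}\sum_{i\in W(\vec{x}(\vec{c}))} \frac{v_i(x_i(\vec{c}))}{\OPTFk(\vec{c}_{-i})}$, and to collapse this to $\frac{m_{\ell}(x_{\ell}(\vec{c}))}{\OPTFk(\vec{c}_{-\ell})}+\frac{\alpha}{1-\alpha}$ one needs, for every winner $i$, the lower bound $\OPTFk(\vec{c}_{-i}) > \frac{1-\alpha}{\alpha}\big(\V(\vec{x}(\vec{c}))-m_{\ell}(x_{\ell})\big)$ (the paper's Lemma~\ref{lemma:opt-lower}), which comes from the stopping condition $\V(\vec{x}(\vec{c}))-m_{\ell}(x_{\ell})<\alpha\OPTFk(\vec{c})$ combined with $\OPTFk(\vec{c})\le \OPTFk(\vec{c}_{-i})+v_i(k)\le \frac{1}{1-\alpha}\OPTFk(\vec{c}_{-i})$, the last inequality using the hypothesis on $i^*$. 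Substituting this into the denominators of all terms except $m_{\ell}(x_{\ell})/\OPTFk(\vec{c}_{-\ell})$ makes the remaining sum of marginals cancel exactly against $\V(\vec{x}(\vec{c}))-m_{\ell}(x_{\ell})$. The per-level estimate you do derive, $m_i(j)/\OPTFk(\vec{c}_{-i})<\frac{\alpha}{1-\alpha}$, cannot substitute for this: summed over hired levels it yields a bound growing with their number, not the claimed two-term bound.
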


Observe that using Lemma \ref{lemma:key}, we can determine a range for values of $\alpha$, for which \MultiMechanism is budget-feasible:

\begin{lemma}
 \label{lemma:multi-bf}
 For $\alpha \in \Big(0, \frac{1}{2+\sqrt{3}}\Big]$, \MultiMechanism is budget-feasible.
\end{lemma}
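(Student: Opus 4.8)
The plan is to split the argument along the two branches of \MultiMechanism, exactly as in the proofs of Lemmas~\ref{lemma:monotone} and~\ref{lemma:guarantee}, and to verify $\sum_{i\in N} p_i(\vec{c}) \le B$ in each. All the genuine difficulty lives in the else-branch, but this has been packaged into Lemma~\ref{lemma:key}; so the remaining work is a clean treatment of the if-branch, followed by the right substitution into Lemma~\ref{lemma:key} and a one-variable inequality that pins down the threshold $\alpha = \tfrac{1}{2+\sqrt3}$.

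First I would dispose of Case~1, i.e.\ $v_{i^*}(k) \ge \tfrac{\alpha}{1-\alpha}\OPTFk(\vec{c}_{-i^*})$, where $i^*$ is the sole winner and is hired for all $k$ levels. Here the only nonzero payment is $p_{i^*}(\vec{c})$, which by Lemma~\ref{lemma:payment-breakdown} decomposes as $\sum_{j=1}^{k} p_{i^* j}(\vec{c}_{-i^*})$. The key observation, already recorded in the discussion preceding Lemma~\ref{lemma:payment-breakdown}, is that every critical payment satisfies $p_{ij}(\vec{c}_{-i}) \le B/k$, since the all-in assumption prunes any agent declaring cost above $B/k$ and hence $x_i(z,\vec{c}_{-i}) = 0$ for $z > B/k$. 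Summing the $k$ critical payments of $i^*$ gives $p_{i^*}(\vec{c}) \le k\cdot \tfrac{B}{k} = B$, so Case~1 is budget-feasible for every $\alpha\in(0,1)$.

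For Case~2, i.e.\ $v_{i^*}(k) < \tfrac{\alpha}{1-\alpha}\OPTFk(\vec{c}_{-i^*})$, I would invoke Lemma~\ref{lemma:key} directly and then bound the term $\tfrac{m_\ell(x_\ell(\vec{c}))}{\OPTFk(\vec{c}_{-\ell})}$. Using that $v_\ell(k)$ is the sum of the (nonnegative) marginals, so $m_\ell(x_\ell(\vec{c})) \le v_\ell(k)$, then the maximality in the choice of $i^*$ in line~\ref{line:i*}, and finally the Case~2 hypothesis, one obtains the chain
\[
\frac{m_\ell(x_\ell(\vec{c}))}{\OPTFk(\vec{c}_{-\ell})} \le \frac{v_\ell(k)}{\OPTFk(\vec{c}_{-\ell})} \le \frac{v_{i^*}(k)}{\OPTFk(\vec{c}_{-i^*})} < \frac{\alpha}{1-\alpha}\,.
\]
Substituting this into Lemma~\ref{lemma:key} yields $\sum_{i\in N} p_i(\vec{c}) < \tfrac{B}{1-\alpha}\bigl(\tfrac{\alpha}{1-\alpha} + \tfrac{\alpha}{1-\alpha}\bigr) = \tfrac{2\alpha}{(1-\alpha)^2}\,B$.

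It then remains to check that $\tfrac{2\alpha}{(1-\alpha)^2} \le 1$. This is equivalent to $\alpha^2 - 4\alpha + 1 \ge 0$, whose relevant root is $2-\sqrt3$; since the parabola opens upward, the inequality holds for all $\alpha \le 2-\sqrt3 = \tfrac{1}{2+\sqrt3}$. Hence $\sum_{i\in N} p_i(\vec{c}) < B$ throughout Case~2 for $\alpha$ in the claimed range, completing the proof. I expect no serious obstacle once Lemma~\ref{lemma:key} is in hand: the one place to be careful is Case~1, where a naive worry is that $p_{i^*}$ could exceed the budget, the resolution being the per-level cap $B/k$ coming from the all-in pruning, which is precisely what makes the sole-winner payment land exactly on budget.
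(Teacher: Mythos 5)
Your proposal is correct and follows essentially the same route as the paper's proof: Case~1 is handled via Lemma~\ref{lemma:payment-breakdown} and the per-level cap $p_{i^*j}(\vec{c}_{-i^*}) \le B/k$ from the all-in assumption, and Case~2 substitutes the chain $m_\ell(x_\ell(\vec{c})) \le v_\ell(k)$, the maximality of $i^*$, and the case hypothesis into Lemma~\ref{lemma:key} to reach $\frac{2\alpha}{(1-\alpha)^2}B$, with the same quadratic giving the threshold $2-\sqrt{3} = \frac{1}{2+\sqrt{3}}$. No gaps.
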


\begin{proof}
Let $\vec{c}$ be a cost profile.
If $v_{i^*}(k) \geq \frac{\alpha}{1-\alpha} \OPTFk(\vec{c}_{-i^*})$, agent $i^*$ is the only agent hired and for $k$ levels of services. By Lemma \ref{lemma:payment-breakdown}, we obtain
\begin{equation*}
 p_{i^*}(\vec{c}) = \sum_{j=1}^k p_{i^*\!j}(\vec{c}_{-i^*})= k \cdot p_{i^*\!k}(\vec{c}_{-i^*}) \le k\,\frac{B}{k}=B\,.
\end{equation*}
The second equality follows from the definition of the critical payments and from the fact that the allocation of $i^*$ will only be smaller than $k$ when another agent becomes $i^*$ or when $c'_i > \frac{B}{k}$. The inequality follows from the fact that 
$p_{i^*k}(\vec{c}_{-i^*}) \le \frac{B}{k}$ as argued above.

Consider now the other case, i.e., a profile $\vec{c}$ such that $v_{i^*}(k) < \frac{\alpha}{1-\alpha} \OPTFk(\vec{c}_{-i^*})$. In this case, we may invoke Lemma \ref{lemma:key} and, thus, we get
\begin{align*}
 \sum_{i=1}^np_i(\vec{c}) &< \frac{B}{1-\alpha}\left(\frac{m_{\ell}(x_{\ell}(\vec{c}))}{\OPTFk(\vec{c}_{-\ell})} + \frac{\alpha}{1-\alpha} \right)\leq \frac{B}{1-\alpha}\left(\frac{v_{\ell}(k)}{\OPTFk(\vec{c}_{-\ell})} + \frac{\alpha}{1-\alpha} \right)\\
 &\leq \frac{B}{1-\alpha}\left(\frac{v_{i^*}(k)}{\OPTFk(\vec{c}_{-i^*})} + \frac{\alpha}{1-\alpha} \right) < \frac{B}{1-\alpha}\left(\frac{\alpha}{1-\alpha} + \frac{\alpha}{1-\alpha} \right) \leq B\frac{2\alpha}{(1-\alpha)^2}\,.
\end{align*}
The second inequality follows by observing that $m_{\ell}(x_{\ell}(\vec{c})) \leq v_{\ell}(k)$ since $v_{\ell}(\cdot)$ is non-decreasing. The next two inequalities are due to the definition of $i^*$ and the fact that, in this case, $v_{i^*}(k) < \frac{\alpha}{1-\alpha} \OPTFk(\vec{c}_{-i^*})$.

To obtain the budget-feasibility \MultiMechanism, we must ensure that $\frac{2\alpha}{(1-\alpha)^2} \leq 1$ or, equivalently, that $\alpha \leq \frac{1}{2+\sqrt{3}}$. The proof follows.
\end{proof}

It is clear that Lemma \ref{lemma:multi-bf} along with Lemmata \ref{lemma:monotone} and \ref{lemma:guarantee} conclude the proof of Theorem \ref{thm:multi-mechanism}.

The rest of this section is devoted to proving Lemma \ref{lemma:key}. We start by presenting a series of auxiliary statements, which will prove to be useful in our analysis. The purpose of these statements is to characterize and give upper bounds on the individual payments of winning agents, whenever the \textbf{else} part of the mechanism is executed. We begin with Lemma \ref{lemma:magic-lemma}, in which we derive an upper bound on the costs of winning agents. Observe that the final value of $\ell$ is the index of the agent with the smallest value-per-cost ratio in the allocation computed by the mechanism.

\begin{lemma}
 \label{lemma:magic-lemma}
 Fix any $\alpha \in (0, 1)$ and let $\vec{c}$ be a cost profile for an instance with $v_{i^*}(k) < \frac{\alpha}{1-\alpha} \OPTFk(\vec{c}_{-i^*})$. At the end of a run of \MultiMechanism it holds that
\begin{equation*}
 c_{\ell} < \frac{B}{1-\alpha} \cdot \frac{m_{\ell}(x_{\ell}(\vec{c}))}{\OPTFk(\vec{c})}\,.
\end{equation*}
\end{lemma}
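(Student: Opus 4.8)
The plan is to work entirely inside the \textbf{else} branch (Case~2), where the hypothesis $v_{i^*}(k) < \frac{\alpha}{1-\alpha}\OPTFk(\vec{c}_{-i^*})$ guarantees that the mechanism builds $\vec{x}$ by rounding down the greedy fractional optimum $\vec{x}^*$ and then peeling off levels in the \textbf{while} loop. Writing $d_\ell := m_\ell(x_\ell(\vec{c}))/c_\ell$ for the marginal density of the least valuable level surviving in the final allocation, I observe that the target inequality is equivalent, after clearing denominators, to
\begin{equation*}
(1-\alpha)\,\OPTFk(\vec{c}) < B\,d_\ell\,.
\end{equation*}
So it suffices to produce a lower bound on $d_\ell$ of this form, which I would obtain by pairing the stopping rule of the loop with a ``value lost per unit of cost'' estimate coming from the greedy structure of $\vec{x}^*$.

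The first ingredient is the stopping condition. Since $W(\vec{x})$ stays nonempty throughout, the loop halts precisely because its guard fails, giving for the final $\vec{x}$ and $\ell$ that $\V(\vec{x}) - m_\ell(x_\ell(\vec{c})) < \alpha\,\OPTFk(\vec{c})$. The second ingredient is a density bound. Let $C := \sum_{i\in N} c_i x_i(\vec{c})$ be the cost of the final allocation. Every level or fraction present in $\vec{x}^*$ but discarded on the way to $\vec{x}$ — that is, the integral levels peeled off in the loop together with the single fractional coordinate of $\vec{x}^*$ — has marginal density at most $d_\ell$: the loop always removes a current minimum-density level, so the minimum only rises, and the fractional coordinate of $\vec{x}^*$ carries the globally smallest density among all levels selected by Algorithm~\ref{algo:fk-knapsack}, which is at most that of $\ell$'s surviving level (this level is selected by the greedy, since $x_\ell(\vec{c}) \le \lfloor x_\ell^* \rfloor$). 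Because value equals density times cost, summing over the discarded pieces and using $\sum_i c_i x_i^* \le B$ yields
\begin{equation*}
\OPTFk(\vec{c}) - \V(\vec{x}) \le d_\ell\,(B - C)\,.
\end{equation*}

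Combining the two displays, substituting $m_\ell(x_\ell(\vec{c})) = d_\ell\, c_\ell$, and factoring out $d_\ell$ gives $(1-\alpha)\OPTFk(\vec{c}) < d_\ell\,(c_\ell + B - C)$. Because $\ell\in W(\vec{x})$ is hired for at least one level, $C \ge c_\ell\, x_\ell(\vec{c}) \ge c_\ell$, so $c_\ell - C \le 0$ and the bracket is at most $B$; this delivers $(1-\alpha)\OPTFk(\vec{c}) < B\,d_\ell$, which rearranges into the claimed bound.

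The step I expect to be the most delicate is the density bound: I must argue carefully that \emph{all} discarded mass — both the peeled integral levels and the fractional remainder of $\vec{x}^*$ — has density no larger than $d_\ell$, relying on the fact that Algorithm~\ref{algo:fk-knapsack} inserts levels in decreasing order of density and that the loop removes minima. I also want to treat explicitly the boundary case in which $\vec{x}^*$ happens to be integral, so that $\sum_i c_i x_i^* < B$: there the estimate $\OPTFk(\vec{c}) - \V(\vec{x}) \le d_\ell(B-C)$ only becomes easier, but stating it avoids a gap in the accounting of the freed-up budget.
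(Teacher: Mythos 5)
Your proposal is correct and follows essentially the same route as the paper's proof: both combine the failed \textbf{while} guard (giving $\OPTFk(\vec{c})-\V(\vec{x})+m_\ell(x_\ell)>(1-\alpha)\OPTFk(\vec{c})$) with the observation that every discarded piece of $\vec{x}^*$ --- the peeled integral levels and the fractional remainder --- has marginal density at most $m_\ell(x_\ell)/c_\ell$, and then charge the resulting value loss against the budget via the feasibility of $\vec{x}^*$. The only cosmetic difference is bookkeeping: you keep the final cost $C$ explicit and use $C\ge c_\ell$ at the end, whereas the paper absorbs the extra $c_\ell$ term directly into its lower bound $B\ge c_\ell(x^*_\ell-x_\ell+1)+\sum_{i\ne\ell}c_i(x^*_i-x_i)$.
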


\begin{proof}
Let $\vec{x}$ be the allocation vector at the end of a run of \MultiMechanism with input $\vec{c}$, and use $\vec{x}^*\!:=\vec{x}^*\!(\vec{c})$ for brevity. Observe that, since $v_{i^*}(k) < \frac{\alpha}{1-\alpha} \OPTFk(\vec{c}_{-i^*})$, it must be that $\V(\vec{x})-m_{\ell}(x_{\ell}) < \alpha \OPTFk(\vec{c})$, which implies that
\begin{equation}
\label{eq:a-flipped}
 \OPTFk(\vec{c})-\V(\vec{x}) + m_{\ell}(x_{\ell}) > (1-\alpha) \OPTFk(\vec{c})\,.
\end{equation}
By Fact \ref{fact:opt-knapsack-fact}, there exists at most one agent in $W(\vec{x}^*\!)$ with a non-integer allocation. We denote that agent by $f$ (if no such agent exists, let $f$ be an arbitrary agent in $W(\vec{x}^*\!)$). Note that it is possible that $f=\ell$. We have
 \begin{align*}
 B &\geq \sum_{i \in W(\vec{x}^*\!)}c_ix^*_i 
 \geq c_{\ell}(x^*_{\ell}-x_{\ell}+1) + \sum_{i \in W(\vec{x}^*\!) \setminus 
 \{\ell\}}c_i(x^*_i-x_i) \\
 &=\sum_{j=x_{\ell}}^{\lfloor x_{\ell}^* \rfloor}\frac{c_{\ell}}{m_{\ell}(j)}m_{\ell}(j) + \sum_{i \in W(\vec{x}^*\!)\setminus 
 \{\ell\}}\sum_{j=x_i+1}^{\lfloor x^*_i \rfloor}\frac{c_i}{m_i(j)}m_i(j) \ \ + \ \ \frac{c_f}{m_f(\lceil x_f^*\rceil)}(x_f^* - \lfloor x^*_f \rfloor )m_f(\lceil x_f^*\rceil) \\
 &\geq \frac{c_{\ell}}{m_{\ell}(x_{\ell})} \left( \sum_{j=x_{\ell}}^{\floor{x_{\ell}^*}}m_{\ell}(j) +\sum_{i \in W(\vec{x}^*\!)\setminus\{\ell\}}\sum_{j=x_i+1}^{\floor{x^*_i}}m_i(j) \ \ + \ \ (x_f^* - \lfloor x^*_f \rfloor )m_f(\lceil x_f^*\rceil)\right)\\
 &= \frac{c_{\ell}}{m_{\ell}(x_{\ell})} \left( m_{\ell}(x_{\ell}) +\sum_{i \in W(\vec{x}^*\!)}\sum_{j=x_i+1}^{\floor{x^*_i}}m_i(j) \ \ + \ \ (x_f^* - \lfloor x^*_f \rfloor )m_f(\lceil x_f^*\rceil)\right)\\
 &= \frac{c_{\ell}}{m_{\ell}(x_{\ell})} \left( \OPTFk(\vec{c})-\V(\vec{x}) + m_{\ell}(x_{\ell}) \right) 
 > \frac{c_{\ell}}{m_{\ell}(x_{\ell})} (1-\alpha) \OPTFk(\vec{c})\,.
 \end{align*}
The first inequality follows by the feasibility of $\vec{x}^*$, whereas the second inequality follows by the fact that $x_{\ell}\geq 1$ and, thus, every cost now has a smaller or equal coefficient than before.
The next inequality is due to the marginal value-per-cost ordering that Algorithm \ref{algo:fk-knapsack} uses to built $\vec{x}^*\!$ and \MultiMechanism uses (in the reverse order) to obtain $\vec{x}$. Finally, the last inequality is due to \eqref{eq:a-flipped}. The lemma follows by rearranging terms.
\end{proof}

We now proceed to obtaining an upper bound on the payments each agent receives for each level of service.

\begin{lemma}\label{lem:critical-upper-opt}
 Fix any $\alpha \in (0, 1)$ and let $\vec{c}$ be a cost profile such that $v_{i^*}(k) < \frac{\alpha}{1-\alpha} \OPTFk(\vec{c}_{-i^*})$. Moreover, let $i \in W(\vec{x}(\vec{c}))$. For $1\le j \le x_i(\vec{c})$, it holds that
\begin{equation*}
 p_{ij}(\vec{c}_{-i}) \leq \frac{B}{1-\alpha} \cdot \frac{m_i(j)}{\OPTFk(\vec{c}_{-i})}\,.
 \end{equation*}
\end{lemma}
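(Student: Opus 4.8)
The plan is to bound the critical payment $p_{ij}(\vec{c}_{-i})$ by reducing it, through a limiting argument, to the cost bound on the marginal (smallest value-per-cost) winner supplied by Lemma~\ref{lemma:magic-lemma}. Concretely, I would consider the family of profiles $\vec{c}' = (z, \vec{c}_{-i})$ obtained by letting agent $i$ unilaterally raise their declared cost to a value $z$ approaching $p_{ij}(\vec{c}_{-i})$ from below. By the definition of the critical payment together with the monotonicity of the allocation rule (Lemma~\ref{lemma:monotone}), for all such $z < p_{ij}(\vec{c}_{-i})$ we have $x_i(z, \vec{c}_{-i}) = j$; moreover, since increasing $z$ past $p_{ij}(\vec{c}_{-i})$ causes precisely the $j$-th level of $i$ to be discarded, and since \MultiMechanism removes levels in increasing order of value-per-cost, the $j$-th level of $i$ is the least valuable retained level at $\vec{c}'$. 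In the notation of the mechanism this means $\ell = i$ and $x_\ell(\vec{c}') = j$, so that $c_\ell = z$ and $m_\ell(x_\ell(\vec{c}')) = m_i(j)$.

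Before invoking Lemma~\ref{lemma:magic-lemma}, I must verify that the run on $\vec{c}'$ executes the \textbf{else} branch, i.e.\ that $\vec{c}'$ satisfies the Case~2 hypothesis $v_{i^*}(k) < \frac{\alpha}{1-\alpha}\OPTFk(\vec{c}'_{-i^*})$ (with $i^*$ recomputed for $\vec{c}'$). The key observation is that the ratio $v_i(k)/\OPTFk(\vec{c}'_{-i})$ is unaffected by $i$'s own declaration, because $\OPTFk(\vec{c}'_{-i}) = \OPTFk(\vec{c}_{-i})$; and by the choice of $i^*$ for the original profile together with the Case~2 hypothesis on $\vec{c}$ we have $v_i(k)/\OPTFk(\vec{c}_{-i}) \le v_{i^*}(k)/\OPTFk(\vec{c}_{-i^*}) < \frac{\alpha}{1-\alpha}$. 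Hence agent $i$ can never satisfy the Case~1 condition on $\vec{c}'$. Since $i$ is positively allocated at $\vec{c}'$ (its $j$-th level is retained), we cannot be in Case~1 with a different sole winner either, as that would force $x_i(\vec{c}') = 0$. Therefore the run on $\vec{c}'$ falls into Case~2, which is exactly the hypothesis of Lemma~\ref{lemma:magic-lemma}.

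With this in place, Lemma~\ref{lemma:magic-lemma} applied to $\vec{c}'$ yields $z = c_\ell < \frac{B}{1-\alpha}\cdot\frac{m_i(j)}{\OPTFk(\vec{c}')}$ for every $z < p_{ij}(\vec{c}_{-i})$. Letting $z \to p_{ij}(\vec{c}_{-i})^-$ and using the continuity of $\OPTFk(\cdot)$ in the cost of $i$ gives $p_{ij}(\vec{c}_{-i}) \le \frac{B}{1-\alpha}\cdot\frac{m_i(j)}{\OPTFk((p_{ij}(\vec{c}_{-i}), \vec{c}_{-i}))}$. Finally, since keeping agent $i$ in the instance can only increase the fractional optimum relative to removing it entirely (we may always set $x_i = 0$), we have $\OPTFk((p_{ij}(\vec{c}_{-i}), \vec{c}_{-i})) \ge \OPTFk(\vec{c}_{-i})$, and substituting this into the denominator yields the claimed bound.

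The main obstacle I expect lies in the structural claim of the first paragraph: that as $z \to p_{ij}(\vec{c}_{-i})^-$ the $j$-th level of $i$ is exactly the marginal level, so that $\ell = i$ and $x_\ell = j$. Justifying this requires pinning down the behaviour of the final allocation $x_i(z, \vec{c}_{-i})$ as a non-increasing step function of $z$ and arguing that the level crossed at the threshold $p_{ij}(\vec{c}_{-i})$ is the globally least valuable retained level under the greedy removal order—taking care of ties via the lexicographic tie-breaking and of the fact that both $\OPTFk(\vec{c}')$ and the initial integral allocation change as $z$ varies. The case-branch verification of the second paragraph is the other delicate point, but it follows cleanly from the invariance of $i$'s own ratio under changes to its declared cost.
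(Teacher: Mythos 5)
Your overall strategy---bounding $p_{ij}(\vec{c}_{-i})$ via Lemma~\ref{lemma:magic-lemma} applied at a deviated profile $(z,\vec{c}_{-i})$, after checking that the \textbf{else} branch still executes there---is the paper's strategy, and your verification of the case branch (via the invariance of $v_i(k)/\OPTFk(\vec{c}_{-i})$ under $i$'s own declaration) is correct. The genuine gap is the structural claim you yourself flag as the main obstacle: that for $z$ just below $p_{ij}(\vec{c}_{-i})$ the $j$-th level of $i$ is \emph{the} least valuable retained level, so that $\ell=i$ and $x_\ell=j$. This is false in general. The final allocation is a prefix of the ratio-sorted initial allocation $\floor{\vec{x}^*\!(z,\vec{c}_{-i})}$, and level $j$ of agent $i$ can sit strictly inside that prefix while levels of other agents with worse ratios are retained behind it (for instance when the \textbf{while} loop never runs). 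Level $j$ of $i$ can then be lost at the threshold not by becoming the marginal removed level, but because, as $z$ grows, it slides past another level in the ratio ordering, or because it becomes the fractional coordinate of $\vec{x}^*$ and so drops out of the initial integral allocation. In such cases the agent $\ell$ at the deviated profile is some other agent, and your identifications $c_\ell=z$ and $m_\ell(x_\ell)=m_i(j)$ break down.

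The repair is a one-line comparison that the paper uses and your argument omits: you do not need $\ell=i$, only that level $j$ of $i$ is \emph{retained} at $(z,\vec{c}_{-i})$. Since the mechanism discards levels in increasing order of marginal value-per-cost, every retained level has a ratio at least that of the last retained level, belonging to some agent $\lambda$; hence $m_i(j)/z \ge m_\lambda(x_\lambda(z,\vec{c}_{-i}))/c_\lambda$. Rearranging and applying Lemma~\ref{lemma:magic-lemma} \emph{to $\lambda$} at the profile $(z,\vec{c}_{-i})$ gives
\[
 z \;\le\; \frac{c_\lambda\, m_i(j)}{m_\lambda(x_\lambda(z,\vec{c}_{-i}))} \;<\; \frac{B}{1-\alpha}\cdot\frac{m_i(j)}{\OPTFk(z,\vec{c}_{-i})} \;\le\; \frac{B}{1-\alpha}\cdot\frac{m_i(j)}{\OPTFk(\vec{c}_{-i})}\,,
\]
where the last step uses that reinstating agent $i$ at any cost cannot decrease the fractional optimum. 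Taking the supremum over such $z$ then yields the claim, and your limiting argument and appeal to continuity of $\OPTFk$ become unnecessary. Without this comparison step the proof is incomplete.
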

\begin{proof}
Let $p\in Q_{ij}(\vec{c}_{-i})$, i.e., $p$ is a cost declaration that guarantees agent $i$ at least $j$ levels of service, and $\lambda$ be the index of the agent with the smallest value-per-cost ratio in the allocation computed by \MultiMechanism with input $(p,\vec{c}_{-i})$ (as opposed to $\ell$ when the input is $\vec{c}$).
By the definition of $p$ and $\lambda$, we have $m_i(j)/p \ge m_{\lambda}(x_{\lambda}(p, \vec{c}_{-i})) / c_{\lambda}$. By rearranging and applying Lemma \ref{lemma:magic-lemma} for the profile $(p, \vec{c}_{-i})$ and $\lambda$, we get
\begin{equation*}
p \leq \frac{c_{\lambda}m_i(j)}{m_{\lambda}(x_{\lambda}(p, \vec{c}_{-i}))} < \frac{B}{1-\alpha} \cdot \frac{m_{i}(j)}{\OPTFk(p, \vec{c}_{-i})} \leq \frac{B}{1-\alpha} \cdot \frac{m_{i}(j)}{\OPTFk(\vec{c}_{-i})}\,,
\end{equation*}
where the last inequality follows by the monotonicity of $\OPTFk(\cdot)$ with respect to subinstances.
This implies that
\[p_{ij}(\vec{c}_{-i}) = \sup(Q_{ij}(\vec{c}_{-i})) \leq \frac{B}{1-\alpha} \cdot \frac{m_i(j)}{\OPTFk(\vec{c}_{-i})}\,,\]
as claimed.
\end{proof}

The final component needed for the proof of Lemma \ref{lemma:key} is a lower bound on the optimal fractional objective when one agent is excluded.

\begin{lemma}\label{lemma:opt-lower}
Fix any $\alpha \in (0, 1)$ and let $\vec{c}$ be a cost profile such that $v_{i^*}(k) < \frac{\alpha}{1-\alpha} \OPTFk(\vec{c}_{-i^*})$. For every agent $i \in N$ it holds that
\begin{equation*}
 \OPTFk(\vec{c}_{-i}) > \frac{1-\alpha}{\alpha}\,\big(\V(\vec{x}(\vec{c}))-m_{\ell}(x_{\ell})\big)\,.
\end{equation*}
\end{lemma}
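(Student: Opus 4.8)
The plan is to obtain the bound by chaining two elementary estimates that both compare quantities against $\OPTFk(\vec{c})$: first a lower bound $\OPTFk(\vec{c}_{-i}) > (1-\alpha)\OPTFk(\vec{c})$ that is uniform over all agents $i$, and then the upper bound $\V(\vec{x}(\vec{c}))-m_{\ell}(x_{\ell}) < \alpha\,\OPTFk(\vec{c})$ coming from the termination of the mechanism's \textbf{while} loop. Composing these two gives the claim directly.

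I would begin with the per-agent ratio bound. By the choice of $i^*$ in line~\ref{line:i*}, the ratio $v_i(k)/\OPTFk(\vec{c}_{-i})$ is maximized at $i=i^*$, so for every agent $i \in N$ we have $\frac{v_i(k)}{\OPTFk(\vec{c}_{-i})} \le \frac{v_{i^*}(k)}{\OPTFk(\vec{c}_{-i^*})} < \frac{\alpha}{1-\alpha}$, where the last strict inequality is precisely the standing Case~2 hypothesis. Rearranging yields $v_i(k) < \frac{\alpha}{1-\alpha}\,\OPTFk(\vec{c}_{-i})$ for every $i$. Next I would invoke the subadditivity-type inequality $\OPTFk(\vec{c}) \le \OPTFk(\vec{c}_{-i}) + v_i(k)$ already used in the proof of Lemma~\ref{lemma:guarantee}. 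Substituting the bound on $v_i(k)$ gives
\[
\OPTFk(\vec{c}) < \OPTFk(\vec{c}_{-i}) + \tfrac{\alpha}{1-\alpha}\,\OPTFk(\vec{c}_{-i}) = \tfrac{1}{1-\alpha}\,\OPTFk(\vec{c}_{-i})\,,
\]
that is, $\OPTFk(\vec{c}_{-i}) > (1-\alpha)\,\OPTFk(\vec{c})$, which is the desired uniform lower bound and holds with strict inequality.

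For the second estimate, I would observe that the \textbf{while} loop of \MultiMechanism always terminates with its guard violated, so at the end of the run we have $\V(\vec{x}(\vec{c})) - m_{\ell}(x_{\ell}) < \alpha\,\OPTFk(\vec{c})$; this is exactly inequality~\eqref{eq:a-flipped} after rearrangement, and it is valid whether the loop never runs or runs several times. Hence $\OPTFk(\vec{c}) > \frac{1}{\alpha}\big(\V(\vec{x}(\vec{c})) - m_{\ell}(x_{\ell})\big)$. Chaining the two estimates then gives
\[
\OPTFk(\vec{c}_{-i}) > (1-\alpha)\,\OPTFk(\vec{c}) > \tfrac{1-\alpha}{\alpha}\big(\V(\vec{x}(\vec{c})) - m_{\ell}(x_{\ell})\big)\,,
\]
as claimed. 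There is no genuine technical obstacle here; the statement is a short combination of facts already established in the section. The only points requiring a little care are ensuring that both inequalities in the chain are strict (so that strictness is preserved, using that $\V(\vec{x}(\vec{c})) - m_{\ell}(x_{\ell}) \ge 0$ since $v_{\ell}(x_{\ell}) \ge m_{\ell}(x_{\ell})$) and that the termination bound is applied to the allocation at the \emph{end} of the run, matching the definition of $\ell$ and $x_{\ell}$ used in the statement.
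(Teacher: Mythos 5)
Your proof is correct and uses exactly the same ingredients as the paper's: the violated \textbf{while} guard, the bound $\OPTFk(\vec{c}) \le \OPTFk(\vec{c}_{-i}) + v_i(k)$, the maximality of $i^*$, and the Case~2 hypothesis. The only difference is cosmetic — you factor out the intermediate estimate $\OPTFk(\vec{c}_{-i}) > (1-\alpha)\OPTFk(\vec{c})$ before composing, whereas the paper chains all the inequalities in a single display.
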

\begin{proof}
 By the stopping condition of the \textbf{while} loop, we have:
 \begin{align*}
 \V(\vec{x}(\vec{c}))-m_{\ell}(x_{\ell}) &< \alpha \OPTFk(\vec{c})
 \leq \alpha \left(\OPTFk(\vec{c}_{-i}) + v_i(k) \right)
 \leq \alpha \OPTFk(\vec{c}_{-i})\Bigg(1 + \frac{v_{i^*}(k)}{\OPTFk(\vec{c}_{-i^*})}\Bigg) \\
 & < \alpha \OPTFk(\vec{c}_{-i})\left(1 + \frac{\alpha}{1-\alpha}\right)= \frac{\alpha}{1-\alpha}\OPTFk(\vec{c}_{-i})\,.
 \end{align*}
 The third inequality follows by the definition of $i^*$, whereas the last inequality follows directly by the assumption of the lemma.
\end{proof}

We finally present the proof of Lemma \ref{lemma:key}.

\begin{proof}[Proof of Lemma \ref{lemma:key}]
We can upper bound the total payments as follows:
\begin{align*}
 \sum_{i=1}^np_i(\vec{c}) &= \sum_{i \in W(\vec{x}(\vec{c}))}p_i(\vec{c}) \ 
 = \sum_{i \in W(\vec{x}(\vec{c}))}\sum_{j=1}^{x_i(\vec{c})} p_{ij}(\vec{c}_{-i})\\
 &\leq \frac{B}{1-\alpha} \sum_{i \in W(\vec{x}(\vec{c}))} \sum_{j=1}^{x_i(\vec{c})} \frac{m_i(j)}{\OPTFk(\vec{c}_{-i})} 
 = \frac{B}{1-\alpha} \sum_{i \in W(\vec{x}(\vec{c}))} \frac{v_i(x_i(\vec{c}))}{\OPTFk(\vec{c}_{-i})} \\
 &= \frac{B}{1-\alpha} \left( \sum_{i \in W(\vec{x}(\vec{c}))\setminus\{\ell\}} \frac{v_i(x_i(\vec{c}))}{\OPTFk(\vec{c}_{-i})} \ \ + \ \ 
 \frac{v_{\ell}(x_{\ell}(\vec{c})) - m_{\ell}(x_{\ell}(\vec{c}))}{\OPTFk(\vec{c}_{-\ell})} \ + \ \frac{m_{\ell}(x_{\ell}(\vec{c}))}{\OPTFk(\vec{c}_{-\ell})} \right) \\ 
 &< \frac{B}{1-\alpha} \left( \frac{\alpha}{1-\alpha} \cdot \frac{\sum_{i \in W(\vec{x}(\vec{c}))} v_i(x_i(\vec{c})) - m_{\ell}(x_{\ell}(\vec{c}))}{\V(x(\vec{c}))-m_{\ell}(x_{\ell}(\vec{c}))} + \frac{m_{\ell}(x_{\ell}(\vec{c}))}{\OPTFk(\vec{c}_{-\ell})} \right) \\
 &= \frac{B}{1-\alpha} \left( \frac{\alpha}{1-\alpha} + \frac{m_{\ell}(x_{\ell}(\vec{c}))}{\OPTFk(\vec{c}_{-\ell})} \right) \,.
\end{align*}
The first equality reflects the fact that $p_{ij}(\vec{c}_{-i}) = 0$ for $i \notin W(\vec{x}(\vec{c})$, whereas the second equality is due to Lemma \ref{lemma:payment-breakdown}. The first inequality follows by applying Lemma \ref{lem:critical-upper-opt} for every agent $i \in W(\vec{x}(\vec{c}))$ and every $j\in \{1\dots, x_{i}(\vec{c})\}$. Finally, the second inequality follows by applying Lemma \ref{lemma:opt-lower} to the denominators of all the terms except from ${m_{\ell}(x_{\ell}(\vec{c}))}/{\OPTFk(\vec{c}_{-\ell})}$. 
\end{proof}

\subsection{\texorpdfstring{\MultiMechanism}{\textsc{Sort-\&-Reject}} for Large Markets}
\label{subsec:large-market}

We conclude this section with a note on the performance of \MultiMechanism for instances where no single agent can significantly impact the total value attainable by the entire market. This property is frequently observed in applications such as internet marketplaces and crowdsourcing environments. In the economics and computation literature, such markets are often referred to as \emph{large} (see e.g. \cite{mehta07} for one of the first works). The notion has also been studied in the context of budget-feasible mechanism design under multiple definitions, see e.g. the work of \citet{anari18} and \citet{jalaly18}.

Given an instance $I=(N, \vec{c}, B, k, (v_i)_{i \in N})$ of the $k$-level model, let $\theta = \theta(I):= {\max_{j \in N}m_j(1)}/{\OPTIk(\vec{c})}$ be the \emph{largeness} of its underlying market. Intuitively, an instance $I$ with a small $\theta$ implies that there is no agent in $N$ whose first offered level of service is significantly valuable for the auctioneer, when comparing with the highest value the auctioneer can attain under the optimal hiring scheme (under truthful declarations).

\begin{definition}
 \label{def:large-market}
 A $k$-level instance $I=(N, \vec{c}, B, k, (v_i)_{i \in N})$ models a \emph{large market} when $\theta \to 0$.
\end{definition}

The definition above is a natural generalization to the $k$-level model of the value-based alternative large market assumption given by \citet{anari18}. The main result here is Theorem \ref{thm:large-market-bound} which implies that for large markets the approximation ratio of \MultiMechanism becomes $2.62$ while the mechanism retains all of its good properties.

\begin{theorem}
 \label{thm:large-market-bound}
 \MultiMechanism with $\alpha=\frac{2(1-\theta)}{3+\sqrt{5 + 4 \theta}}$ is a truthful, individually rational, budget-feasible $\frac{3+\sqrt{5 + 4 \theta}}{2(1-\theta)}$-approximation mechanism for instances of the $k$-level model with largeness $\theta$, and runs in time polynomial in $n$ and $k$. 
\end{theorem}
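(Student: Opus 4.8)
The plan is to re-run the three-part analysis behind Theorem~\ref{thm:multi-mechanism}, changing only the budget-feasibility estimate so that it exploits the largeness $\theta$; monotonicity, individual rationality, truthfulness, and the polynomial running time are unaffected. First I would record that the two expressions for $\alpha$ agree, since rationalizing gives $\frac{3-\sqrt{5+4\theta}}{2}=\frac{2(1-\theta)}{3+\sqrt{5+4\theta}}$, and that this $\alpha$ lies in $(0,1)$ for every $\theta\in[0,1)$. Hence Lemma~\ref{lemma:monotone} applies unchanged and, through the payment rule \eqref{eq:payment-id-BF}, yields truthfulness and individual rationality. For the value guarantee I would note that $\sqrt{5+4\theta}\ge\sqrt5$ forces $\alpha\le\frac{3-\sqrt5}{2}$, which is exactly the regime covered by Lemma~\ref{lemma:guarantee} (its proof only reaches a contradiction when $\alpha$ strictly exceeds $\frac{3-\sqrt5}{2}$), so $\V(\vec{x}(\vec{c}))\ge\alpha\,\OPTIk(\vec{c})$ and the approximation ratio equals $1/\alpha=\frac{3+\sqrt{5+4\theta}}{2(1-\theta)}$.

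The heart of the argument is budget-feasibility, and here I would follow Lemma~\ref{lemma:multi-bf} but sharpen its Case~2 bound. Case~1 (only $i^*$ is hired, for all $k$ levels) is verbatim: the payment is $k\cdot p_{i^*k}(\vec{c}_{-i^*})\le k\cdot\frac{B}{k}=B$. For Case~2 I would invoke Lemma~\ref{lemma:key} and establish the refined inequality
\[
 \frac{m_{\ell}(x_{\ell}(\vec{c}))}{\OPTFk(\vec{c}_{-\ell})}<\frac{\theta}{1-\alpha},
\]
replacing the coarse bound $\frac{\alpha}{1-\alpha}$ used in the general case. This is proved by chaining three observations: concavity gives $m_{\ell}(x_{\ell}(\vec{c}))\le m_{\ell}(1)$; the definition of $\theta$ together with $\OPTFk(\vec{c})\ge\OPTIk(\vec{c})$ gives $m_{\ell}(1)\le\max_{j}m_j(1)=\theta\,\OPTIk(\vec{c})\le\theta\,\OPTFk(\vec{c})$; and the Case~2 condition, transferred from $i^*$ to $\ell$ via the optimality of $i^*$ in line~\ref{line:i*}, gives $v_{\ell}(k)<\frac{\alpha}{1-\alpha}\OPTFk(\vec{c}_{-\ell})$, so that $\OPTFk(\vec{c})\le\OPTFk(\vec{c}_{-\ell})+v_{\ell}(k)<\frac{1}{1-\alpha}\OPTFk(\vec{c}_{-\ell})$, i.e.\ $\OPTFk(\vec{c}_{-\ell})>(1-\alpha)\OPTFk(\vec{c})$.

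Substituting the refined ratio into Lemma~\ref{lemma:key} then gives
\[
 \sum_{i=1}^n p_i(\vec{c})<\frac{B}{1-\alpha}\left(\frac{\theta}{1-\alpha}+\frac{\alpha}{1-\alpha}\right)=\frac{B(\theta+\alpha)}{(1-\alpha)^2},
\]
so budget-feasibility holds as soon as $\theta+\alpha\le(1-\alpha)^2$, i.e.\ $\alpha^2-3\alpha+(1-\theta)\ge0$. The smaller root of $\alpha^2-3\alpha+(1-\theta)=0$ is precisely $\alpha=\frac{3-\sqrt{5+4\theta}}{2}$, so the prescribed $\alpha$ sits exactly at the feasibility threshold and the payments never exceed $B$. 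I expect the only delicate point to be the third observation above, namely propagating the Case~2 inequality from $i^*$ to the worst-density agent $\ell$ so as to lower-bound $\OPTFk(\vec{c}_{-\ell})$ by $(1-\alpha)\OPTFk(\vec{c})$; once that is in place the remainder is the same quadratic bookkeeping as in Lemma~\ref{lemma:multi-bf}, now carrying an extra additive $\theta$ term.
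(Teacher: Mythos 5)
Your proposal is correct and follows essentially the same route as the paper: your refined bound $\frac{m_{\ell}(x_{\ell}(\vec{c}))}{\OPTFk(\vec{c}_{-\ell})}\le\frac{\theta}{1-\alpha}$ is exactly the paper's Lemma~\ref{lemma:largeness-lower-bound}, proved by the same chain (concavity, the definition of $\theta$ with $\OPTFk\ge\OPTIk$, and $\OPTFk(\vec{c}_{-\ell})\ge(1-\alpha)\OPTFk(\vec{c})$ via the optimality of $i^*$), and the quadratic condition $\theta+\alpha=(1-\alpha)^2$ pinning down $\alpha=\frac{3-\sqrt{5+4\theta}}{2}$ is identical. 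Your extra care at the endpoint $\theta=0$ (where $\alpha$ equals rather than strictly undershoots $\frac{3-\sqrt5}{2}$, which the proof of Lemma~\ref{lemma:guarantee} still covers) is a minor but legitimate refinement of a detail the paper glosses over.
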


Before proving Theorem \ref{thm:large-market-bound}, we show a natural lower bound on the largeness $\theta$ of an instance $I$ for which \MultiMechanism may allocate to more than one bidders. Lemma \ref{lemma:largeness-lower-bound} will be useful to prove the budget-feasibility of \MultiMechanism for $\theta$-large instances.
 \begin{lemma}
 \label{lemma:largeness-lower-bound}
 For every cost profile $\vec{c}$ such that $v_{i^*}(k) < \frac{\alpha}{1-\alpha}\OPTFk(\vec{c}_{-i^*})$ and every bidder $\ell \in W(\vec{x}(\vec{c}))$, it holds that 
 $\theta \geq (1-\alpha) {m_{\ell}(x_{\ell})}/{OPT_F^k(\vec{c}_{-\ell})}$. 
 \end{lemma}
 \begin{proof}
 Since $\OPTFk(\vec{c}) \leq \OPTFk(\vec{c}_{-\ell}) + v_{\ell}(k) \leq \left(1 + \frac{\alpha}{1-\alpha} \right) \OPTFk(\vec{c}_{-\ell}) = \frac{1}{1-\alpha} \OPTFk(\vec{c}_{-\ell})$, we have that ,
 \begin{equation*}
 \frac{m_{\ell}(x_{\ell})}{\OPTFk(\vec{c}_{-\ell})} \leq \frac{m_{\ell}(x_{\ell})}{(1-\alpha) \OPTFk(\vec{c})} \leq \frac{m_{\ell}(x_{\ell})}{(1-\alpha) \OPTIk(\vec{c})} \leq \frac{m_{\ell}(1)}{(1-\alpha) \OPTIk(\vec{c})} \leq \frac{\theta}{1-\alpha} \,. 
 \end{equation*}
 The second inequality follows by the fact that the fractional problem is a relaxation of the integral problem, 
 the third inequality is due to the concavity of $v_{\ell}$, whereas the last inequality follows by the definition of $\theta$.
 \end{proof}

\begin{proof}[Proof of Theorem \ref{thm:large-market-bound}]
 As in Section \ref{sec:multiple}, truthfulness and individual rationality follows by Lemma \ref{lemma:monotone} and the fact that we determine the payments according to Theorem \ref{thm:archerPayments}. Furthermore, the approximation guarantee follows by Lemma \ref{lemma:guarantee} since here $\alpha = \frac{2(1-\theta)}{3+\sqrt{5 + 4 \theta}} = \frac{3-\sqrt{5 + 4 \theta}}{2} < \frac{3-\sqrt{5}}{2}$.
 
 What remains to be shown here is that \MultiMechanism is budget-feasible for this configuration. Let $\vec{c}$ be a cost profile for an instance with largeness $\theta$. If $v_{i^*}(k) \geq \frac{\alpha}{1-\alpha}\OPTFk(\vec{c}_{-i^*})$, we work exactly as in the proof of Lemma \ref{lemma:multi-bf}. When $v_{i^*}(k) < \frac{\alpha}{1-\alpha}\OPTFk(\vec{c}_{-i^*})$ we resort to Lemma \ref{lemma:key} and Lemma \ref{lemma:largeness-lower-bound} and have
\begin{equation*}
 \sum_{i \in W(\vec{x}(\vec{c}))}p_i(\vec{c}) \leq \frac{B}{1-\alpha} \left(\frac{m_{\ell}(x_{\ell}(\vec{c}))}{\OPTFk(\vec{c}_{-\ell})} + \frac{\alpha}{1-\alpha} \right) \leq \frac{B}{1-\alpha} \left(\frac{\theta}{1-\alpha} + \frac{\alpha}{1-\alpha} \right)\,,
 \end{equation*}
 where the first inequality is obtained as in the proof of Lemma \ref{lemma:multi-bf} and the second one follows from Lemma \ref{lemma:largeness-lower-bound} above.
 Budget-feasibility follows by observing that for $\alpha = \frac{3-\sqrt{5 + 4 \theta}}{2}$, it holds that $\theta + \alpha=(1-\alpha)^2$.
\end{proof}

Note that the approximation guarantee of Theorem \ref{thm:large-market-bound} improves almost linearly for small $\theta$. For instance, for $0\le \theta \le 0.04$ (i.e., in the case where no single level of service is more than 4\% of the optimal value), the approximation guarantee is roughly $2.618+3.17\theta$ (in the sense that,  for this range of $\theta$, we have $\left\lvert \frac{3+\sqrt{5+4\theta}}{2\left(1-\theta\right)} - (2.618+3.17\theta) \right\rvert<10^{-3}$). Moreover, the theorem bridges the gap between the threshold for budget-feasibility of Lemma \ref{lemma:multi-bf} (which forces the approximation ratio to be at least $2+\sqrt{3}$) and the approximation guarantee bottleneck of Lemma \ref{lemma:guarantee}.

\section{Mechanisms for Divisible Agents}\label{sec:divisible}

We consider the divisible agent model and derive two truthful, budget-feasible mechanisms with constant approximation guarantees. The first mechanism is obtained by discretizing the valuation functions and reducing the general case of the problem to the $k$-level model (Section~\ref{subsec:interpolation}). The second mechanism is an improved $2$-approximation for the special case where all agents have linear valuation functions (Section~\ref{subsec:2apx}). Note that the $e/(e-1)$ lower bound of \citet{anari18} holds even in the latter restricted setting, as the instance they construct has a linear valuation function.

\subsection{A Constant Factor Approximation Using \texorpdfstring{\MultiMechanism}{\textsc{Sort-\&-Reject}}} \label{subsec:interpolation}

Recall that in the divisible agent model, we have $\vec{x} \in [0,1]^n$ and concave non-decreasing valuation functions $\bar{v}_i: [0,1] \to \mathbb{R}_{\geq 0}$ with $\bar{v}_i(0) = 0$ for all $i \in N$. 

There is a natural correspondence between the setting with $k \ge 1$ levels of service and the setting with divisible agents: 
If we subdivide the $[0,1]$ interval into $k$ chunks of length $\frac1k$ and evaluate the $\bar{v}_i(\cdot)$'s at $\frac{1}{k}, \frac{2}{k}, \dots, \frac{k}{k}$, then this can be interpreted as the value of hiring $1, 2, \dots, k$ levels of service, respectively. 
We can then obtain a constant approximation for the setting with divisible agents by applying this discretization for $k=n$ and then using \MultiMechanism from Section \ref{sec:multiple}. Our \discretization mechanism exploits this idea; a detailed description is given in Mechanism~\ref{algo:interpolation} below.

\begin{mechanism}[h]
\caption{\discretization} 
\nonl $\hspace{-2.3ex}\rhd$ {\bf{Input:}} An instance $I=(N, \vec{\bar{c}}, B, (\bar{v}_i)_{i \in N})$ of the divisible agent model.\\
\label{algo:interpolation}
Initialize for each $i \in N$, $v_i: \{0, \dots, n\} \to \mathbb{R}_+$ with $v_i(0)=0$. \;
\For{$i \in N$ \textnormal{and} $j\in\{1,\dots, n\}$} 
{Set $v_{i}(j) = \bar{v}_{i}\left({j}/{n}  \right)$.} \label{line:cns_line_3}

\For{$i \in N$} {Set $c_{i} = {\bar{c}_i } / {n}$.} \label{line:cns_line_5}

Let $J = (N, \vec{c}, B, n, (v_i)_{i \in N})$ be the resulting discretized instance of the $k$-level model. \\ \tcp{Note that $k=n$, i.e., we construct an instance with $n$ levels of service.} \label{line:cns_line_6}
Set $\alpha ={(2+\sqrt{3})^{-1}}$ and $k=n$, and compute $\vec{x}$ by running $\MultiMechanism$ on $J$.\;
\For{$i \in N$}{Set $\bar{x}_i = {x_i} / {n}$.} \label{line:cns_line_9}
Allocate $\vec{\bar{x}}$ and determine the payments $\vec{\bar{p}}$ according to \eqref{eq:payment-id-BF}. \label{line:cns_line_10} 
\end{mechanism}

\begin{theorem}
    \label{thm:interpolation-thm}
    \discretization  is a truthful, individually rational, budget-feasible $\big(4+2\sqrt{3}\big)$-appro\-xi\-mation  mechanism for instances of the divisible agent model, and runs in time polynomial in $n$.
\end{theorem}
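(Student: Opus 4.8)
The plan is to transfer all four properties of \MultiMechanism, as applied to the discretized instance $J=(N,\vec c,B,n,(v_i)_{i\in N})$ with $\vec c=\bar{\vec c}/n$, back to the divisible instance. Throughout, write $\bar x_i(\bar{\vec c})=\tfrac1n x_i(\vec c)$, where $x_i$ is the allocation \MultiMechanism returns on $J$.

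\textbf{Truthfulness and individual rationality.} Since the map $\bar c_i\mapsto \bar c_i/n$ is order preserving and $\bar x_i=x_i/n$, monotonicity of the rule $\bar x(\cdot)$ is inherited directly from Lemma~\ref{lemma:monotone}. As the payments $\vec{\bar p}$ are defined by \eqref{eq:payment-id-BF}, Theorem~\ref{thm:archerPayments} then yields truthfulness and individual rationality; the integrability condition holds because $\bar x_i\le 1$ and $\bar x_i(z,\bar{\vec c}_{-i})=0$ once $z>B$ (the divisible cost restriction prunes such declarations).

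\textbf{Budget-feasibility.} The key observation is that the reduction leaves the payments unchanged. Substituting $\bar x_i(z,\bar{\vec c}_{-i})=\tfrac1n x_i(z/n,\vec c_{-i})$ into \eqref{eq:payment-id-BF} and applying the change of variables $u=z/n$,
\[
\bar p_i(\bar{\vec c})=\bar c_i\,\bar x_i(\bar{\vec c})+\int_{\bar c_i}^\infty \bar x_i(z,\bar{\vec c}_{-i})\,dz = c_i x_i(\vec c)+\int_{c_i}^\infty x_i(u,\vec c_{-i})\,du = p_i(\vec c),
\]
where the last equality is exactly \eqref{eq:payment-id-BF} for \MultiMechanism on $J$. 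Hence $\sum_i \bar p_i(\bar{\vec c})=\sum_i p_i(\vec c)\le B$ by Lemma~\ref{lemma:multi-bf}, since $\alpha=1/(2+\sqrt3)$ lies in the admissible range.

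\textbf{Approximation (the crux).} The mechanism's value equals that of \MultiMechanism on $J$: as $v_i(j)=\bar v_i(j/n)$, we have $\sum_i\bar v_i(\bar x_i)=\sum_i v_i(x_i)=\V(\vec x)\ge \alpha\,\OPTFk(\vec c)$ by (the proof of) Lemma~\ref{lemma:guarantee}, valid since $\alpha=2-\sqrt3<\tfrac{3-\sqrt5}{2}$. It therefore suffices to show $\OPTFk(\vec c)\ge\tfrac12\OPTF(\bar{\vec c})$, as this gives value at least $\tfrac{\alpha}{2}\OPTF(\bar{\vec c})$, i.e.\ ratio $2/\alpha=4+2\sqrt3$. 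To establish the factor-$2$ relation I would round the optimal divisible solution $\bar{\vec x}^*$ \emph{up} to the grid, setting $y_i=\ceil{n\bar x_i^*}\in\{0,\dots,n\}$. Since $\bar v_i$ is non-decreasing and $\ceil{n\bar x_i^*}/n\ge \bar x_i^*$, the value of $\vec y$ is $\sum_i \bar v_i(\ceil{n\bar x_i^*}/n)\ge \OPTF(\bar{\vec c})$; and since $\ceil{n\bar x_i^*}\le n\bar x_i^*+1$, each $\bar c_i\le B$, and there are $n$ agents, its cost is at most $\sum_i \bar c_i\bar x_i^*+\sum_i \bar c_i/n\le B+B=2B$. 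Writing $g(\beta)$ for the optimal value of \eqref{eq:fkBKP} on $J$ with budget $\beta$ (so $g(B)=\OPTFk(\vec c)$), this shows $g(2B)\ge\OPTF(\bar{\vec c})$. Finally, the greedy description in Algorithm~\ref{algo:fk-knapsack} fills levels in non-increasing order of marginal density, so $g$ is non-decreasing, concave, and $g(0)=0$; hence $g(2B)\le 2g(B)$, giving $\OPTFk(\vec c)\ge\tfrac12\OPTF(\bar{\vec c})$. Polynomial running time is immediate from Fact~\ref{fact:opt-knapsack-fact} with $k=n$.

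The main obstacle is precisely this discrete-to-continuous comparison. A naive per-agent argument fails inside the first cell $[0,1/n]$: the piecewise-linear interpolant underlying $\OPTFk$ can underestimate a steep concave $\bar v_i$ (e.g.\ $\bar v_i(x)=\sqrt{x}$) by an unbounded factor near $0$, so one cannot match $\bar x_i^*$ level-by-level without losing control of either value or cost. Rounding \emph{up} sidesteps this by recovering the full continuous value at the cost of overspending by at most a second budget, and the concavity of $g(\cdot)$ then launders this budget-doubling into the clean factor of $2$. The remaining care lies in justifying that concavity from the greedy structure and in noting that the cost bound crucially combines $\bar c_i\le B$ with having only $n$ agents, so that all first levels of the discretized instance are simultaneously affordable.
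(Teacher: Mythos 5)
Your proposal is correct, and three of its four components (truthfulness via monotonicity, budget-feasibility via the change of variables $u=z/n$ showing $\bar p_i(\bar{\vec c})=p_i(\vec c)$, and the final chaining of Lemma~\ref{lemma:guarantee} with a factor-$2$ loss from discretization) coincide with the paper's proof. Where you genuinely diverge is in the key discrete-to-continuous comparison, i.e.\ the analogue of Lemma~\ref{lemma:OPTF-to-OPTFk}. The paper rounds the optimal divisible solution \emph{down} to $z_i=\floor{x_i^* n}$ and absorbs the lost mass of each agent into a second feasible solution, the all-ones vector $\vec 1$, using concavity of the $v_i$ to bound the per-agent loss by $m_i(1)\le v_i(1)$; the bound $\OPTF(\bar{\vec c})\le v(\vec z)+v(\vec 1)\le 2\OPTFn(\vec c)$ then follows because both $\vec z$ and $\vec 1$ are budget-feasible. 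You instead round \emph{up}, overspend by at most one extra budget (since $\sum_i \bar c_i/n\le B$), and convert the budget-doubling into a factor $2$ via concavity of the budget--value curve $g(\cdot)$ of the fractional relaxation. Both arguments lean on the same two ingredients --- the cost restriction $\bar c_i\le B$ and concavity --- but deploy them differently: the paper uses concavity of the individual $v_i$ and needs no statement about $g$, which makes it slightly more elementary; your version localizes all the work into the single clean inequality $g(2B)\le 2g(B)$, which is more modular and would generalize to any relaxation whose optimal value is concave in the budget. The concavity of $g$ does need the one-line justification from the greedy's non-increasing densities, which you correctly flag. Both routes give exactly the constant $2$ and hence the ratio $4+2\sqrt 3$.
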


For notational convenience, throughout Section \ref{subsec:interpolation}, we use $\vec{\bar{c}}$ to denote cost profiles in the divisible agent model and $\vec{c}$ to denote cost profiles in the $k$-level model. The following lemma will be key in proving the approximation guarantee of the theorem.

\begin{lemma}
\label{lemma:OPTF-to-OPTFk}
Let $I=(N, \vec{\bar{c}}, B, (\bar{v}_i)_{i \in N})$ be an instance of the divisible agent model. Consider the instance $J = (N, \vec{c}, B, n,\allowbreak (v_i)_{i \in N})$ as defined in line \ref{line:cns_line_6} of the mechanism \discretization. It holds that $\OPTF(\vec{\bar{c}}) \leq 2 \cdot \OPTFn(\vec{c})$.
\end{lemma}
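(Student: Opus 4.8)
The plan is to start from an optimal fractional solution $\vec{x}^*$ of $\OPTF(\vec{\bar{c}})$ and turn it, in two steps, into a solution of the Fractional $n$-Bounded Knapsack instance $J$ that is feasible under the \emph{original} budget $B$ and retains at least half the value. Since $\OPTFn(\vec{c})$ upper-bounds the objective value of any budget-$B$-feasible solution of $J$, this immediately gives $\OPTF(\vec{\bar{c}}) \le 2\,\OPTFn(\vec{c})$. Throughout I write $\hat{v}_i(x) := v_i(\floor{x}) + m_i(\ceil{x})\,(x-\floor{x})$ for the objective contribution of agent $i$ in \eqref{eq:fkBKP}; this is exactly the piecewise-linear interpolation of the concave sequence $v_i(0),\dots,v_i(n)$, so it is concave on $[0,n]$, agrees with $v_i$ at integer points, and satisfies $\hat{v}_i(0)=0$.

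The first step is to round each coordinate \emph{up} to the nearest level. Setting $y_i = \ceil{n x_i^*}$ for every $i\in N$, monotonicity of $\bar{v}_i$ yields $v_i(y_i) = \bar{v}_i(y_i/n) \ge \bar{v}_i(x_i^*)$, and hence $\sum_{i} v_i(y_i) \ge \OPTF(\vec{\bar{c}})$. For the cost, using $\ceil{n x_i^*}\le n x_i^* + 1$ together with $c_i = \bar{c}_i/n$ gives $\sum_i c_i y_i \le \sum_i \bar{c}_i x_i^* + \tfrac1n\sum_i \bar{c}_i \le B + B = 2B$, where the first $B$ is the feasibility of $\vec{x}^*$ for $\OPTF$ and the second uses the cost restriction $\bar{c}_i \le B$ of the divisible model. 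Since also $y_i = \ceil{n x_i^*}\le n$, the vector $\vec{y}$ is a feasible integral solution of the $n$-level knapsack, but only under the inflated budget $2B$.

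The second step recovers feasibility under $B$ at the cost of a factor two, by halving: set $z_i = y_i/2 \in [0,n]$. Then $\sum_i c_i z_i = \tfrac12\sum_i c_i y_i \le B$, so $\vec{z}$ is feasible for $J$ under budget $B$, and by concavity of $\hat{v}_i$ with $\hat{v}_i(0)=0$ we have $\hat{v}_i(z_i) = \hat{v}_i(\tfrac12 y_i) \ge \tfrac12\hat{v}_i(y_i) = \tfrac12 v_i(y_i)$. Summing over $i$ gives $\OPTFn(\vec{c}) \ge \sum_i \hat{v}_i(z_i) \ge \tfrac12\sum_i v_i(y_i) \ge \tfrac12\OPTF(\vec{\bar{c}})$, which is the claim. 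I expect the only delicate point to be the budget accounting in the rounding-up step: the overflow from the $n$ individual ceilings is exactly controlled by the assumption $\bar{c}_i \le B$, and the subsequent halving, justified by concavity of the separable objective together with $\hat{v}_i(0)=0$, is what converts a budget-$2B$ witness into a budget-$B$ solution losing only a factor of $2$.
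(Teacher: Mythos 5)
Your proof is correct, but it reaches the factor $2$ by a genuinely different route than the paper. The paper rounds the optimal divisible solution \emph{down}, setting $z_i=\floor{x_i^* n}$, and absorbs the loss from rounding via the bound $v_i(\min\{z_i+1,n\})\le v_i(z_i)+v_i(1)$; the factor $2$ then comes from exhibiting \emph{two} budget-feasible witnesses, $\vec{z}$ and the all-ones vector $\vec{1}$, each of value at most $\OPTFn(\vec{c})$. You instead round \emph{up}, obtaining an integral vector $\vec{y}$ that dominates $\OPTF(\vec{\bar{c}})$ in value but only fits in an inflated budget $2B$, and then recover feasibility by halving, using concavity of the piecewise-linear objective together with $\hat{v}_i(0)=0$ to certify $\hat{v}_i(y_i/2)\ge\tfrac12 v_i(y_i)$. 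The two uses of the cost restriction $\bar{c}_i\le B$ are really the same computation ($\tfrac1n\sum_i\bar{c}_i\le\max_i\bar{c}_i\le B$), once to bound the ceiling overflow and once to show $\vec{1}$ is affordable. What your argument buys is a clean scaling interpretation (a budget-$2B$ witness halved is a budget-$B$ witness at half the value); what it costs is that it genuinely needs the benchmark to be the \emph{fractional} relaxation, since $\vec{y}/2$ may be half-integral, and it needs concavity through the origin of the separable objective. The paper's argument only ever evaluates the objective at integer points, so it would survive even if the right-hand side were the integral optimum $\OPTIk$. Both proofs are valid for the stated lemma.
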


\begin{proof}
Let $\vec{x}^* = (x^*_1,\ldots, x^*_n) \in [0,1]^n$ be an optimal solution to the respective Fractional Concave Knapsack Problem for $I$ (see \eqref{eq:fKP} in Section \ref{subsec:divisible-def}). Let $z_i := \floor{x^*_i n}$ for all $i \in N$, and  $\vec{z} := (z_1, \dots, z_n)$. 
Define $\vec{1} = (1, 1, \ldots, 1)$. 
We have:

\begin{align*}
     \OPTF(\vec{\bar{c}}) = \sum_{i \in N} \bar{v}_i(x^*_i) 
     &\leq \sum_{i \in N} \bar{v}_i \bigg( \min \bigg\{\frac{z_i+1}{n}, 1 \bigg\} \bigg)
     = \sum_{i \in N} v_i\left(\min \left\{z_i+1, n \right\}\right) \\
     &= \sum_{i \in N}   v_i\left(\min \left\{z_i+1, n \right\}-1\right) + m_i\left(\min \left\{z_i+1, n \right\}\right) \\
     &\leq \sum_{i \in N} v_i(z_i) + v_i(1)   = v(\vec{z}) + v(\vec{1})\,.\numberthis \label{eq:ub-two-feasible-sols}
\end{align*}
For the first inequality, we use the facts that $x^*_i \leq 1$ and $x^*_i n \leq \floor{x^*_i n} + 1=z_i+1$ for all $i \in N$. The second equality follows by expressing $\bar{v}_i$ in terms of $v_i$ for each $i \in N$ (see line \ref{line:cns_line_3} of \discretization). The final inequality is due to the monotonicity of the functions $\bar{v}_i$ and $v_i$ for each $i \in N$, as well as their concavity,  which implies that $m_i(j) \leq m_i(1) \leq v_i(1)$ for any $j \in [n]$. 

Next, we show that both $\vec{z}$ and $\vec{1}$ are feasible solutions for $J$. For $\vec{z}$, it holds by construction that
\begin{equation*}
    \sum_{i \in N} c_i z_i = \sum_{i \in N} \frac{\bar{c}_i}{n} z_i = \sum_{i \in N} \frac{\bar{c}_i}{n} \floor{x^*_i n} \leq \sum_{i \in N} \bar{c}_i x^*_i \leq B \,.
\end{equation*}
The last inequality follows from the fact that $\vec{x}^*$ is a feasible solution for $I$ as it is, by assumption, optimal.

Similarly, we can verify that $\vec{1}$ is a feasible solution for $J$ by observing that
\begin{equation*}
    \sum_{i \in N} c_i = \sum_{i \in N} \frac{\bar{c}_i}{n} = \frac{\sum_{i \in N} \bar{c}_i}{n} \leq \frac{n \cdot \max_{i \in N} \bar{c}_i}{n} = \max_{i \in N} \bar{c}_i \leq B\,,
\end{equation*}
where  
the final inequality holds by the cost restriction assumption of the divisible agent model (see Section \ref{subsec:divisible-def}).

Since both $\vec{z}$ and $\vec{1}$ are feasible solutions for $J$, we obtain 
\[
v(\vec{z}) \leq \OPTFn(\vec{c})
\quad\text{and}\quad
v(\vec{1}) \leq \OPTFn(\vec{c})\,.
\]
By combining this with 
\eqref{eq:ub-two-feasible-sols},
the lemma follows.
\end{proof}

\begin{proof}[Proof of Theorem~\ref{thm:interpolation-thm}]

\discretization reduces the divisible instance to an instance with $k=n$ multiple levels of service and uses \MultiMechanism parameterized with $\alpha=(2+\sqrt{3})^{-1}$ to compute an allocation. Thus, the mechanism inherits the truthfulness and individual rationality properties through Lemma~\ref{lemma:monotone}. Moreover, for the approximation guarantee we have
\[
    \bar{v}(\vec{\bar{x}}(\vec{\bar{c}})) = \V(\vec{x}(\vec{c})) \geq \alpha\OPTFn(\vec{c}) \geq \frac{\alpha}{2}\OPTF(\vec{\bar{c}})\,,
\]
where the first inequality follows from Lemma \ref{lemma:guarantee} and the second inequality follows from Lemma \ref{lemma:OPTF-to-OPTFk}.

\sloppy
What remains to be shown is that \discretization is budget-feasible. We will show that, for an instance $I=(N, \vec{\bar{c}}, B, (\bar{v}_i)_{i \in N})$ of the divisible agent model, the payments prescribed by line \ref{line:cns_line_10} of \discretization coincide with the payments made by \MultiMechanism for the instance $J=(N, \vec{c}, B, n, (v_i)_{i \in N})$ which is  constructed in line \ref{line:cns_line_6} of \discretization. Specifically, we will show that for each agent $i \in N$, it holds that $\bar{p}_i(\vec{\bar{c}})=p_i(\vec{c})$. Let $w=f(x)={x}/{n}$. Indeed, for each $i \in N$, we have:
\begin{align*}
    \bar{p}_i(\vec{\bar{c}}) &= \bar{c}_i \cdot \bar{x}_i(\vec{\bar{c}}) + \int_{\bar{c}_i}^B \bar{x}_i(z, \vec{\bar{c}}_{-i})\,dz 
    = c_i n \cdot \frac{x_i(\vec{c})}{n} + \int_{c_i n}^B \frac{x_i\left({z}/{n}, \vec{c}_{-i}\right)}{n}\,dz \\
    &= c_i x_i(\vec{c}) + \int_{c_i n}^B \frac{1}{n} \cdot x_i\left(\frac{z}{n}, \vec{c}_{-i}\right)\,dz 
    = c_i x_i(\vec{c}) + \int_{c_i n}^B f'(z) \cdot x_i\left(f(z), \vec{c}_{-i}\right)\,dz \\
    &= c_i x_i(\vec{c}) + \int_{f(c_i n)}^{f(B)} x_i\left(w, \vec{c}_{-i}\right)\,dw 
    = c_i x_i(\vec{c}) + \int_{c_i}^{{B}/{n}} \!\!\! x_i\left(w, \vec{c}_{-i}\right)\,dw = p_i(\vec{c}).
\end{align*}
The first equality follows from applying Theorem \ref{thm:archerPayments} (Myerson's payment identity) to the allocation $\bar{\vec{x}}(\bar{\vec{c}})$ for $I$ in \discretization. The second equality follows from lines \ref{line:cns_line_5} and \ref{line:cns_line_9}, since, by definition, $\bar{x}_i(\vec{\bar{b}}) = {x_i\left(\frac{1}{n}\vec{\bar{b}}\right)}/{n}$ holds for all $\vec{\bar{b}} \in [0,B]^{n}$; here $\vec{\bar{b}} = (z, \vec{\bar{c}_{-i}})$ and  $\frac{1}{n}(z, \vec{\bar{c}_{-i}}) = (z/n, \vec{{c}_{-i}})$. The subsequent four equalities are obtained through simple calculus, involving a change of the variable of the integral term. Finally, the last equality holds because \MultiMechanism also prescribes payments according to Theorem \ref{thm:archerPayments} for instance $J$. Thus, as \MultiMechanism is budget-feasible by Lemma \ref{lemma:multi-bf}, budget-feasibility of \discretization follows.

\fussy
We conclude the proof of the theorem by noting that \discretization runs in time polynomial in $n$. Recall that, by Theorem \ref{thm:multi-mechanism}, Mechanism \MultiMechanism runs in time polynomial in $n$ and $k$. Since Mechanism \discretization, given an instance $I$ with $n$ agents, constructs an instance of the $n$-level model, i.e.,  $k=n$, its running time is polynomial in $n$.
\end{proof}

\color{black}

\subsection{An Improved Mechanism for Linear Valuation Functions}
\label{subsec:2apx}

The best known approximation guarantee for the special case where all divisible agents have \emph{linear} valuation functions is $(3 + \sqrt{5}) / 2 \approx 2.62$ by \citet{klumper22}. 
Below, we improve upon this by presenting a simple budget-feasible $2$-approximation mechanism for this setting, thereby establishing that the optimal approximation ratio for this problem lies in the interval $[e/(e-1), 2]$. 
Our mechanism is inspired by the randomized $2$-approximation mechanism by \citet{gravin20} for indivisible agents. To avoid unnecessarily heavy notation in this special case, let $v_i := \bar{v}_i(1)$ and, thus, $\bar{v}_i(x) = v_i x$, for all $i\in N$.

\subsubsection{Phase 1: Pruning Mechanism for Divisible Agents}
\label{subseq:pruning}

We first extend the \emph{\pruning} mechanism of Gravin et al.~\cite{gravin20} to the divisible setting. This mechanism constitutes a crucial building block for both their deterministic $3$-appro\-xi\-mation mechanism and their randomized $2$-approximation mechanism for indivisible agents \cite{gravin20}. As we show below, it serves as a useful starting point for the divisible setting as well. 

Given a profile $\Vec{c}$, this mechanism computes an allocation $\bar{\Vec{x}}(\Vec{c})$, which we refer to as the \emph{provisional allocation}, and a positive quantity $r(\Vec{c})$, which we refer to as the \emph{rate}. We assume that the agents are initially relabeled by their decreasing whole-value-per-cost ratio, i.e., $\frac{v_1}{c_1} \geq \frac{v_2}{c_2} \geq \dots \geq \frac{v_n}{c_n}$. The mechanism proceeds as described in Mechanism~\ref{algo:pruning}.

\begin{mechanism}[ht]
\caption{\pruning\ \ {\small (by Gravin et al.~\cite{gravin20})}}
\label{algo:pruning}
\nonl $\hspace{-2.3ex}\rhd$ {\bf{Input:}} $I=(N, \vec{\bar{c}}, B, (v_i)_{i \in N})$ with the agents relabeled so that $\frac{v_1}{c_1} \geq \dots \geq \frac{v_n}{c_n}$\\
Let $r=\max\{v_i \mid i \in N\} / B$.\\
\For{$i \in N$}{
 Set $\bar{x}_i = 1$ if $\frac{v_i}{c_i} \geq r$ and $\bar{x}_i = 0$ otherwise.
 } 
Let $\ell = \argmax\{i \mid \bar{x}_i=1\}$. \;
\While {$rB < \sum_{i=1}^{\ell}v_i - \max\{v_i \mid 1\le i \le \ell\}$}
{Continuously increase rate $r$.\\
\lIf{$\frac{v_{\ell}}{c_{\ell}}\leq r$}{set $\bar{x}_{\ell}=0$ and $\ell=\ell-1$.} 
}
\Return $(r,\bar{\vec{x}})$\\
\end{mechanism}

\citet{gravin20} showed that \pruning\ is monotone. In fact, an even stronger \emph{robustness} property holds (and is implicit in the proof of Lemma 3.1 in \cite{gravin20}): 
each agent $i$ that is a winner in the provisional allocation cannot alter the outcome of \pruning\ unilaterally while remaining a winner in the provisional allocation.

\begin{lemma}[implied by Lemma 3.1 of \cite{gravin20}]
 \label{lemma:pruning-invariant}
 Let $\vec{c}$ be a cost profile. Consider an agent $i \in N$ with $\bar{x}_i(\vec{c})=1$. Then, for all $c_i'$ such that $\bar{x}_i(c_i', \vec{c}_{-i})=1$, it holds that $\bar{\vec{x}}(c_i', \vec{c}_{-i})=\bar{\vec{x}}(\vec{c})$ and $r(c_i', \vec{c}_{-i}) = r(\vec{c})$.
\end{lemma}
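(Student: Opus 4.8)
The plan is to unpack the \pruning\ mechanism and argue that, for a provisional winner $i$, the entire execution trace is invariant under any unilateral deviation $c_i'$ that keeps $i$ a winner. The key observation is that the mechanism's behavior is driven entirely by the value-per-cost ratios $v_j/c_j$, the whole values $v_j$, and the budget $B$; it never uses the raw cost $c_i$ of a winner except through the ratio $v_i/c_i$, which governs when (or whether) $i$ is pruned. So I would track exactly where $c_i$ enters the computation and show each point is unaffected.

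First I would note that the initial rate $r=\max_j v_j/B$ depends only on the values $v_j$ and $B$, not on any cost, so it is identical under $\vec{c}$ and $(c_i',\vec{c}_{-i})$. Next I would examine the initial provisional allocation: agent $j$ is set to $\bar{x}_j=1$ iff $v_j/c_j\ge r$. For every $j\neq i$ this test is untouched by the deviation. For agent $i$ itself, the hypothesis is precisely that $\bar{x}_i(c_i',\vec{c}_{-i})=1$, i.e.\ $i$ is a winner in the deviated instance as well; combined with $\bar{x}_i(\vec{c})=1$, this means $i$ passes the test in both runs. Hence the initial winner set, and in particular the initial value of $\ell=\argmax\{j\mid \bar{x}_j=1\}$, is the same in both executions. (Here I would be mildly careful about the relabeling-by-ratio step at the top of the mechanism: a change in $c_i$ could in principle reorder agents. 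I would argue that since $i$ remains a winner, its ratio stays in the ``accepted'' region, and the set of accepted agents together with the pruning dynamics depend only on the multiset of ratios among winners and the prefix sums of values, both of which are preserved — this is the delicate bookkeeping point.)

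Then I would turn to the \textbf{while} loop. Its guard $rB<\sum_{j=1}^{\ell}v_j-\max\{v_j\mid 1\le j\le\ell\}$ involves only values and $B$, never costs, so given the same $\ell$ it evaluates identically. Inside the loop, the rate $r$ is increased continuously and the pruning test $v_\ell/c_\ell\le r$ is applied to the \emph{current} boundary agent $\ell$. As long as the deviating agent $i$ is not the current $\ell$, this test is independent of $c_i'$ and proceeds identically in both runs. The only way the two executions could diverge is if the process reaches $\ell=i$ and the pruning decision for $i$ differs. But if $i$ were pruned in either run, then $i$ would have $\bar{x}_i=0$ at termination, contradicting the standing assumption that $i$ is a provisional winner in \emph{both} instances. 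Therefore $i$ is never the pruned agent in either run, the loops execute in lockstep, terminate at the same $\ell$ and the same final rate, and produce the same allocation.

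I would conclude by reading off $\bar{\vec{x}}(c_i',\vec{c}_{-i})=\bar{\vec{x}}(\vec{c})$ and $r(c_i',\vec{c}_{-i})=r(\vec{c})$ from the fact that both runs are identical step-for-step. I expect the main obstacle to be the relabeling/ordering subtlety: I must rule out that changing $c_i'$ permutes agents in a way that alters which agent plays the role of $\ell$ or changes the prefix sums in the loop guard. The clean way to handle this is to phrase everything in terms of the \emph{set} of provisional winners and the order induced on them by ratios, and to observe that a winner $i$ deviating to another winning bid can only move within the accepted block without displacing any other agent's membership — so every quantity the mechanism actually consults is preserved. Since this robustness is already implicit in the proof of Lemma~3.1 of \cite{gravin20}, I would lean on that structure rather than re-deriving the full monotonicity argument from scratch.
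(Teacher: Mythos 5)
The paper does not actually prove this lemma: it is stated as ``implied by Lemma 3.1 of \cite{gravin20}'' and the text only remarks that the robustness property is implicit in that proof. Your proposal therefore takes a genuinely different (and more self-contained) route, namely a direct trace of the execution of \pruning. The argument is essentially sound, and what it buys is independence from the internals of \cite{gravin20}: the initial rate $r=\max_j v_j/B$ and the loop guard $rB<\sum_{j\le \ell}v_j-\max_{j\le\ell}v_j$ depend only on values and on the \emph{set} of current winners, while $c_i$ enters only through the ratio $v_i/c_i$, which decides if and when $i$ itself is pruned --- and $i$ is never pruned in either run by hypothesis. One caveat: your intermediate claim that the initial $\ell$ ``is the same in both executions'' is not literally true as a statement about agent identity, since changing $c_i$ can make $i$ the smallest-ratio winner and hence the new boundary agent; likewise ``the loops execute in lockstep'' is too strong if read index-by-index, because the pruning test may be applied to $i$ in one run and to a different agent in the other. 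The clean repair is exactly the reformulation you sketch at the end: the final rate is the smallest $r\ge r_0$ with $rB\ge \V(S(r))-\max_{j\in S(r)}v_j$, where $S(r)$ is the set of initial winners whose ratio exceeds $r$; since $i$ survives in both runs, its ratio exceeds the final rate in both, so $S(\cdot)$ and hence the guard function coincide on the entire relevant range, forcing the same stopping rate and the same final set. With that formulation made explicit the proof is complete; as written, the lockstep phrasing is the only soft spot.
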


Given this robustness property, \pruning\ can be used as a first filtering step to discard inefficient agents, followed by a subsequent allocation scheme which takes $(r(\vec{c}), \bar{\vec{x}}(\vec{c}))$ as input. If the subsequent allocation scheme is monotone, then the sequential composition of \pruning\ with this allocation scheme is monotone as well. This \emph{composability property} is proven in Lemma 3.1 of \cite{gravin20}.

Let $(r, \bar{\vec{x}})$ be the output of \pruning\ for a cost profile $\vec{c}$. Given $\bar{\vec{x}}$, we define $S$ as the set of agents that are provisionally allocated, $i^*$ as the highest value agent in $S$ (where ties are broken lexicographically), and $T$ as the set of remaining agents. Formally, 
\begin{equation}
\label{eq:set-def-pruning}
S = \sset{i \in N}{\bar{x}_i = 1}, \quad 
i^* \in\argmax\sset{v_i}{i \in S}, \quad \text{and} \quad
T = S \setminus \set{i^*}\,.
\end{equation}
Note that the definitions of $S$, $i^*$ and $T$ depend on $\bar{\vec{x}}$ (and thus the cost profile $\vec{c}$). For notational convenience, we do not state this reference explicitly if it is clear from the context. Similarly, here we typically drop the argument of $\OPTF(\vec{c})$. 

The following properties were proved in \cite{gravin20} and are useful in our analysis. 
\begin{lemma}[Lemma 3.2 of \cite{gravin20}]
 \label{lemma:pruning-properties}
 Given a profile $\vec{c}$, let $(r, \bar{\Vec{x}})$ be the output of\, \pruning. Let $S = T \cup \set{i^*}$ be defined as in \eqref{eq:set-def-pruning} with respect to $\bar{\vec{x}}$. We have 
 \begin{enumerate}[labelindent=\parindent,leftmargin=*,itemsep=3pt,topsep=3pt]
 \item[(i)] $c_i \leq \frac{v_i}{r} \leq B$ for all $i \in S$. 
 \item[(ii)] $\V(T) \leq rB < \V(S)$.
 \item[(iii)] $\OPTF \leq \V(S)+ r\cdot (B - c(S))$. 
 \end{enumerate}
\end{lemma}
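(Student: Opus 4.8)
The plan is to analyze \pruning\ operationally, tracking the two quantities it maintains: the rate $r$, which only ever increases from its initial value $\max_{j} v_j/B$, and the provisionally allocated set $S$, which always remains a \emph{prefix} $\set{1,\dots,\ell}$ of the fixed ordering $\frac{v_1}{c_1}\ge\dots\ge\frac{v_n}{c_n}$ (agents are only removed from the tail). The key invariant I would establish first, by induction over the \textbf{while} loop, is the threshold characterization $v_i/c_i \ge r$ for every $i\in S$ and $v_j/c_j \le r$ for every $j\notin S$. It holds at initialization by the selection rule, and it is preserved because each iteration raises $r$ and removes the current tail agent $\ell$ (the member of $S$ with the smallest ratio) precisely when $r$ reaches $v_\ell/c_\ell$. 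Part (i) is then immediate: for $i\in S$ the invariant gives $v_i/c_i\ge r$, i.e. $c_i\le v_i/r$, and since $r\ge \max_j v_j/B \ge v_i/B$ we also get $v_i/r\le B$.

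For part (ii), the bound $\V(T)\le rB$ is precisely the negation of the loop guard at termination, since $\V(T)=\V(S)-v_{i^*}=\sum_{i=1}^{\ell}v_i-\max_{1\le i\le \ell}v_i$ is the right-hand side of the \textbf{while} condition. For the strict bound $rB<\V(S)$ I would exploit that the guard $rB<\V(T)$ holds throughout the loop and pass to the terminal rate $r^*$. If the loop halts by $r$ overtaking a fixed $\V(T)$, then $rB=\V(T)<\V(T)+v_{i^*}=\V(S)$. If instead it halts immediately after the tail agent $\ell$ is removed, then evaluating the guard on the pre-drop set and taking the limit of rates approaching $r^*$ from below yields $r^*B\le \V(S)+v_\ell-v_{\max}$, where $v_{\max}$ is the top value of that pre-drop set; since $\ell$ lies in the set we have $v_\ell\le v_{\max}$, so $r^*B\le\V(S)$. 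Strictness holds except in the degenerate case $|S|=1$ (where $T=\emptyset$ and $rB=v_{i^*}=\V(S)$), which is trivial to handle on its own.

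Part (iii) I would obtain from weak LP duality for the fractional knapsack. Since the valuations are linear, $\OPTF=\max\set{\textstyle\sum_i v_ix_i : \sum_i c_ix_i\le B,\ x_i\in[0,1]}$; choosing the multiplier $r$ for the budget constraint and $\max\set{0,v_i-rc_i}$ for the box constraints is dual-feasible, so
\begin{equation*}
 \OPTF \le rB + \sum_{i\in N}\max\set{0,\,v_i-rc_i}\,.
\end{equation*}
By the threshold invariant, $v_i-rc_i\ge 0$ exactly on $S$ and $\le 0$ off $S$, so the sum collapses to $\sum_{i\in S}(v_i-rc_i)=\V(S)-r\,c(S)$, giving $\OPTF\le \V(S)+r(B-c(S))$.

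The step I expect to be the main obstacle is the strict inequality in part (ii): it is the only place where the exact semantics of the continuous rate increase and the interleaving of ``raise $r$ / drop $\ell$ / re-test the guard'' matter, and where the degenerate single-winner boundary must be addressed. By contrast, parts (i) and (iii) are essentially mechanical once the prefix/threshold invariant is in place.
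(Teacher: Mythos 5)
The paper does not actually prove this lemma: it is imported verbatim as Lemma 3.2 of \cite{gravin20} (``The following properties were proved in \cite{gravin20}''), so there is no in-paper argument to compare against and your reconstruction has to stand on its own. It largely does. The prefix/threshold invariant ($v_i/c_i \ge r$ on $S$, $v_j/c_j \le r$ off $S$, maintained because $S$ is always a prefix of the density ordering and only the tail is ever dropped) is the right structural fact, and it gives (i) immediately, using $r \ge \max_j v_j / B \ge v_i/B$ for the upper bound. Your weak-duality derivation of (iii), namely $\OPTF \le rB + \sum_i \max\{0, v_i - r c_i\}$ followed by the observation that the positive part is supported exactly on $S$, is clean and correct for the linear fractional knapsack. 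The first half of (ii) is, as you say, just the negated loop guard at termination.

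The one weak spot is the strict inequality $rB < \V(S)$. You defer the case $|S|=1$ as ``trivial to handle on its own,'' but in that case the strict inequality is simply \emph{false}: the global maximum-value agent always belongs to the initial provisional set (since $c_i \le B$), so if that set is already a singleton the loop never runs and $rB = \max_j v_j = v_{i^*} = \V(S)$, an equality; the same degeneracy occurs whenever $\V(T)=0$. So this is not a case one can ``handle'' --- the statement as imported holds strictly only under the implicit non-degeneracy $\V(T)>0$, which is in any event needed for the constant $q$ of \paam\ to be well defined. Your own case analysis of the termination events in fact only delivers $rB \le \V(S)$ in the drop-then-halt branch (when $v_\ell = v_{\max}$ of the pre-drop set), and the non-strict version is all the paper ever uses ($\V(T)\le rB$ for budget feasibility, $\V(S)\ge rB$ so that $q\ge 0$). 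State the bound as non-strict, or explicitly exclude the degenerate instances, rather than leaving it as a deferred triviality.
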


\subsubsection{Phase 2: The Final Allocation Scheme} 

Our mechanism combines the \pruning\ mechanism above with the allocation scheme defined in \eqref{eq:ind-alloc} below. We refer to the resulting mechanism as \emph{\paam} (see Mechanism~\ref{mechanism:posted-price}).

First, we need to define the following constants: 
\[
q_{i^*} = 
\begin{cases} 
\frac{1}{2} - q & \!\text{if $v_{i^*} \le \V(T)$} \\
\frac{1}{2} & \!\text{otherwise}
\end{cases},
\qquad
q_i = 1 - q_{i^*} - q \quad \forall i \in T, 
\quad\text{where}\quad
q = \frac{1}{2} \frac{\V(S) - rB}{\min\{v_{i^*}, \V(T)\}}\,.
\]
Note that the constant $q_i$ for all agents $i \in T$ is the same. It is not hard to prove that $q \in \big[0, \frac12\big]$ (see \cite[Lemma 5.1]{gravin20}). The constants above are chosen so that
\begin{equation}
 \label{eq:identity}
 \frac{rB}{2} = q_{i^*} v_{i^*} + (1-q_{i^*}-q) \V(T)\,.
\end{equation}

For given $r$ and $S$, as is the case in line \ref{line:paam-assign} of \paam, we are going to define our (fractional) allocation function $x_i(\vec{c})$, for $i \in S$, so that it only depends on agent $i$'s cost $c_i$ and, thus, we will slightly abuse notation and write $x_i(c_i)$ in what follows. For each agent $i \in S$ let 
\begin{align}
x_i(z)
&= 
q_{i} + \frac{v_{i} - r z}{2v_{i}} \quad \text{for} \ \ z \in \Big[0, \frac{v_i}{r}\Big]\,.
\label{eq:ind-alloc}
\end{align}
It is not hard to verify that $0\le x_i(z)\ \le 1$, given the chosen parameters $q_{i^*}$, $q_T$ and $q$ above. 
Further, by property (i) of Lemma~\ref{lemma:pruning-properties}, the cost $c_i$ of each agent $i \in S$ is at most ${v_i}/{r}$ and, therefore, $x_i(c_i)$ is well-defined.

\begin{mechanism}[ht]
\caption{\paam\ }
\label{mechanism:posted-price}
\nonl $\hspace{-2.3ex}\rhd$ {\bf{Input:}} $I=(N, \vec{\bar{c}}, B, (v_i)_{i \in N})$  with the agents relabeled so that $\frac{v_1}{c_1} \geq \frac{v_2}{c_2} \geq \dots \geq \frac{v_n}{c_n}$ \;
Obtain $(r, \bar{\vec{x}})$ by running \pruning\ for profile $\vec{c}$. \;
Let $S = \sset{i \in N}{\bar{x}_i = 1}$,\: $i^* \in\argmax\sset{v_i}{i \in S}$,\: $T = S \setminus \set{i^*}$. \;
Determine the fractional allocations $x_i(c_i)\,, \forall i \in S$, as in \eqref{eq:ind-alloc}, and set $x_i(c_i) = 0\,, \forall i \in N\setminus S$. \label{line:paam-assign}\;
Allocate $\vec{x}$ and determine the payments $\vec{p}$ according to \eqref{eq:payment-id-BF}. \;
\end{mechanism}

\begin{theorem}
 \label{thm:prune-and-assign}
 \paam\ is a truthful, individually rational, budget-feasible $2$-approximation mechanism for instances of the divisible agent model with linear valuation functions, and runs in time polynomial in $n$.
\end{theorem}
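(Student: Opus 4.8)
\medskip

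The plan is to verify the four required properties in turn, leveraging the properties of \pruning\ already established in Lemmata \ref{lemma:pruning-invariant} and \ref{lemma:pruning-properties}. First I would address \textbf{truthfulness and individual rationality}. Since the payments are set via \eqref{eq:payment-id-BF}, by Theorem \ref{thm:archerPayments} it suffices to show the composite allocation rule of \paam\ is monotone non-increasing. I would argue this using the \emph{composability property} noted after Lemma \ref{lemma:pruning-invariant}: \pruning\ is itself monotone (indeed robust), so it suffices to show that the second-phase allocation scheme \eqref{eq:ind-alloc} is monotone in each winning agent's own cost. Inspecting \eqref{eq:ind-alloc}, $x_i(z) = q_i + \frac{v_i - rz}{2v_i}$ is manifestly non-increasing in $z$ for fixed $r$ and $S$; the subtlety is that a winner's deviation could in principle change $r$ or $S$, but by the robustness of Lemma \ref{lemma:pruning-invariant}, as long as agent $i$ remains a winner in the provisional allocation, both $r(\vec{c})$ and $\bar{\vec{x}}(\vec{c})$ (hence $S$, $i^*$, $T$, and all the $q$-constants) are unchanged. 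Thus within the winning region the allocation depends only on $c_i$ through the explicitly decreasing formula, and monotonicity follows.

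\medskip

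Next I would establish \textbf{budget-feasibility}. The payment to each winner $i \in S$ is $p_i(\vec{c}) = c_i x_i(c_i) + \int_{c_i}^{\infty} x_i(z)\,dz$, and since $x_i(z)$ is supported on $[0, v_i/r]$ (it reaches $0$ at $z = v_i/r$ by the choice of $q_i$ together with the $\tfrac{v_i - rz}{2v_i}$ term — this is exactly where the constants $q_{i^*}, q_i, q$ are engineered), the integral truncates at $v_i/r$. I would compute $\sum_{i \in S} p_i$ in closed form; the linear-plus-integral structure of each $p_i$ makes each term a simple quadratic expression in the parameters. The key is that the defining identity \eqref{eq:identity}, namely $\tfrac{rB}{2} = q_{i^*} v_{i^*} + (1 - q_{i^*} - q)\V(T)$, is precisely calibrated so that the total payment telescopes to at most $B$. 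I expect the computation to reduce the sum of payments to an expression governed by $r$, $B$, $\V(S)$, and $\V(T)$, which by property (ii) of Lemma \ref{lemma:pruning-properties} ($\V(T) \le rB < \V(S)$) is bounded by $B$.

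\medskip

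For the \textbf{$2$-approximation guarantee}, I would lower bound $\V(\vec{x}(\vec{c})) = \sum_{i \in S} v_i x_i(c_i)$ and compare it to the benchmark via property (iii) of Lemma \ref{lemma:pruning-properties}, $\OPTF \le \V(S) + r(B - c(S))$. Substituting \eqref{eq:ind-alloc}, each winner contributes $v_i x_i(c_i) = v_i q_i + \tfrac{1}{2}(v_i - r c_i)$, so the procured value splits into a ``baseline'' part $\sum_i v_i q_i$ and a ``surplus'' part $\tfrac12 \sum_i (v_i - r c_i) = \tfrac12(\V(S) - r\, c(S))$. The surplus part matches the $r(B - c(S))$ slack term in the bound on $\OPTF$ up to the factor, and the baseline part, via \eqref{eq:identity}, should account for the remaining $\V(S)$ contribution; combining these yields $2\,\V(\vec{x}) \ge \OPTF$. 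I anticipate the \textbf{main obstacle} to be the budget-feasibility calculation: one must carefully handle the two cases in the definition of $q_{i^*}$ (whether $v_{i^*} \le \V(T)$ or not) and verify that the parameter identity \eqref{eq:identity} makes the summed payments collapse exactly to the budget in both regimes, while simultaneously ensuring each $x_i(c_i) \in [0,1]$ so that the allocation and its payments are well-defined. Finally, polynomial running time is immediate since \pruning\ runs in polynomial time and the second phase evaluates $n$ closed-form expressions.
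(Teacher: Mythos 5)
Your plan follows essentially the same route as the paper's proof: monotonicity via the robustness/composability of \pruning\ (Lemma \ref{lemma:pruning-invariant}) plus the explicit monotonicity of \eqref{eq:ind-alloc}, budget-feasibility by integrating each $x_i$ over $[0, v_i/r]$ and collapsing the sum via \eqref{eq:identity}, and the approximation bound from property (iii) of Lemma \ref{lemma:pruning-properties} combined with \eqref{eq:identity}. Two small corrections: $x_i(v_i/r) = q_i$, which is generally positive, so the payment integral truncates at $v_i/r$ not because the allocation formula vanishes there but because \pruning\ forces $x_i(z,\vec{c}_{-i}) = 0$ for $z > v_i/r$; and the total payment does not ``collapse exactly to the budget'' --- it is bounded by $\tfrac{B}{2} + \tfrac{\V(S)}{4r}$, after which one still needs $rB \ge \V(S)/2$, which is where the case analysis on $q_{i^*}$ together with $\V(T)\le rB$ and $q\le \tfrac12$ actually enters.
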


Given the lower bound of $1+\sqrt{2} \approx 2.41$ by \citet{chen11} in this setting with indivisible agents, Theorem \ref{thm:prune-and-assign} establishes a separation between the indivisible agent model and the divisible agent model with linear valuation functions.
The theorem follows by the three lemmata below.

\begin{lemma}
\paam\ is monotone.
\end{lemma}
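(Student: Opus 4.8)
The plan is to establish monotonicity of \paam\ by leveraging the composability property of \pruning\ (Lemma 3.1 of \cite{gravin20}, as recalled after Lemma~\ref{lemma:pruning-invariant}). That result tells us that the sequential composition of \pruning\ with any \emph{monotone} subsequent allocation scheme is itself monotone. Hence the entire burden reduces to verifying that the final allocation scheme defined by \eqref{eq:ind-alloc} is monotone, i.e., that for each agent $i \in S$ the quantity $x_i(c_i)$ is non-increasing in the declared cost $c_i$. Since agents not in $S$ receive allocation $0$, they trivially pose no monotonicity issue on their own; the subtlety is entirely about winners in the provisional allocation.

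First I would recall the robustness property (Lemma~\ref{lemma:pruning-invariant}): if $i$ is a provisional winner, then any deviation $c_i'$ for which $i$ remains a provisional winner leaves $(r, \bar{\vec{x}})$, and therefore $S$, $i^*$, $T$, and all the constants $q_{i^*}, q_i, q$, completely unchanged. This is the crucial point, because it means that along the entire range of declarations keeping $i$ a winner, the function $x_i(\cdot)$ is literally the fixed closed-form expression in \eqref{eq:ind-alloc} with constant coefficients. I would then simply differentiate (or inspect) $x_i(z) = q_i + \frac{v_i - r z}{2 v_i}$ with respect to $z$: the derivative is $-\frac{r}{2 v_i} < 0$, so $x_i$ is strictly decreasing on $\big[0, \tfrac{v_i}{r}\big]$. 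Thus lowering one's cost weakly increases one's allocation, as required.

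The remaining step is to handle the boundary behaviour, namely the transition of $i$ from winner to loser (and vice versa) as $c_i$ crosses its critical threshold. Here I would argue that as $c_i$ increases past the point where $i$ ceases to be a provisional winner, $x_i$ drops to $0$, and this can only be a downward jump: at the threshold $z \to \tfrac{v_i}{r}$ the formula \eqref{eq:ind-alloc} gives $x_i\big(\tfrac{v_i}{r}\big) = q_i \ge 0$, and once $i$ is pruned the allocation is exactly $0 \le q_i$. Conversely, decreasing $c_i$ so that a previously losing agent becomes a winner can only move its allocation up from $0$. Combining the strictly decreasing interior behaviour with a non-increasing jump at the winner/loser boundary yields global monotonicity of $x_i(\cdot)$, and composing with the monotone \pruning\ stage via the composability property completes the argument.

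The main obstacle I anticipate is not the interior calculation, which is immediate, but the careful bookkeeping at the winner/loser boundary: one must confirm that the constants $q_{i^*}, q_i, q$ truly stay fixed under all winner-preserving deviations (this is exactly what Lemma~\ref{lemma:pruning-invariant} buys us) and that the allocation value at the threshold is non-negative so the jump to $0$ is downward rather than upward. Invoking the robustness and composability properties of \pruning\ is what makes this manageable; without them one would have to re-derive how $S$, $i^*$, $T$, and $r$ respond to deviations, which would be considerably more delicate.
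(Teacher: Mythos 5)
Your proposal is correct and follows essentially the same route as the paper's proof: reduce to monotonicity of the second-stage allocation via the composability of \pruning, use the robustness property (Lemma~\ref{lemma:pruning-invariant}) to fix $r$, $S$, $i^*$, $T$ and hence the constants $q_{i^*}, q_i, q$ across winner-preserving deviations, and observe that \eqref{eq:ind-alloc} is decreasing in $z$. The paper dispatches the winner/loser boundary slightly more simply---if $\bar{x}_i(\vec{c})=0$ at the higher cost then $x_i(\vec{c})=0$ and the claim is immediate---whereas you verify the downward jump at the threshold explicitly; both are valid.
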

\begin{proof}
 As argued above, \pruning\ is monotone and it suffices to show that the allocation scheme in \eqref{eq:ind-alloc} is monotone. 
 Fix an arbitrary agent $i$ and a cost profile $\vec{c}$. 
 Let $c'_i < c_i$. 
 We need to prove that $x_i(c'_i, \vec{c}_{-i}) \ge x_i(\vec{c})$.
 
 Let $\bar{\vec{x}}(\vec{c})$ be the provisional allocation obtained from \pruning. If $\bar{x}_i(\vec{c}) = 0$ the claim follows trivially. 
 Assume $\bar{x}_i(\vec{c}) = 1$. Because \pruning\ is monotone, we have $\bar{x}_i(c'_i, \vec{c}_{-i}) = 1$. Also, because \pruning\ is robust the output remains the same, i.e., $(\bar{\vec{x}}(\vec{c}), r(\vec{c})) = (\bar{\vec{x}}(c'_i, \vec{c}_{-i}), r(c'_i, \vec{c}_{-i}))$. In particular, this implies that the respective rates that determine the fractional allocation in \eqref{eq:ind-alloc} are the same.
 The claim now follows by observing that the allocation functions in \eqref{eq:ind-alloc} are monotone non-increasing in $z$.
\end{proof}

\begin{lemma}
\paam\ has an approximation ratio of \, $2$.
\end{lemma}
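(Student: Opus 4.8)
The plan is to show that $\V(\vec{x}(\vec{c})) \ge \tfrac12 \OPTF(\vec{c})$, i.e.\ that the fractional allocation defined in \eqref{eq:ind-alloc} collects at least half of the optimal fractional value. I would first compute the total value $\V(\vec{x})$ collected over the winning set $S = T \cup \set{i^*}$. Plugging the allocation \eqref{eq:ind-alloc} into $\sum_{i\in S} v_i x_i(c_i)$ and splitting the sum into the $\tfrac{v_i - rc_i}{2v_i}$ term and the $q_i$ term, the first part contributes $\tfrac12 \sum_{i \in S}(v_i - r c_i) = \tfrac12(\V(S) - r\,c(S))$, while the second part contributes $q_{i^*} v_{i^*} + \sum_{i \in T} q_i v_i = q_{i^*}v_{i^*} + (1 - q_{i^*} - q)\V(T)$. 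By the defining identity \eqref{eq:identity}, the latter equals exactly $\tfrac{rB}{2}$. Hence $\V(\vec{x}) = \tfrac12\big(\V(S) - r\,c(S)\big) + \tfrac{rB}{2} = \tfrac12\big(\V(S) + r(B - c(S))\big)$.

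The proof then closes immediately by invoking property (iii) of Lemma~\ref{lemma:pruning-properties}, which states $\OPTF \le \V(S) + r(B - c(S))$. Combining, $\V(\vec{x}) \ge \tfrac12 \OPTF$, which is exactly the claimed $2$-approximation. So the structural heart of the argument is that the constants $q_{i^*}$, $q_i$, $q$ were reverse-engineered precisely so that the ``baseline'' $q_i$ portion of each agent's allocation sums to $\tfrac{rB}{2}$, turning the total collected value into the same upper-bounding expression that Lemma~\ref{lemma:pruning-properties}(iii) provides for $\OPTF$.

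The main obstacle, to the extent there is one, is bookkeeping rather than conceptual: I must verify that the allocation values $x_i(c_i)$ are genuinely used at the declared (true) costs $c_i$ so that the $-rc_i$ terms aggregate to $-r\,c(S)$, and that the identity \eqref{eq:identity} is applied with the correct value of $q$ in both the $v_{i^*} \le \V(T)$ and the complementary case. I would be careful that $\V(S) = v_{i^*} + \V(T)$ and that the two cases defining $q_{i^*}$ both satisfy \eqref{eq:identity}; this is where one could slip, since $q_{i^*}$ is piecewise and $q$ depends on $\min\{v_{i^*}, \V(T)\}$. Once \eqref{eq:identity} is granted (it is stated as holding by the choice of constants), the value computation is a short linear manipulation with no case analysis leaking into the final bound, so I do not anticipate any genuine difficulty beyond confirming that the pieces align.
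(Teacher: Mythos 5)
Your proposal is correct and is essentially the paper's own argument: both rest on property (iii) of Lemma~\ref{lemma:pruning-properties} together with the identity \eqref{eq:identity}, and differ only in presentation order (you compute $\V(\vec{x}) = \tfrac12\big(\V(S) + r(B - c(S))\big)$ first and then invoke the upper bound on $\OPTF$, whereas the paper starts from that upper bound and rewrites it via \eqref{eq:ind-alloc}). The bookkeeping you flag checks out: \eqref{eq:identity} holds in both branches of the definition of $q_{i^*}$, so the argument goes through as you describe.
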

\begin{proof}
 By property (iii) of Lemma~\ref{lemma:pruning-properties}, we have 
 \begin{align*}
 \OPTF
 & \leq \V(S)+ r \cdot (B - c(S) )
 = \sum_{i \in S} (v_{i} - r c_{i}) + rB\,.
 \end{align*}
 By the definition in \eqref{eq:ind-alloc}, we have for every $i \in T \cup \set{i^*}$:
 \[
 x_i(z) = q_i + \frac{v_i - r z}{2v_i}, 
 \quad \text{or, equivalently,} \quad
 v_i - rz = 2v_i (x_i(z) - q_i)\,.
 \]
 Combining these two inequalities above, we obtain 
 \begin{align*}
 \OPTF 
 & \le 2 \sum_{i \in S} v_i x_i(c_i) - 2 \sum_{i \in S} v_i q_i + rB 
 = 2 \V(\vec{x}) - 2 (q_{i^*} v_{i^*} + (1-q_{i^*}-q) \V(T)) + rB = 2 \V(\vec{x}(\vec{c}))\,, 
 \end{align*}
 where for the last equality we exploit the identity in \eqref{eq:identity}. This concludes the proof.
\end{proof}

\begin{lemma}
\paam\ is budget-feasible.
\end{lemma}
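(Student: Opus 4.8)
The plan is to bound each winner's payment through Myerson's identity \eqref{eq:payment-id-BF} and then sum these bounds using the balancing identity \eqref{eq:identity}. Since every agent $i \notin S$ is allocated $x_i(c_i)=0$ and the allocation is monotone (as just shown), such agents receive zero payment; hence it suffices to control $\sum_{i \in S} p_i(\vec{c})$.

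First I would determine the allocation curve $z \mapsto x_i(z,\vec{c}_{-i})$ for a fixed winner $i \in S$. By the robustness property (Lemma~\ref{lemma:pruning-invariant}), as long as $i$ stays provisionally allocated the whole output $(r,\bar{\vec{x}})$ of \pruning\ is unchanged, and with it the set $S$ and the constant $q_i$; throughout this range the final allocation is given by \eqref{eq:ind-alloc}, namely $x_i(z)=q_i+\frac{v_i-rz}{2v_i}$. Let $\tau_i$ be the critical cost beyond which $i$ is pruned (it exists by monotonicity), so that $x_i(z,\vec{c}_{-i})=0$ for $z>\tau_i$. Because the rate remains $r$ while $i$ is a winner, staying a winner forces $v_i/z\ge r$, whence $\tau_i \le v_i/r$. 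The subtle point here---the step I expect to require the most care---is that $\tau_i$ could in principle be strictly below $v_i/r$; I would bypass this by observing that the integrand $q_i+\frac{v_i-rz}{2v_i}$ is nonnegative on $[0,v_i/r]$ (it equals $q_i\ge 0$ at $z=v_i/r$), so replacing the true upper limit $\tau_i$ by $v_i/r$ can only increase the integral.

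With the curve identified, a direct calculation of $p_i=c_i x_i(c_i)+\int_{c_i}^{\tau_i} x_i(z)\,dz$ (in which the $c_i\bigl(q_i+\tfrac12\bigr)$ contributions cancel) yields the clean bound
\begin{equation*}
 p_i \le c_i x_i(c_i)+\int_{c_i}^{v_i/r}\Bigl(q_i+\tfrac{v_i-rz}{2v_i}\Bigr)dz = \frac{v_i}{r}\Bigl(q_i+\tfrac14\Bigr)-\frac{r c_i^2}{4v_i} \le \frac{v_i}{r}\Bigl(q_i+\tfrac14\Bigr)\,.
\end{equation*}
Summing over $S$ and recalling that $q_i$ is constant across $T$, I would invoke \eqref{eq:identity} in the form $\sum_{i\in S} v_i q_i = q_{i^*}v_{i^*}+(1-q_{i^*}-q)\V(T)=\tfrac{rB}{2}$ to obtain $\sum_{i\in S} p_i \le \tfrac{1}{r}\cdot\tfrac{rB}{2}+\tfrac{\V(S)}{4r}=\tfrac{B}{2}+\tfrac{\V(S)}{4r}$.

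Finally I would bound $\V(S)$. Property~(i) of Lemma~\ref{lemma:pruning-properties} gives $v_{i^*}\le rB$ and property~(ii) gives $\V(T)\le rB$, so $\V(S)=v_{i^*}+\V(T)\le 2rB$ and therefore $\tfrac{\V(S)}{4r}\le\tfrac{B}{2}$. Combining the two estimates gives $\sum_{i\in N} p_i=\sum_{i\in S} p_i\le B$, which is exactly budget-feasibility.
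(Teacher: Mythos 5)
Your proof is correct and follows essentially the same route as the paper's: bound each winner's payment by $\frac{v_i}{r}\bigl(q_i+\frac14\bigr)$ via Myerson's identity and the explicit allocation curve, sum using the balancing identity \eqref{eq:identity} to get $\frac{B}{2}+\frac{\V(S)}{4r}$, and then show $\V(S)\le 2rB$. The only (minor, and perfectly valid) deviation is in that last step: you get $\V(S)=v_{i^*}+\V(T)\le 2rB$ directly from properties (i) and (ii) of Lemma~\ref{lemma:pruning-properties}, whereas the paper rederives it from \eqref{eq:identity} together with $q\le\frac12$ and $\V(T)\le rB$; your careful treatment of the true critical cost $\tau_i$ possibly lying below $v_i/r$ is also a slightly more rigorous handling of the integration limit than the paper's.
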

\begin{proof}
 Given a cost profile $\vec{c}$, the payment of agent $i\in S$ is defined as
 \[
 p_i(\vec{c}) 
 = c_i x_i(\vec{c}) + \int_{c_i}^{B}\! x_i(z, \vec{c}_{-i}) dz
 = c_i x_i(c_i) + \int_{c_i}^{{v_i}/{r}} \!\!\!x_i(z) dz
 \le \int_{0}^{{v_i}/{r}} \!\!\!x_i(z) dz \,.
 \]
 Here the second equality holds because \pruning\ ensures that $x_i(z,\vec{c}_{-i}) = 0$ for $z > v_i/r$. The inequality holds because $x_i(\cdot)$ is strictly decreasing in $[0, c_i]$. 
 
 Consider an agent $i \in T \cup \set{i^*}$. We have 
 \begin{align*}
 \int_{0}^{v_i/r} \! \! \! x_i(z) dz 
 & = \int_{0}^{{v_{i}}/{r}} \! \left(q_i + \frac{v_{i} - r z}{2v_{i}}\right) dz 
 = \frac{v_{i}}{r} \left( q_i + \frac{1}{2}\right) - \frac{r}{2v_{i}} \int_{0}^{{v_{i}}/{r}} \!\!\! z dz 
 = \frac{v_{i}}{r} \left( q_i + \frac{1}{4} \right)\,. 
 \end{align*}
 Summing over all agents, we obtain that the total payment is at most
 \begin{align*}
 \sum_{i=1}^np_i(\vec{c})=\sum_{i \in S} p_i(\vec{c}) 
 & \le \sum_{i \in S} \frac{v_{i}}{r} \left( q_i + \frac{1}{4} \right)
 = \frac{1}{r} \left(q_{i^*} v_{i^*} + (1-q_{i^*}-q) \V(T) \right) + \frac{\V(S)}{4r}
 = \frac{B}{2} + \frac{\V(S)}{4r}\,,
 \end{align*}
 where the last equality uses the identity in \eqref{eq:identity}. 
 
 We conclude the proof by showing that $\V(S)/4r \le B/2$, or, equivalently, $rB \ge \V(S)/2$. Starting with \eqref{eq:identity} and applying the definition of $q_{i^*}$, we have
 
 \begin{align*}
 \frac{rB}{2} 
 & = (q_{i^*} v_{i^*} + (1-q_{i^*}-q) \V(T)) 
 = \frac12 ( v_{i^*} + \V(T)) - q \cdot \min\set{v_{i^*}, \V(T)} 
 \ge \frac12 \V(S) - q \V(T) \,.
 \end{align*} 
 
 The claim follows by observing that $\V(T) \le rB$ by property (ii) of Lemma~\ref{lemma:pruning-properties} and $q \le \frac12$ by definition.
\end{proof}

We conclude the section by showing that our analysis of \paam is tight. Indeed, consider an instance with $2$ agents with $v_1=v_2=1$, $c_1=\epsilon, c_2=1-\epsilon$, with $\epsilon \in (0,1)$, and the budget of the auctioneer being $B=1$. The optimal hiring scheme is $\vec{x}^*=(1,1)$ and, therefore, $\V(\vec{x}^*(\vec{c}))=2$. Consider now the outcome of \paam for this instance. Initially, \pruning does not discard either bidder since the \textbf{while} condition for $r=1/B$ evaluates to \emph{False}. Thus, $S=\{1,2\}$ and $r=1/B$. Then, it is easy to observe the final hiring scheme of \paam prescribes an allocation of $x_1(0)=x_2(1)=1/2$ (by viewing bidder $1$ as $i^*$). Therefore $\V(\vec{x}(\vec{c}))=1$, which matches the guarantee of Theorem \ref{thm:prune-and-assign}.

\section{Conclusion and Future Work}

In this work we revisited two budget-feasible mechanism design settings where partial allocations are
allowed and draw clear connections between them. Under the mild assumption of being able to afford each agent entirely, we give deterministic, truthful and budget-feasible mechanisms with constant approximation guarantees. We believe these are settings that are both interesting and relevant to applications and there are several open questions we do not settle here. A natural direction, not considered at all in this work, is to deal with additional combinatorial constraints, like matching, matroid, or even polymatroid (for the $k$-level setting) constraints. For the 
$k$-level setting, it would also be interesting to understand whether we can obtain mechanisms with approximation guarantees closer to those possible for single-level settings, or alternatively, determine whether allowing multiple levels of service is an inherently harder problem. At the same time, it would be interesting to obtain an improved approximation guarantee for concave valuation functions in the divisible agents model. A possible avenue here could be to come up with a more nuanced discretization procedure than that of \discretization. Finally, as far as simple settings are concerned, the most important open problem is still the indivisible agents case with additive valuation functions, for which the best-possible approximation ratio is in $[1+\sqrt{2},\ 3]$ (due to \cite{chen11, gravin20}). The corresponding range for the divisible agent setting is $[e/(e-1),\ 2]$ (due to \cite{anari18} and our Theorem \ref{thm:prune-and-assign}). Any progress on these fronts may give rise to novel techniques, which could further be used for problems in richer environments.

\section*{Acknowledgements}
This work was supported by the Dutch Research Council (NWO) through its Open Technology Program, proj.~no.~18938, and the Gravitation Project NETWORKS, grant no.~024.002.003.
It has also been partially supported by project MIS 5154714 of the National Recovery and Resilience Plan Greece 2.0, funded by the European Union under the NextGenerationEU Program and the EU Horizon 2020 Research and Innovation Program under the Marie Skłodowska-Curie Grant Agreement, grant no.~101034253.
Moreover, it was partially supported by the project MIS 5154714 of the National Recovery and Resilience Plan Greece $2.0$ funded by the European Union under the NextGenerationEU Program and by the Hellenic Foundation for Research and Innovation (HFRI), partially by the ``1st Call for HFRI Research Projects to support faculty members and researchers and the procurement of high-cost research equipment'' (proj.~no.~HFRI-FM17-3512) and partially by the HFRI call “Basic research Financing (Horizontal support of all Sciences)” under the National Recovery and Resilience Plan “Greece 2.0” funded by the European Union – NextGenerationEU (proj.~no.~ 15877). Finally, part of this work was done during AT's visit to the University of Essex that was supported by COST Action CA16228 (European Network for Game Theory).

\bibliographystyle{ACM-Reference-Format}
\bibliography{arxiv_final/references_final}

\appendix

\section{\texorpdfstring{\BestInMechanism}{\textsc{Greedy-Best-In}}: Almost tight for the Best-in Setting}
\label{app:unrestricted_k_levels}
As described in Section \ref{subsec:multi-def}, throughout this work we consider the $k$-level model under the \emph{all-in setting}, i.e., given a cost profile $\vec{c}$, for every agent $i \in N$ it holds that $k \cdot c_i \leq B$. However, the $k$-level model has also been studied in the \emph{best-in setting} (see \citep{chan14}), 
where the weaker assumption that $c_i \leq B$ for all $i \in N$ is assumed instead. In this appendix, we show that the truthful and budget-feasible mechanism \MultiMechanism (Mechanism \ref{mechanism:multi} defined in Section~\ref{sec:BF-Greedy}), 
requires just a few modifications to achieve an approximation guarantee that is almost tight for this setting. More precisely, for $k$ levels of service and concave valuation functions, we re-parameterize our mechanism \MultiMechanism and obtain a guarantee of $k+2+ o(1)$, almost matching the known lower bound of $k$ of \citet{chan14} 
(see also Remark \ref{remark:chan_lb} in Section \ref{subsec:multi-def}). 
Note that Theorem \ref{thm:bestin} directly shows that the lower bounds of \citet{chan14} cannot be generalized for instances with multiple agents and remain linear or logarithmic to the total number of levels of service across all agents. In that sense, their upper bound is close to being tight for a constant number of agents, but its gap with the lower bound grows as an unbounded function of $n$. Instead, the guarantee of our Theorem \ref{thm:bestin} is almost best-possible \emph{for any} $n$.

For presentation purposes, we restate this parameterized version of the mechanism separately in this section, ``rebrand'' it as \BestInMechanism and analyze it in a mostly self-contained manner. 

\begin{mechanism}[ht]
\caption{\BestInMechanism\label{mech:bestin}}
\nonl $\hspace{-2.3ex}\rhd$ {\bf{Input:}} A profile $\vec{c}$. \;
\label{mechanism:bestin}
Set $\alpha= \big(3+k-\sqrt{9+2k+k^{2}\,}\big) \big/ 2k$ and $\beta= (1-2\alpha)/(\alpha k+1)$. \\

Set $i^* \in \argmax_{i \in N} v_i(1)/\OPTFk(\vec{c}_{-i})$. \tcp*{we break ties lexicographicaly with respect to the agents' indices}

\If{$v_{i^*}(1) \geq \beta \cdot \OPTFk(\vec{c}_{-i^*})$ \label{line:best-in-beta}}{Set $x_{i^*}=1$ and $x_i=0$ for all $i \in N \setminus \{i^*\}$.}
\Else{
Compute an optimal fractional allocation $\vec{x}^*\!(\vec{c})$ using Algorithm \ref{algo:fk-knapsack}. \tcp*{i.e., $\vec{x}^*\!(\vec{c})$ is almost integral}

Initialize $\vec{x}= (\floor{x^*_1(\vec{c})}, \dots, \floor{x^*_n(\vec{c})} )$. \label{line:init_rounding} \tcp*{by the construction of $\vec{x}^*\!(\vec{c})$ and the best-in assumption, $\vec{x}\neq \vec{0}$}

\For(\tcp*[f]{recall that $W(\vec{x})=\{i \in N \mid x_{i} > 0\}$}){$i \in W(\vec{x})$ }{
\For{$j \in \{1, \dots, x_i\}$}{
Add the marginal value-per-cost ratio $m_i(j)/c_i$ to a list $\mathcal{L}$.
}
}

Sort $\mathcal{L}$ in non-increasing order and let $\ell$ be the index of the last agent of $W(\vec{x})$ in $\mathcal{L}$.

\While{$\V(\vec{x}) -m_{\ell}(x_{\ell}) \geq \alpha \OPTFk(\vec{c})$ \label{line:BIM_while}} 
{

Set $x_{\ell}= x_{\ell}-1$. \;

Remove the last element from $\mathcal{L}$ and update $\ell$. \;
}}
Allocate $\vec{x}$ and determine the payments $\vec{p}$ according to \eqref{eq:payment-id-BF}.
\end{mechanism}

\begin{theorem}
 \label{thm:bestin}
 Mechanism \BestInMechanism is a truthful, individually rational, budget-feasible $(k+2+o(1))$-approximation mechanism for instances of the $k$-level model and the best-in setting, and runs in time polynomial in $n$ and $k$.
\end{theorem}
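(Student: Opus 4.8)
The plan is to treat \BestInMechanism as a re-parameterization of \MultiMechanism and to re-run the four-part analysis behind Theorem~\ref{thm:multi-mechanism}, tracking the three structural changes: the selection rule and the threshold now use the first marginal $v_i(1)$ together with the constant $\beta$ (rather than $v_i(k)$ and $\tfrac{\alpha}{1-\alpha}$), and the ``if'' branch allocates only a single level of $i^*$. \emph{Truthfulness and individual rationality} will follow as in Lemma~\ref{lemma:monotone} once monotonicity is established: in the ``if'' branch, lowering $c_{i^*}$ leaves $\OPTFk(\vec{c}_{-i^*})$ (hence the threshold) untouched and can only increase each $\OPTFk(\vec{c}_{-j})$ for $j\neq i^*$, decreasing the competing ratios $v_j(1)/\OPTFk(\vec{c}_{-j})$, so $i^*$ remains the sole winner at one level; in the ``else'' branch the argument of Case~2 of Lemma~\ref{lemma:monotone} carries over verbatim with $v(1)$ in place of $v(k)$. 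Payments are then set via Theorem~\ref{thm:archerPayments}, and the polynomial running time is inherited from Fact~\ref{fact:opt-knapsack-fact} exactly as in Theorem~\ref{thm:multi-mechanism}.

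For the \emph{approximation guarantee} I would argue against $\OPTFk(\vec{c})\ge\OPTIk(\vec{c})$ by the usual case split. In the ``if'' branch the secured value is $v_{i^*}(1)$; combining $v_{i^*}(1)\ge\beta\,\OPTFk(\vec{c}_{-i^*})\ge\beta\big(\OPTFk(\vec{c})-v_{i^*}(k)\big)$ with the concavity bound $v_{i^*}(k)\le k\,v_{i^*}(1)$ and rearranging yields $v_{i^*}(1)\ge\frac{\beta}{1+k\beta}\OPTFk(\vec{c})$, a ratio of $\tfrac1\beta+k$. In the ``else'' branch, if the \textbf{while} loop executes at least once then $\V(\vec{x})\ge\alpha\,\OPTFk(\vec{c})$ immediately; otherwise there is a single fractional agent $f$ in $\vec{x}^*$, and assuming $\V(\vec{x})<\alpha\,\OPTFk(\vec{c})$ forces $m_f(\lceil x_f^*\rceil)>(1-\alpha)\OPTFk(\vec{c})$, contradicted by $m_f(\lceil x_f^*\rceil)\le v_f(1)<\beta\,\OPTFk(\vec{c}_{-f})\le\beta\,\OPTFk(\vec{c})$ whenever $\beta\le 1-\alpha$ (an inequality I would verify holds for the chosen parameters). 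The value of $\alpha$ is pinned down by equating the two case ratios, $\tfrac1\alpha=\tfrac1\beta+k$: substituting $\beta=\tfrac{1-2\alpha}{\alpha k+1}$ turns this into $\alpha^2k-(k+3)\alpha+1=0$, whose smaller root is exactly the stated $\alpha$, and rationalizing gives $\tfrac1\alpha=\tfrac{k+3+\sqrt{(k+1)^2+8}}{2}=k+2+O(1/k)$, matching the claimed $k+2+o(1)$.

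\emph{Budget-feasibility} is the crux. In the ``if'' branch, $i^*$ is allocated a single level, the allocation is monotone in $z$ and vanishes once $z>B$, so $p_{i^*}=c_{i^*}+\int_{c_{i^*}}^{B}x_{i^*}(z,\vec{c}_{-i^*})\,dz\le B$. For the ``else'' branch I would reuse Lemmas~\ref{lemma:magic-lemma} and~\ref{lem:critical-upper-opt} unchanged, since their proofs depend only on the \textbf{while}-loop stopping rule (parameterized by $\alpha$) and not on the threshold, giving $p_{ij}(\vec{c}_{-i})\le\frac{B}{1-\alpha}\cdot\frac{m_i(j)}{\OPTFk(\vec{c}_{-i})}$. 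The one statement that must be re-derived is the lower bound on $\OPTFk(\vec{c}_{-i})$: repeating the proof of Lemma~\ref{lemma:opt-lower} but bounding $\tfrac{v_i(k)}{\OPTFk(\vec{c}_{-i})}\le k\,\tfrac{v_i(1)}{\OPTFk(\vec{c}_{-i})}<k\beta$ via concavity yields $\OPTFk(\vec{c}_{-i})>\frac{1}{\alpha(1+k\beta)}\big(\V(\vec{x})-m_\ell(x_\ell)\big)$. Feeding this into the telescoping computation of Lemma~\ref{lemma:key}, together with $\tfrac{m_\ell(x_\ell)}{\OPTFk(\vec{c}_{-\ell})}\le\tfrac{v_\ell(1)}{\OPTFk(\vec{c}_{-\ell})}<\beta$, gives
\[
 \sum_{i=1}^n p_i(\vec{c})<\frac{B}{1-\alpha}\big(\alpha(1+k\beta)+\beta\big),
\]
so budget-feasibility reduces to $2\alpha+\beta(\alpha k+1)\le 1$, i.e.\ precisely $\beta\le\frac{1-2\alpha}{\alpha k+1}$, which holds with equality for the chosen $\beta$.

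The main obstacle is exactly this last reduction: the analysis must keep straight two \emph{different} consequences of concavity — the full value $v_i(k)\le k\,v_i(1)$, which costs a factor $k\beta$ and drives the budget constraint (hence fixes the form of $\beta$), versus the single top marginal $m_f(\lceil x_f^*\rceil)\le v_f(1)$, which costs only a factor $\beta$ in the approximation's fractional subcase. Getting these two bounds right, and then checking that the single free parameter $\alpha$ can simultaneously balance the two approximation cases and satisfy $\beta\le 1-\alpha$, is what forces the precise closed forms of $\alpha$ and $\beta$ and delivers the $k+2+o(1)$ guarantee.
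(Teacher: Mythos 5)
Your proposal is correct and follows essentially the same route as the paper's proof: the same two-case approximation analysis (with the concavity bounds $v_{i^*}(k)\le k\,v_{i^*}(1)$ and $m_f(\lceil x_f^*\rceil)\le v_f(1)$ playing exactly the roles of the paper's Lemma \ref{lemma:best-in-guarantee}), the reuse of Lemmata \ref{lemma:magic-lemma} and \ref{lem:critical-upper-opt} for the per-level critical payments, the replacement of Lemma \ref{lemma:opt-lower} by the bound $\OPTFk(\vec{c})\le(1+\beta k)\OPTFk(\vec{c}_{-i})$ (the paper's Lemma \ref{lemma:best-in-opt-minus-i}), and the identity $2\alpha+\beta(\alpha k+1)=1$ closing budget-feasibility. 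The only (welcome) difference is presentational: you derive $\alpha$ as the smaller root of $\alpha^2k-(k+3)\alpha+1=0$ by equating the two case ratios, whereas the paper simply states the parameters and verifies the required inequalities.
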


Intuitively, the fact that the mechanism may allocate a single level of service of an agent $i^*$ while excluding everyone else---even if more levels of $i^*$ could be afforded at a first glance---makes the \textbf{if} part of \BestInMechanism significantly weaker than that of \MultiMechanism. However, this is needed to ensure budget-feasibility.

To prove Theorem \ref{thm:bestin}, similar arguments to those of Section \ref{sec:BF-Greedy} are used, albeit slightly adapted for the best-in environment.
We first present two auxiliary lemmata, one lemma for the proof of the approximation guarantee and one lemma which will prove to be useful for showing budget-feasibility for the best-in setting.
\begin{lemma}
 \label{lemma:best-in-guarantee}
 For the allocation $\vec{x}(\vec{c})$ computed by \BestInMechanism with input the cost profile $\vec{c}$, it holds that $\V(\vec{x}(\vec{c}) \geq \alpha \OPTIk(\vec{c})$, with $\alpha=\big(3+k-\sqrt{9+2k+k^{2}\,}\big) \big/ 2k$.
\end{lemma}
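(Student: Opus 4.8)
The plan is to establish the stronger inequality $\V(\vec{x}(\vec{c})) \ge \alpha\,\OPTFk(\vec{c})$ against the fractional relaxation, which suffices since $\OPTFk(\vec{c}) \ge \OPTIk(\vec{c})$. Mirroring the proof of Lemma~\ref{lemma:guarantee}, I would split into the two cases determined by the \textbf{if} test on line~\ref{line:best-in-beta}, but the analysis must be adapted because here $i^*$ is selected through the single-level quantity $v_i(1)$ and, in the \textbf{if} branch, only one level of $i^*$ is bought. The algebraic backbone of the argument is that the chosen $\alpha=\big(3+k-\sqrt{9+2k+k^2}\big)/2k$ is exactly the smaller root of $k\alpha^2-(3+k)\alpha+1=0$, and that $\beta=(1-2\alpha)/(\alpha k+1)$ is the companion threshold this root forces.

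In Case~1, where $v_{i^*}(1)\ge\beta\,\OPTFk(\vec{c}_{-i^*})$, the mechanism sets $x_{i^*}=1$, so $\V(\vec{x})=v_{i^*}(1)$. First I would bound $\OPTFk(\vec{c})\le\OPTFk(\vec{c}_{-i^*})+v_{i^*}(k)$ and then use concavity to replace $v_{i^*}(k)$ by $k\,v_{i^*}(1)$, since $v_{i^*}(k)\le k\,m_{i^*}(1)=k\,v_{i^*}(1)$. Combining this with the case hypothesis $\OPTFk(\vec{c}_{-i^*})\le v_{i^*}(1)/\beta$ yields $\OPTFk(\vec{c})\le v_{i^*}(1)\,(1/\beta+k)$, i.e.\ $v_{i^*}(1)\ge\frac{\beta}{1+k\beta}\,\OPTFk(\vec{c})$. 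It then remains to verify the identity $\frac{\beta}{1+k\beta}=\alpha$: substituting the definition of $\beta$ gives $\frac{\beta}{1+k\beta}=\frac{1-2\alpha}{1+k-\alpha k}$, and this equals $\alpha$ precisely when $k\alpha^2-(3+k)\alpha+1=0$, which holds by the choice of $\alpha$. This case is where the factor-$k$ loss from concavity enters, and I expect it to be the step that pins down the claimed ratio, since rationalizing gives $1/\alpha=\big(3+k+\sqrt{9+2k+k^2}\big)/2=k+2+o(1)$.

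In Case~2, where $v_{i^*}(1)<\beta\,\OPTFk(\vec{c}_{-i^*})$, the \textbf{else} branch runs. If the \textbf{while} loop on line~\ref{line:BIM_while} executes at least once, then after each successful removal the value equals the pre-removal value minus the discarded marginal, which by the loop guard is at least $\alpha\,\OPTFk(\vec{c})$; hence $\V(\vec{x})\ge\alpha\,\OPTFk(\vec{c})$ immediately. Otherwise the loop never runs, and the only nontrivial sub-case is when the initial solution $\vec{x}=(\lfloor x_1^*\rfloor,\dots,\lfloor x_n^*\rfloor)$ drops a fractionally-included level of some agent $f$ (by Fact~\ref{fact:opt-knapsack-fact} at most one such $f$ exists; if none, $\V(\vec{x})=\OPTFk(\vec{c})$ trivially). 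Assuming toward a contradiction that $\V(\vec{x})<\alpha\,\OPTFk(\vec{c})$, the same computation as in Lemma~\ref{lemma:guarantee} gives $m_f(\lceil x_f^*\rceil)>(1-\alpha)\,\OPTFk(\vec{c})$. For the matching upper bound I would use concavity $m_f(\lceil x_f^*\rceil)\le m_f(1)=v_f(1)$ together with $\OPTFk(\vec{c}_{-f})\le\OPTFk(\vec{c})$ and the $\argmax$ choice of $i^*$ to get $\frac{m_f(\lceil x_f^*\rceil)}{\OPTFk(\vec{c})}\le\frac{v_f(1)}{\OPTFk(\vec{c}_{-f})}\le\frac{v_{i^*}(1)}{\OPTFk(\vec{c}_{-i^*})}<\beta$. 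The contradiction then follows from $\beta<1-\alpha$, which I would check directly: the inequality $\frac{1-2\alpha}{\alpha k+1}<1-\alpha$ rearranges to $-\alpha<\alpha k(1-\alpha)$, true for every $\alpha\in(0,1)$ and $k\ge1$. The main obstacle throughout is not any single inequality but the bookkeeping of the two thresholds: one must confirm that the same root of $k\alpha^2-(3+k)\alpha+1=0$ simultaneously makes Case~1 tight (via $\beta/(1+k\beta)=\alpha$) and leaves enough slack in Case~2 (via $\beta<1-\alpha$).
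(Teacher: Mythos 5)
Your proposal is correct and follows essentially the same route as the paper's proof: bound against $\OPTFk(\vec{c})$, handle the \textbf{if} branch via $\OPTFk(\vec{c})\le\OPTFk(\vec{c}_{-i^*})+v_{i^*}(k)\le\OPTFk(\vec{c}_{-i^*})+k\,v_{i^*}(1)$, and handle the \textbf{else} branch by the contradiction argument on the fractional agent $f$ leading to $\beta>1-\alpha$. The only difference is that you explicitly carry out the algebra (the identity $\beta/(1+k\beta)=\alpha$ via the quadratic $k\alpha^2-(3+k)\alpha+1=0$, and the check $\beta<1-\alpha$) that the paper dismisses as "simple calculations"; both verifications are correct.
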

\begin{proof}
 
 Similarly to Lemma \ref{lemma:guarantee}, for a cost profile $\vec{c}$ we will prove the claimed guarantee against the optimal solution to the fractional relaxation of the bounded knapsack instance, i.e., we will show that $\V(\vec{x}(\vec{c})) \geq \alpha \OPTFk(\vec{c})$. We distinguish the following two cases:

\smallskip

\noindent\underline{Case 1:} $v_{i^*}(1) \geq \beta \cdot \OPTFk(\vec{c}_{-i^*})$. We directly have
\[
 v_{i^*}(1) \geq \beta \cdot \OPTFk(\vec{c}_{-i^*}) \geq \beta \left(\OPTFk(\vec{c})-v_{i^*}(k) \right) \geq \beta \left(\OPTFk(\vec{c})-kv_{i^*}(1) \right),
\]
where the last inequality follows by the concavity of $v_{i^*}(\cdot)$. Rearranging terms leads to \[v_{i^*}(1) \geq \frac{\beta}{1+\beta k} \OPTFk(\vec{c})=\alpha\OPTFk(\vec{c}),\]
which concludes the case since $v_{i^*}(1) = \V(\vec{x}(\vec{c}))$.

\smallskip 

\noindent\underline{Case 2:} 
$v_{i^*}(1) < \beta \cdot \OPTFk(\vec{c}_{-i^*})$. In this case, we immediately obtain $\V(\vec{x}(\vec{c})) \geq \alpha \OPTFk(\vec{c})$ whenever the \textbf{while} loop runs at least once. We argue that $\V(\vec{x}(\vec{c})) \geq \alpha \OPTFk(\vec{c})$ still holds for our choice of $\alpha$ and $\beta$ when the \textbf{while} condition on line \ref{line:BIM_while} always evaluates to \emph{False}. Toward a contradiction, suppose that this is not the case. For the best-in setting, it is true that, similarly to the proof of Lemma \ref{lemma:guarantee}, for the fractionally included agent $f$ it holds that
\begin{equation}\label{ineq:m_f_upper_best_in}
\frac{m_f(\ceil{x^*_f})}{\OPTFk(\vec{c})} \leq \frac{m_f(1)}{\OPTFk(\vec{c})} \leq\frac{v_f(1)}{\OPTFk(\vec{c}_{-f})} \leq \frac{v_{i^*}(1)}{\OPTFk(\vec{c}_{-i^*})} < \beta\,,
\end{equation}
where the first inequality follows from the concavity of each valuation function, the third inequality holds by the choice of $i^*$, and the last one is the very definition of Case 2 for this setting.

At the same time, \eqref{ineq:m_f_lower} holds regardless of whether we are in the best-in or the all-in setting.

By combining inequalities \eqref{ineq:m_f_lower} and \eqref{ineq:m_f_upper_best_in} we obtain that $\alpha$ must be such that $\beta > 1-\alpha$. However, it is a matter of simple calculations to verify that this is not possible for any $k > 0$ for our choice of $\alpha$ and $\beta$ leading to a contradiction.
\end{proof}

\begin{lemma}
 \label{lemma:best-in-opt-minus-i}
 Let $\vec{c}$ be a cost profile so that $v_{i^*}(1) < \beta \OPTFk(\vec{c}_{-i^*})$ in line \ref{line:best-in-beta} of \BestInMechanism. Then, for every $i \in W(\vec{x}(\vec{c}))$ it holds that
\begin{equation*}
 \OPTFk(\vec{c}) \leq \big(1+\beta k\big)\OPTFk(\vec{c}_{-i}).
 \end{equation*}
 \begin{proof}
 Indeed, by concavity of $v_i(\cdot)$, the definition of $i^*$ in \BestInMechanism and the fact that $v_{i^*}(1) < \beta \OPTFk(\vec{c}_{-i^*})$, we obtain
 \begin{align*}
 \OPTFk(\vec{c}) &\leq \OPTFk(\vec{c}_{-i}) + v_i(k) \leq \OPTFk(\vec{c}_{-i}) + kv_i(1) \\
 &\leq \OPTFk(\vec{c}_{-i}) \Bigg( 1+\frac{kv_{i^*}(1)}{\OPTFk(\vec{c}_{-{i^*}})}\Bigg) \leq \big(1+\beta k\big)\OPTFk(\vec{c}_{-i}).\qedhere
 \end{align*}
 \end{proof}
 
\end{lemma}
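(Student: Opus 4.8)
The plan is to prove this short auxiliary bound as a direct chain of three inequalities, combining a subadditivity-type property of the fractional optimum, the concavity of $v_i$, and the maximality of the ratio attained by $i^*$. The statement is purely about the optimization benchmark and does not depend on the behavior of the mechanism beyond the definition of $i^*$ and the Case~2 hypothesis, so no analysis of the allocation rule is needed.

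First I would invoke the standard observation that removing a single agent from the knapsack instance cannot decrease the optimal fractional value by more than that agent's maximum possible contribution, i.e. $\OPTFk(\vec{c}) \leq \OPTFk(\vec{c}_{-i}) + v_i(k)$ for every $i \in N$. This holds because any optimal solution for the full profile $\vec{c}$, restricted to the agents other than $i$, remains feasible for the subinstance $\vec{c}_{-i}$, and the value discarded is at most $v_i(k)$, the value of all $k$ levels of agent $i$. Next I would use concavity to replace $v_i(k)$ by $k\,v_i(1)$: since the marginals are non-increasing, $v_i(k) = \sum_{j=1}^k m_i(j) \leq k\, m_i(1) = k\, v_i(1)$. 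Substituting and factoring out $\OPTFk(\vec{c}_{-i})$ yields $\OPTFk(\vec{c}) \leq \OPTFk(\vec{c}_{-i})\bigl(1 + k\, v_i(1)/\OPTFk(\vec{c}_{-i})\bigr)$.

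Finally I would bound the ratio $v_i(1)/\OPTFk(\vec{c}_{-i})$ from above. By the choice of $i^*$ in \BestInMechanism as the maximizer of $v_j(1)/\OPTFk(\vec{c}_{-j})$ over all agents $j$, we have $v_i(1)/\OPTFk(\vec{c}_{-i}) \leq v_{i^*}(1)/\OPTFk(\vec{c}_{-i^*})$, and the Case~2 hypothesis $v_{i^*}(1) < \beta\,\OPTFk(\vec{c}_{-i^*})$ bounds the right-hand side by $\beta$. Chaining the three steps gives $\OPTFk(\vec{c}) \leq (1 + \beta k)\,\OPTFk(\vec{c}_{-i})$, as required. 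Since the argument is essentially a two-line computation, there is no real obstacle; the only point requiring attention is that each step be applied in the correct direction, in particular that the defining property of $i^*$ is used as an \emph{upper} bound on every competing agent's ratio, and that concavity is invoked to trade $v_i(k)$ for the larger quantity $k\,v_i(1)$ rather than the reverse.
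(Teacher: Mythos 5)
Your proposal is correct and matches the paper's own proof essentially line for line: the same chain $\OPTFk(\vec{c}) \leq \OPTFk(\vec{c}_{-i}) + v_i(k) \leq \OPTFk(\vec{c}_{-i}) + k\,v_i(1)$, followed by bounding $v_i(1)/\OPTFk(\vec{c}_{-i})$ via the maximality of $i^*$ and the Case~2 hypothesis $v_{i^*}(1) < \beta\,\OPTFk(\vec{c}_{-i^*})$. Nothing is missing, and your remark that the argument holds for every $i \in N$ (not just $i \in W(\vec{x}(\vec{c}))$) is accurate, since only the definition of $i^*$ and the hypothesis are used.
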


\noindent We are now ready to prove Theorem \ref{thm:bestin}.

\begin{proof}[Proof of Theorem \ref{thm:bestin}]
 Observe that truthfulness and individual rationality of \BestInMechanism follow by the monotonicity of the allocation rule, as proven by Lemma \ref{lemma:monotone}, and the fact that we determine the payments according to \eqref{eq:payment-id-BF}. To prove that \BestInMechanism achieves the claimed guarantee, it is sufficient to observe that, the guarantee of $1/\alpha$ we proved in Lemma \ref{lemma:best-in-guarantee} is upper-bounded by $k+2+o(k)$. Indeed, $\lim_{k \to \infty}1/\alpha=k+2$. 
 
 What remains to be proved is the budget-feasibility of \BestInMechanism. To do that, we distinguish the following cases:\smallskip

\noindent\underline{Case 1:} $v_{i^*}(1) \geq \beta \OPTFk(\vec{c}_{-i^*})$. In this case agent $i^*$ is the only agent hired and for $1$ level of service. By Theorem \ref{thm:archerPayments}, we obtain
 $p_{i^*}(\vec{c}) = p_{i^*1}(\vec{c}_{-i^*}) \leq B$ and budget-feasibility follows by directly invoking the best-in assumption, i.e., $c_{i^*} \leq B$, for all $c_{i^*}$. \smallskip
 
\noindent\underline{Case 2:} $v_{i^*}(1) < \beta\OPTFk(\vec{c}_{-i^*})$. As in the proof of Lemma \ref{lemma:multi-bf}, we have that
\begin{align*}
 \sum_{i=1}^np_i(\vec{c})&= \sum_{i \in W(\vec{x}(\vec{c}))}p_i(\vec{c})= \sum_{i \in W(\vec{x}(\vec{c}))}\sum_{j=1}^{x_i(\vec{c})}p_{ij}(\vec{c}_{-i})\\
 &\leq \frac{B}{1-\alpha} \sum_{i \in W(\vec{x}(\vec{c}))} \sum_{j=1}^{x_i(\vec{c})} \frac{m_i(j)}{\OPTFk(\vec{c}_{-i})}
 =\frac{B}{1-\alpha} \sum_{i \in W(\vec{x}(\vec{c}))} \frac{v_i(x_i(\vec{c}))}{\OPTFk(\vec{c}_{-i})} \\
 &=\frac{B}{1-\alpha}\left( \sum_{i \in W(\vec{x}(\vec{c}))\setminus\{\ell\}} \frac{v_i(x_i(\vec{c}))}{\OPTFk(\vec{c}_{-i})} \ \ + \ \ 
 \frac{v_{\ell}(x_{\ell}(\vec{c})) - m_{\ell}(x_{\ell}(\vec{c}))}{\OPTFk(\vec{c}_{-\ell})} \ + \ \frac{m_{\ell}(x_{\ell}(\vec{c}))}{\OPTFk(\vec{c}_{-\ell})} \right)\\
 &\leq \frac{B}{1-\alpha}\left( (1+\beta k)\frac{\sum_{i \in W(\vec{x}(\vec{c}))} v_i(x_i(\vec{c})) - m_{\ell}(x_{\ell}(\vec{c}))}{\OPTFk(\vec{c})} + \frac{m_{\ell}(x_{\ell}(\vec{c}))}{\OPTFk(\vec{c}_{-\ell})} \right)\\
 &\leq \frac{B}{1-\alpha}\left( \alpha(1+\beta k)\frac{\sum_{i \in W(\vec{x}(\vec{c}))} v_i(x_i(\vec{c})) - m_{\ell}(x_{\ell}(\vec{c}))}{\V(\vec{x}(\vec{c})-m_{\ell}(x_{\ell}(\vec{c}))} + \frac{m_{\ell}(x_{\ell}(\vec{c}))}{\OPTFk(\vec{c}_{-\ell})} \right)\\
 &=\frac{B}{1-\alpha}\left(\alpha(1+\beta k) + \frac{m_{\ell}(x_{\ell}(\vec{c}))}{\OPTFk(\vec{c}_{-\ell})} \right)
 \leq \frac{B}{1-\alpha} \left(\alpha(1+\beta k) + \beta \right)\,.
\end{align*}

The first equality reflects the fact that $p_{ij}(\vec{c}_{-i}) = 0$ for $i \notin W(\vec{x}(\vec{c})$, whereas the second equality is due to an adaptation of Lemma \ref{lemma:payment-breakdown} for the best-in setting.\footnote{\ The definition of the critical payments for the best-in setting is completely analogous to the one for \MultiMechanism in Section \ref{sec:BF-Greedy} for the all-in setting.} The first inequality follows by applying Lemma \ref{lem:critical-upper-opt} for every agent $i \in W(\vec{x}(\vec{c}))$ and every $j\in \{1\dots, x_{i}(\vec{c})\}$. Then, the second inequality is due to Lemma \ref{lemma:best-in-opt-minus-i}, whereas the third inequality is a consequence of the \textbf{while} condition of \BestInMechanism on line \ref{line:BIM_while}. Finally, the last inequality is due to the fact that 
\[\frac{m_{\ell}(x_{\ell}(\vec{c}))}{\OPTFk(\vec{c}_{-\ell})} \leq \frac{v_{\ell}(1)}{\OPTFk(\vec{c}_{-\ell})} \leq \frac{v_{i^*}(1)}{\OPTFk(\vec{c}_{-i^*})} < \beta\,.\]
Budget-feasibility follows by observing that, since $\beta=\frac{1-2\alpha}{\alpha k +1}$, it holds that 

\begin{equation*}
 \frac{\alpha(1+\beta k) + \beta}{1-\alpha} = \frac{\alpha}{1-\alpha} + \beta \cdot \frac{\alpha k + 1}{1-\alpha} = \frac{\alpha}{1-\alpha} + \frac{1-2 \alpha}{1-\alpha}=1. 
\end{equation*}

This concludes the proof. 
\end{proof}



\section{Efficient Computation of Payments} \label{appendix:new-section}

In this section, we show how the payments returned by \MultiMechanism\ can be computed in polynomial time. 
For ease of reference, we repeat the necessary notation here. Given an instance $I=(N, \vec{c}, B, k, (v_i)_{i \in N})$ of the $k$-level model, $\vec{x}(\vec{c}) \in \{0, \dots, k\}^n$ is the allocation returned by \MultiMechanism. According to line 14 of the mechanism, the payments $\vec{p}(\vec{c}) = (p_1(\vec{c}), \dots, \allowbreak p_n(\vec{c}))$ are computed as defined in \eqref{eq:payment-id-BF} of Theorem \ref{thm:archerPayments}.

Recall from Section \ref{sec:BF-Greedy} that for each $i \in W(\vec{x}(\vec{c}))$ and $j = 1, \dots, x_i(\vec{c})$, $Q_{ij}(\vec{c}_{-i})$ is the set of all points $q \in \left[c_i, B/k\right]$ satisfying $\lim_{z \rightarrow q^-} x_i(z, \vec{c}_{-i}) \ge j$ and $\lim_{z \rightarrow q^+} x_i(z, \vec{c}_{-i}) \le j$. Furthermore, recall that $p_{ij}(\vec{c}_{-i})=\sup(Q_{ij}(\vec{c}_{-i}))$, the critical payment of the $j$th level of service of agent $i$.
By Lemma \ref{lemma:payment-breakdown}, we can express the total payment of each agent $i \in W(\vec{x}(\vec{c}))$ as the sum of these per-level critical payments, i.e.,
\[p_i(\vec{c}) = \sum_{r=1}^{x_i(\vec{c})} p_{ir}(\vec{c}_{-i})\,.\]

Now, for every  $i, j \in N$, let 
\begin{equation*}
\rho_i(\vec{c}, j) = v_{j}(k) \cdot \frac{\OPTFk(\vec{c}_{-i})}{\OPTFk(\vec{c}_{-j})}.   
\end{equation*}
For a profile $\vec{c}$, we use $i^*(\vec{c})$ to denote $i^*$ as computed in line \ref{line:i*} of \MultiMechanism. We distinguish the following two cases for any given agent $i \in W(\vec{x})$:

\smallskip
\noindent
\underline{Case 1:} $i = i^*(\vec{c})$ and $v_{i}(k) \geq \frac{\alpha}{1-\alpha} \OPTFk(\vec{c}_{-i})$. By line \ref{line:ifToCheckToPicki*}, the \textbf{if} part of \MultiMechanism is executed and we have that $x_i(\vec{c})=k$. Moreover, it is not hard to see that $p_{i1}(\vec{c}_{-i}) = \dots = p_{ik}(\vec{c}_{-i})$, as if agent $i$ were to unilaterally declare a cost $c_i' \in \left[c_i, B / k \right]$, they would still be hired for $k$ levels of service, as long as $i = i^*(c'_i, \vec{c}_{-i})$. 
Otherwise, whenever there exists a $c'_i \in [c_i, B / k]$ for which $i \neq i^*(c'_i, \vec{c}_{-i})$, we have that the \textbf{if} part of \MultiMechanism is still executed as $v_i(k) \leq \rho_i\left((c'_i, \vec{c}_{-i}),i^*(c'_i, \vec{c}_{-i})\right)$, and $x_i(c'_i, \vec{c}_{-i}) = 0$.\footnote{Note that this is the only possible alternative outcome. This is implied by Lemma \ref{lemma:monotone}.}
Therefore, for $r = 1, \dots, x_i(\vec{c})$, we can write
\begin{equation*}\label{eq:i-start-critical payments}
    p_{ir}(\vec{c}_{-i}) = \inf \left(\left\{\tfrac{B}{k}\right\}\cup \left\{c'_i \in \left[c_i, \tfrac{B}{k}\right] \mid  i \neq i^*(c'_i, \vec{c}_{-i}) \right\} \right).
\end{equation*}
If there is a $c'_i \in \left[c_i, B/k\right]$ such that $i \neq i^*(c'_i, \vec{c}_{-i}) = j $, we can compute it in polynomial time. To compute this $c'_i$, consider the non-increasing ordering of marginal value-per-cost ratios for the agents $N \backslash \{j\}$ with respect to $(\vec{c}_{-\{j,i\}}, c_i)$. We start by computing the \textit{costs of interest}. These are all the costs $b \in [c_i, B/k]$ for which there is a change in this ordering due to the unilateral deviation of agent $i$ to $b$. There are at most $(n-1)k^2$ such points in the interval $[c_i, B/k]$. 
For notational convenience, define 
\[
\theta_i(j,\vec{c}_{-i}) := 
v_i(j) \frac{\OPTFk(\vec{c}_{-i})}{v_i(k)}.
\]
Starting with the smallest cost of interest, denoted by $c^{1}_i$, in this interval, one can check if 
$\OPTFk(\vec{c}_{-\{j,i\}}, c^{1}_{i}) \le \theta_i(j,\vec{c}_{-i})$.
If this is true, one can find the point $c'_i$ where equality is reached in polynomial time. This is either at $c^{1}_i = c'_i$, or in $[c_i, c^{1}_i)$. For the latter, this point can be found by ordering the marginal values for $N \backslash \{j \}$ with respect to $(\vec{c}_{-\{i,j\}}, c_i)$, and finding the point where the first $s$ marginal values and a fraction of the $(s+1)$th marginal value add up to 
$\theta_i(j,\vec{c}_{-i})$.
One can then find $c'_i$ such that this allocation is exactly budget feasible. 
If 
$\OPTFk(\vec{c}_{-\{j,i\}}, c^{1}_{i}) > \theta_i(j,\vec{c}_{-i})$,
continue the same procedure with $c^{2}_i$ and $c^{1}_i$ instead of $c^{1}_i$ and $c_i$ respectively.


\smallskip

\noindent
\underline{Case 2:} $v_{i^*(\vec{c})}(k) < \frac{\alpha}{1-\alpha} \OPTFk(\vec{c}_{-i^*(\vec{c})})$. In this case, the \textbf{else} part of \MultiMechanism\ is executed. Recall the definition of $\ell(\vec{x}(\vec{c}))$ as the index in $W(\vec{x}(\vec{c}))$ of the agent hired for the least efficient level of service in $\vec{x}(\vec{c})$. For notational convenience, we define 
\[
\sigma(\vec{c}) := \frac{m_{\ell(\vec{x}(\vec{c}))}(x_i(\vec{c}))}{c_{\ell(\vec{x}(\vec{c}))}}.
\]
Fix a level of service $r \in [x_i(\vec{c})]$ for agent $i$. By the definition of $\ell(\vec{x}(\vec{c}))$ , we have $\frac{m_i(r)}{c_i} \geq \sigma(\vec{c})$.

Intuitively, there are two events under which agent $i$ will no longer be hired for their $r$th level of service when declaring a unilateral deviation $c'_i \in [c_i, B/k]$. The first event occurs when, under $(c'_i, \vec{c}_{-i})$, the $r$th level of service is no longer efficient enough to be selected by \MultiMechanism. Formally, $c'_i$ is such that $\frac{m_i(r)}{c'_i} < \sigma(c'_i, \vec{c}_{-i})$.
The second event occurs when, under $(c'_i, \vec{c}_{-i})$, the \textbf{if} condition in line \ref{line:ifToCheckToPicki*} becomes \textbf{true}, meaning:
\begin{equation*}
v_{i^*(c'_i, \vec{c}_{-i})}(k) \geq \frac{\alpha}{1-\alpha} \OPTFk\left(c'_i, \vec{c}_{-\{i, i^*(\vec{c})\}}\right)    
\end{equation*}
and agent $i$'s allocation under $(c'_i, \vec{c}_{-i})$ becomes 0.\footnote{Note that by the definition of this case, it cannot be that $i = i^*(c'_i, \vec{c}_{-i})$.}

Thus, the critical payment for agent $i$ for the $r$th level of service is the smallest value in $[c_i, B/k]$ at which at least one of these two events is triggered. If this is not the case, then the critical payment for this level is simply $B/k$. Formally:
\begin{equation*}
    p_{ir}(\vec{c}_{-i})= \inf \left(\left\{\frac{B}{k}\right\} \cup \left\{c'_i \in \left[c_i, \frac{B}{k}\right] \,\Big|\, \frac{m_i(r)}{c'_i} < \sigma(c'_i, \vec{c}_{-i})  \lor  v_{i^*(c'_i, \vec{c}_{-i})}(k) \geq \frac{\alpha}{1-\alpha} \OPTFk\left(c'_i, \vec{c}_{-\{i, i^*(\vec{c})\}}\right)\right\}     \right)
\end{equation*}

We now describe how the critical payment $p_{ir}(\vec{c}_{-i})$ is computed by focusing on each event separately.

To compute the smallest cost such that  ${m_i(r)}/{c'_i} < \sigma(c'_i, \vec{c}_{-i})$ if such a cost exists, it is sufficient to consider the points $c'_i \in [c_i ,B/k]$ for which the ratio ${m_i(r)}/{c'_i}$ equals the marginal value-per-cost ratio of less efficient levels of service according to the ordering of Algorithm \ref{algo:fk-knapsack}, which are at most $nk$. These will be the \textit{costs of interest} for this event. Consider these costs of interest in increasing order, and check if for any of these costs the \textbf{while} condition in line 11 is evaluates to \textbf{true}. Otherwise, suppose this happens for a cost $\hat{c}_i$ and let $\bar{c}_i$ be the cost considered before $\hat{c}_i$ (if this happens for the first cost of interest then $\bar{c}_i = c_i$). Then
\begin{equation*}
   p_{ir}(\vec{c}_{-i}) = \inf \left(  \left\{ \hat{c}_i \right\} \cup \left\{c'_i \in [\bar{c}_i, \hat{c}_i ) \,\Big|\,  (v(\vec{x}(\vec{c}_{-i},\bar{c}_{i})) - m_{i}(r)) \frac{1}{\alpha} = \OPTFk\left(\vec{c}_{-i}, c'_i \right)       \right\}     \right) ,
\end{equation*}
which can be computed in polynomial time. Note that in order to compute a $c'_i \in [\bar{c}_i, \hat{c}_i )$ in the set above (if it exists), one should again look into the costs for which the ordering of marginal value-per-cost ratios of Algorithm \ref{algo:fk-knapsack} changes. 

For the second event, we need to find the smallest $c'_i \in \left[c_i, B/K \right]$ such that the \textbf{if} condition in line 2 is \textbf{true} for $i^*(c'_i, \vec{c}_{-i})$ if such a cost exists. This can be done similarly to Case $1$. We can check for each agent $j \in N \setminus \{i\}$ if there exists such a $c'_i$ for which the \textbf{if} condition in line 2 is \textbf{true}, and then take the smallest $c'_i$. The difference is that instead of comparing with $\theta_i(j,\vec{c}_{-i})$, we compare with $v_j(k) (1 - \alpha)/\alpha$.

\end{document}